\numberwithin{equation}{section}
\numberwithin{figure}{section}
\tikzset{
  big arrow/.style={
    decoration={markings,mark=at position 1 with {\arrow[scale=2,#1]{>}}},
    postaction={decorate},
    shorten >=0.4pt},
  big arrow/.default=black}
\theoremstyle{plain}
\newtheorem*{thm*}{Theorem}
\newtheorem{thm}{Theorem}[section]
\newtheorem{lem}[thm]{Lemma}
\theoremstyle{definition}
\newtheorem{defn}[thm]{Definition}
\newtheorem*{defn*}{Definition}
\newtheorem{rem}[thm]{Remark}
\newtheorem*{thm:otherdirection}{Theorem \ref{thm:otherdirection}}
\newtheorem*{thm:maintheorem}{Theorem \ref{thm:maintheorem}}
\tikzset{->-/.style={decoration={
  markings,
  mark=at position #1 with {\arrow[scale=2.5]{>}}},postaction={decorate}}}
\newcommand{\calo}{\mathcal{O}}
\newcommand{\cala}{\mathcal{A}}
\newcommand{\calp}{\mathcal{P}}
\newcommand{\cals}{\mathcal{S}}
\newcommand{\calh}{\mathcal{H}}
\newcommand{\calb}{\mathcal{B}}
\newcommand{\calu}{\mathcal{U}}
\newcommand{\calv}{\mathcal{V}}
\begin{document}

\date{}
\institution{HarvardPhys}{\centerline{${}^{1}$Department of Physics, Harvard University, Cambridge, MA, USA}}
\title{Holographic Relative Entropy in Infinite-dimensional Hilbert Spaces}
\authors{Monica Jinwoo Kang\worksat{\HarvardPhys}\footnote{e-mail: {\tt monica@caltech.edu}} and David K. Kolchmeyer\worksat{\HarvardPhys}\footnote{e-mail: {\tt dkolchmeyer@g.harvard.edu}}}
\abstract{We reformulate entanglement wedge reconstruction in the language of operator-algebra quantum error correction with infinite-dimensional physical and code Hilbert spaces. Von Neumann algebras are used to characterize observables in a boundary subregion and its entanglement wedge. Assuming that the infinite-dimensional von Neumann algebras associated with an entanglement wedge and its complement may both be reconstructed in their corresponding boundary subregions, we prove that the relative entropies measured with respect to the bulk and boundary observables are equal. We also prove the converse: when the relative entropies measured in an entanglement wedge and its complement equal the relative entropies measured in their respective boundary subregions, entanglement wedge reconstruction is possible. Along the way, we show that the bulk and boundary modular operators act on the code subspace in the same way. For holographic theories with a well-defined entanglement wedge, this result provides a well-defined notion of holographic relative entropy.}

\maketitle
\tableofcontents
\newpage

\section{Introduction}

Entanglement entropy has many applications in quantum field theory, ranging from the study of renormalization group flows \cite{CasiniHuerta2015, Casini2006} to confinement \cite{Klebanov2007} to topological orders \cite{Wen2006,KitaevPreskill}. With the discovery of the Ryu--Takayangi formula \cite{RT2006}, entanglement entropy has been especially useful in studying holographic quantum field theories. 
For holographic theories, it is important to understand the emergent low-energy bulk physics in $d$-dimensions from the conformal field theory in  $(d-1)$-dimensions. Since local bulk operators can be expressed as boundary operators smeared over either the entire spatial slice or compact spatial subregions \cite{Hamilton:2006az,Hamilton2005}, a single bulk operator can be reconstructed in different subregions \cite{Almheiri:2014lwa}. Quantum error correction provides a convenient setup where bulk operators are defined only on a code subspace of the physical Hilbert space of the conformal field theory.
In order to resolve apparent inconsistencies with space-like commutativity of local operators in quantum field theory, bulk reconstruction was studied in the context of quantum error correcting codes \cite{Almheiri:2014lwa}. Using the Ryu--Takayangi formula, \cite{Jafferis:2015del} showed that the relative entropy of nearby states computed in a boundary subregion is equivalent to the relative entropy computed in the dual entanglement wedge \cite{albion}, up to corrections on the order of Newton's constant $G_N$. These results were used in \cite{Harlow:2018fse,DongHarlowWall} to argue that CFT operators in a boundary subregion can be used to reconstruct bulk operators in the entanglement wedge.

Much of the literature on entanglement entropy contains assumptions that are only true for quantum mechanical systems with finite-dimensional Hilbert spaces. For instance, entanglement entropy has  often been defined by assuming that the Hilbert space $\calh$ can be written as $\calh = \calh_A \otimes \calh_{A^c}$, where $A$ refers to a subregion of space and $A^c$ refers to the complement of $A$. The entanglement entropy is the von Neumann entropy of the reduced density matrix one obtains after performing a partial trace on the Hilbert space $\calh_{A^c}$. However, the infinite-dimensional Hilbert space $\calh$ does not factorize in this way because the entanglement entropy contains a universal area-law divergence \cite{Witten:2018zxz}.

Von Neumann algebras are a mathematical structure that arise naturally in quantum field theory. Instead of assuming that the Hilbert space factorizes, we should characterize a causally complete region of spacetime\footnote{The causal complement of a region $R$, denoted by $R^\prime$, is defined to be all of the points in spacetime which are spacelike separated from every point in $R$. A region $R$ is causally complete if $R^{\prime \prime} = R$. Note that any von Neumann algebra $M$ satisfies $M = M^{\prime \prime}$, where the $\prime$ denotes the commutant.} by an associated von Neumann algebra \cite{Haag}. Formulating quantum field theory with von Neumann algebras is powerful because it allows one to make use of the mathematical machinery of Tomita-Takesaki theory to study entanglement. The modular operator is an important object in Tomita-Takesaki theory, and Araki \cite{Araki} has used it to define relative entropy in quantum field theory.  Theorem \ref{thm:modflow}, a central result of Tomita-Takesaki theory, formalizes the notion of modular flow. A demonstration of how von Neumann algebras are associated with the left and right Rindler wedges of Minkowski space was provided by Bisognano and Wichmann in \cite{BisoWich}. More recently, an explicit computation of mutual information for free fermions in 1+1 dimensions was performed in \cite{LongoXu}.

Given the role that entanglement entropy plays in our understanding of holography and the role that von Neumann algebras play in our understanding of entanglement entropy, it is natural to ask whether statements in the bulk reconstruction literature can be recast in a way that dispenses with the fiction that the boundary Hilbert space can be written as $\calh = \calh_A \otimes \calh_{A^c}$ for an arbitrary subregion $A$. In the context of quantum error correction with finite dimensional Hilbert spaces, \cite{Harlow:2016fse} formulates and completes the equivalence of the Ryu--Takayangi formula, entanglement wedge reconstruction, and the equality of bulk and boundary relative entropies. With the exception of the Ryu--Takayangi formula, there are natural ways to generalize these statements to the case where the Hilbert space is infinite-dimensional. The Ryu--Takayangi formula, on the other hand, computes the entanglement entropy of an arbitrary subregion in the boundary field theory, which is infinite.

In this paper, we prove that in the context of quantum error correction with infinite-dimensional Hilbert spaces, the equivalence of bulk and boundary relative entropies is a necessary and sufficient condition for entanglement wedge reconstruction. This is presented more precisely in Theorem \ref{thm:maintheorem}. We define cyclic and separating states in Definitions \ref{def:cyc} and \ref{def:sep}, and relative entropy in Definition \ref{def:relent}.

\begin{thm}
\label{thm:maintheorem}
Let $u : \calh_{code}\rightarrow \calh_{phys}$ be an isometry\footnote{This means that $u$ is a norm-preserving map. The map $u$ need not be a bijection. In general, $u^\dagger u$ is the identity on $\calh_{code}$ and $uu^\dagger$ is a projection on $\calh_{phys}$. } between two Hilbert spaces. Let $M_{code}$ and $M_{phys}$ be von Neumann algebras on $\calh_{code}$ and $\calh_{phys}$ respectively. Let $M^\prime_{code}$ and $M^\prime_{phys}$ respectively be the commutants of $M_{code}$ and $M_{phys}$. Suppose that the set of cyclic and separating vectors with respect to $M_{code}$ is dense in $\calh_{code}$. Also suppose that if $\ket{\Psi} \in \calh_{code}$ is cyclic and separating with respect to $M_{code}$, then $u \ket{\Psi}$ is cyclic and separating with respect to $M_{phys}$. Then the following two statements are equivalent:
\begin{description}
\item[ 1. Bulk reconstruction]
\begin{align} \nonumber
\begin{split}
\forall \calo \in M_{code}\ \forall \calo^\prime \in M_{code}^\prime, \quad 
\exists\tilde{\calo} \in M_{phys}\ \exists \tilde{\calo}^\prime \in M_{phys}^\prime\quad \text{such that}\quad\\
\forall \ket{\Theta} \in \calh_{code} \quad 
\begin{cases}
u \calo \ket{\Theta} =  \tilde{\calo} u \ket{\Theta}, \quad
&u \calo^\prime \ket{\Theta} =  \tilde{\calo}^\prime u \ket{\Theta}, \\
u \calo^\dagger \ket{\Theta} =  \tilde{\calo}^\dagger u \ket{\Theta}, \quad
&u \calo^{\prime \dagger} \ket{\Theta} = \tilde{\calo}^{\prime\dagger} u\ket{\Theta}.
\end{cases}\quad
\end{split}
\end{align}

\item[ 2. Boundary relative entropy equals bulk relative entropy]
\begin{align}\nonumber
\begin{split}
\text{For any $\ket{\Psi}$, $\ket{\Phi} \in \calh_{code}$ with $\ket{\Psi}$ cyclic }&\text{ and separating with respect to $M_{code}$,}\quad\quad\quad\\
\cals_{\Psi|\Phi}(M_{code})=\cals_{u\Psi|u\Phi}(M_{phys}),\ & \text{and} \  \cals_{\Psi|\Phi}(M_{code}^\prime)= \cals_{u\Psi|u\Phi}(M_{phys}^\prime),\\
\text{where $\cals_{\Psi|\Phi}(M)$ is the relative entropy.}\quad&
\end{split}
\end{align}
\end{description}
\end{thm}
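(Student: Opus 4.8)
The plan is to route both implications through a single intermediate statement $(\star)$: \emph{for every $\ket{\Psi}\in\calh_{code}$ that is cyclic and separating with respect to $M_{code}$ and every $\ket{\Phi}\in\calh_{code}$, the relative modular data of $(M_{code},\ket{\Psi},\ket{\Phi})$ and of $(M_{phys},u\ket{\Psi},u\ket{\Phi})$ are intertwined by $u$}, namely
\[
u\,\Delta_{\Phi|\Psi}(M_{code})^{it}=\Delta_{u\Phi|u\Psi}(M_{phys})^{it}\,u\ \ (t\in\mathbb{R}),\qquad
u\,J_{\Phi|\Psi}(M_{code})=J_{u\Phi|u\Psi}(M_{phys})\,u,
\]
together with the same two identities for the commutants $M'_{code},M'_{phys}$. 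This is the precise version of the claim that the bulk and boundary modular operators act the same way on the code subspace, and it is the right target because relative entropy is the spectral functional $\cals_{\Psi|\Phi}(M)=-\bra{\Psi}\log\Delta_{\Phi|\Psi}(M)\ket{\Psi}$ of the relative modular operator $\Delta_{\Phi|\Psi}(M)=S_{\Phi|\Psi}(M)^\dagger S_{\Phi|\Psi}(M)$, with $S_{\Phi|\Psi}(M)$ the closure of $a\ket{\Psi}\mapsto a^\dagger\ket{\Phi}$ on the core $M\ket{\Psi}$. I would prove $(1)\Rightarrow(\star)\Rightarrow(2)$ and $(2)\Rightarrow(\star)\Rightarrow(1)$.

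For $(1)\Rightarrow(\star)$: if $\calo\in M_{code}$ has physical representative $\tilde{\calo}\in M_{phys}$, evaluating the reconstruction identity at $\ket{\Theta}=\ket{\Psi}$ and at $\ket{\Theta}=\ket{\Phi}$ gives $\tilde{\calo}\,u\ket{\Psi}=u\calo\ket{\Psi}$ and $\tilde{\calo}^\dagger u\ket{\Phi}=u\calo^\dagger\ket{\Phi}$, so $S_{u\Phi|u\Psi}(M_{phys})\,u\calo\ket{\Psi}=\tilde{\calo}^\dagger u\ket{\Phi}=u\calo^\dagger\ket{\Phi}=u\,S_{\Phi|\Psi}(M_{code})\,\calo\ket{\Psi}$; since $M_{code}\ket{\Psi}$ is a core and the operators are closed, $u\,S_{\Phi|\Psi}(M_{code})\subseteq S_{u\Phi|u\Psi}(M_{phys})\,u$, and the primed reconstruction yields the analogous inclusion for the commutants. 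Feeding these two inclusions, the relative Tomita--Takesaki identities relating $S_{\Phi|\Psi}(M)$, its adjoint, and the corresponding operator for $M'$, and the uniqueness of the polar decomposition $S_{\Phi|\Psi}=J_{\Phi|\Psi}\Delta_{\Phi|\Psi}^{1/2}$ into one another produces $(\star)$; the work here is operator-theoretic bookkeeping, tracking domains of the unbounded and antilinear operators and using repeatedly that $u^\dagger u=\mathbf{1}_{\calh_{code}}$ while $uu^\dagger$ is a projection. For $(\star)\Rightarrow(2)$: Stone's theorem turns the first identity of $(\star)$ into $u\,g(\Delta_{\Phi|\Psi}(M_{code}))=g(\Delta_{u\Phi|u\Psi}(M_{phys}))\,u$ for all bounded Borel $g$; applying this with the resolvents $g_s(\lambda)=(\lambda+s)^{-1}$, using $uu^\dagger u\ket{\Psi}=u\ket{\Psi}$, and integrating $-\log\lambda=\int_0^\infty\!\big((\lambda+s)^{-1}-(1+s)^{-1}\big)ds$ against the spectral measure of $\Delta_{\Phi|\Psi}(M_{code})$ in the vector $\ket{\Psi}$ gives $\cals_{\Psi|\Phi}(M_{code})=\cals_{u\Psi|u\Phi}(M_{phys})$; the commutant identity of $(\star)$ gives the second equality. (One sees here that the unprimed inclusion alone yields only the one-sided operator inequality $u^\dagger\Delta_{u\Phi|u\Psi}(M_{phys})u\le\Delta_{\Phi|\Psi}(M_{code})$ and a one-sided relative-entropy inequality, and that the commutant reconstruction is exactly what upgrades this to equality.)

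The substance is in $(2)\Rightarrow(\star)\Rightarrow(1)$, and I expect $(2)\Rightarrow(\star)$ to be the main obstacle. Fix $\ket{\Psi}$ cyclic and separating with respect to $M_{code}$; by hypothesis $u\ket{\Psi}$ is cyclic and separating with respect to $M_{phys}$, and the density hypothesis makes the relative-entropy equalities of statement $2$ available for a dense set of $\ket{\Phi}$. The equality $\cals_{\Psi|\Phi}(M_{code})=\cals_{u\Psi|u\Phi}(M_{phys})$ is a saturation of the monotonicity of relative entropy, and, after identifying the relevant inclusion of von Neumann algebras, the equality-case (sufficiency) theory --- that saturation of monotonicity forces the relevant Connes cocycles into the subalgebra, in the spirit of Petz --- should force $\Delta_{u\Phi|u\Psi}(M_{phys})^{it}$ to preserve $u\calh_{code}$ and to restrict there to $u\,\Delta_{\Phi|\Psi}(M_{code})^{it}u^\dagger$; running the same with the commutants gives $(\star)$. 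For $(\star)\Rightarrow(1)$: with a cyclic and separating $\ket{\Psi}$ fixed, for $\calo\in M_{code}$ the candidate representative is the densely defined map $b'\,u\ket{\Psi}\mapsto b'\,u\calo\ket{\Psi}$ on $M'_{phys}u\ket{\Psi}$; showing it bounded --- equivalently, that $u\calo\ket{\Psi}$ is an $M_{phys}$-vector with respect to $u\ket{\Psi}$ --- is done using $(\star)$ for $M_{phys}$ and for the relative modular operators $\Delta_{u\calo\Psi|u\Psi}$, after which the bounded extension commutes with $M'_{phys}$, hence lies in $M''_{phys}=M_{phys}$, and one checks the four intertwining identities; $\tilde{\calo}'$ and the daggered operators come symmetrically. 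Throughout, the obstacle relative to the finite-dimensional treatment of \cite{Harlow:2016fse} is the lack of any structure theorem for $M_{code}$ or $M_{phys}$: every step, and above all the extraction of the operator statement $(\star)$ from the scalar equalities of statement $2$, has to be carried out inside Tomita--Takesaki theory with explicit control of the domains of the unbounded modular operators and steady use of the density of cyclic and separating vectors.
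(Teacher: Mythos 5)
Your first half, statement 1 $\Rightarrow$ statement 2, is essentially the paper's own argument: intertwine the relative Tomita operators ($uS_{\Psi|\Phi}^{c}=S^{p}_{u\Psi|u\Phi}u$, plus the primed version), upgrade the inclusion to an equality of closed operators using both the unprimed and primed reconstruction maps, deduce that $\Delta^{p}_{u\Psi|u\Phi}$ preserves $\mathrm{Im}\,u$ and $(\mathrm{Im}\,u)^{\perp}$ with $\Delta^{c}_{\Psi|\Phi}=u^{\dagger}\Delta^{p}_{u\Psi|u\Phi}u$, and then compare the spectral data in the vector $\ket{\Psi}$; whether you phrase the last step through $\Delta^{it}$ and resolvents or, as the paper does, through the uniqueness of the projection-valued measure and $u^{\dagger}P^{p}_{\Omega}u$, is immaterial. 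Your observation that the commutant reconstruction is what turns the one-sided inclusion into an equality matches the role of the paper's Lemma \ref{lem:phystocode}.

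The converse is where your proposal has a genuine gap. Your route $(2)\Rightarrow(\star)$ rests on reading the equality $\cals_{\Psi|\Phi}(M_{code})=\cals_{u\Psi|u\Phi}(M_{phys})$ as ``saturation of monotonicity'' and invoking Petz-type sufficiency, ``after identifying the relevant inclusion of von Neumann algebras.'' But no such inclusion (nor any channel relating the two algebras) is available at that stage: $M_{code}$ and $M_{phys}$ act on different Hilbert spaces and are tied together only by the isometry $u$; the set $u^{\dagger}M_{phys}u$ is not known to be an algebra or to sit inside $M_{code}$ (the paper proves $u^{\dagger}\calp u\in M_{code}$ only \emph{assuming} statement 1), and embedding $M_{code}$ into $M_{phys}$ is precisely the reconstruction statement you are trying to prove. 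Consequently there is no monotonicity inequality in place to saturate, and the appeal to sufficiency theory is circular, or at best defers the entire difficulty to the unproved ``identification.'' The paper avoids this altogether: it uses only the special case $\ket{\Phi}$ versus $U'\ket{\Phi}$ with $U'$ a unitary in $M_{code}'$, where both relative entropies vanish; Theorem \ref{thm:saturation} (saturation of $\log\lambda\le\lambda-1$) then gives $\Delta^{p}_{u\Phi|uU'\Phi}\,u\ket{\Phi}=u\ket{\Phi}$, hence $\braket{uU'\Phi|\calp|uU'\Phi}=\braket{u\Phi|\calp|u\Phi}$ for all $\calp\in M_{phys}$; density of cyclic and separating vectors upgrades this to the operator identity $u^{\dagger}\calp u\,U'=U'\,u^{\dagger}\calp u$; the unitary $X^{\prime\,\Phi\,U'}\in M_{phys}'$ defined by $\calp u\ket{\Phi}\mapsto\calp uU'\ket{\Phi}$ is then shown to satisfy $X^{\prime\,\Phi\,U'}u=uU'$ and $(X^{\prime\,\Phi\,U'})^{\dagger}u=uU'^{\dagger}$, and Theorem \ref{thm:fourunitaries} finishes the reconstruction. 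Note also that the modular intertwining you call $(\star)$ is obtained in the paper only as a byproduct of the direction $1\Rightarrow2$ (equation \eqref{eq:modularoperators}), not as a stepping stone toward $2\Rightarrow1$; if you want to keep your architecture, you must supply an independent derivation of $(\star)$ from the scalar entropy equalities, and at present that derivation is missing.
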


Theorem \ref{thm:maintheorem} has two separate statements regarding bulk reconstruction and relative entropy. Early attempts to express bulk operators as nonlocal operators on the boundary were made in \cite{Hamilton:2006az,Hamilton2005}, and \cite{Almheiri:2014lwa} made the connection between bulk reconstruction and quantum error correction. The statement that relative entropy equals bulk relative entropy is due to \cite{Jafferis:2015del}.

Given the assumptions of Theorem \ref{thm:maintheorem}, $M_{code}$ may be viewed as a von Neumann subalgebra of $M_{phys}$. For a specific setting when the relative entropy of two states defined with respect to $M_{code}$ is identical to the relative entropy defined with respect to $M_{phys}$, $M_{code}$ is called a weakly sufficient subalgebra with respect to the two states. This particular case is studied in \cite{Petz}. However, Theorem \ref{thm:maintheorem} is concerned with the case when the relative entropies agree for all states in the code subspace.

For a generic local quantum field theory, the von Neumann algebra associated with any causally complete subregion is generically a type III$_1$ factor.\footnote{In Section 2 of \cite{TensorNetwork}, we justify this statement on physical grounds and review the classification of factors.} Assuming that this property of generic local QFTs applies in the bulk theory, one of the assumptions of Theorem \ref{thm:maintheorem} is no longer needed as seen in Remark \ref{rem:III1Dense} (see Section \ref{sec:connesstormer} for further discussion). 

\begin{rem}
If $M_{code}$ and $M_{code}^\prime$ are both type III$_1$ factors, then a result of Connes--St{\o}rmer \cite{ConnesStormer} allows us to relax the assumption in Theorem \ref{thm:maintheorem} that the set of cyclic and separating vectors with respect to $M_{code}$ is dense in $\calh_{code}$.
\label{rem:III1Dense}
\end{rem}

The Reeh--Schleider theorem implies that in quantum field theory, cyclic and separating states with respect to a local algebra are dense in the Hilbert space. Likewise, if the local algebras are type III$_1$ factors, the result of Connes--St{\o}rmer also implies that cyclic and separating states are dense. This result strengthens the relevance of type III$_1$ factors to generic local quantum field theories. 

The proof of Theorem \ref{thm:maintheorem} requires two parts: statement 1 implies statement 2, and statement 2 implies statement 1 as well. Unlike the other direction, our proof that statement 1 implies statement 2 does not requite all of the assumptions of the theorem. We highlight this by presenting Theorem \ref{thm:otherdirection}:

\begin{thm}
\label{thm:otherdirection}
Let $u : \calh_{code}\rightarrow \calh_{phys}$ be an isometry between two Hilbert spaces. Let $M_{code}$ and $M_{phys}$ be von Neumann algebras on $\calh_{code}$ and $\calh_{phys}$ respectively. Let $M^\prime_{code}$ and $M^\prime_{phys}$ respectively be the commutants of $M_{code}$ and $M_{phys}$. 

\noindent Suppose that

\begin{itemize}
\item There exists some state $\ket{\Omega} \in \calh_{code}$ such that $u\ket{\Omega} \in \calh_{phys}$ is cyclic and separating with respect to $M_{phys}$. 
\item $\forall \calo \in M_{code}\ \forall \calo^\prime \in M_{code}^\prime, \quad 
\exists\tilde{\calo} \in M_{phys}\ \exists \tilde{\calo}^\prime \in M_{phys}^\prime$ such that
	\begin{align} \nonumber
	\begin{split}
	\forall \ket{\Theta} \in \calh_{code} \quad 
	\begin{cases}
	u \calo \ket{\Theta} =  \tilde{\calo} u \ket{\Theta}, \quad
	&u \calo^\prime \ket{\Theta} =  \tilde{\calo}^\prime u \ket{\Theta}, \\
	u \calo^\dagger \ket{\Theta} =  \tilde{\calo}^\dagger u \ket{\Theta}, \quad
	&u \calo^{\prime \dagger} \ket{\Theta} = \tilde{\calo}^{\prime\dagger} u \ket{\Theta}.
	\end{cases}
	\end{split}
	\end{align}
\end{itemize}

\noindent Then, for any $\ket{\Psi}$, $\ket{\Phi} \in \calh_{code}$ with $\ket{\Psi}$ cyclic and separating with respect to $M_{code}$, 
\begin{itemize}
\item $u \ket{\Psi}$ is cyclic and separating with respect to $M_{phys}$ and $M_{phys}^\prime$,
\item $\cals_{\Psi|\Phi}(M_{code})= \cals_{u\Psi|u\Phi}(M_{phys}), \quad \cals_{\Psi|\Phi}(M_{code}^\prime)= \cals_{u\Psi|u\Phi}(M_{phys}^\prime),$
\end{itemize}
where $\cals_{\Psi|\Phi}(M)$ is the relative entropy.
\end{thm}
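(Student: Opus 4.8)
The plan is to derive everything from the reconstruction hypothesis together with the single cyclic--separating image $u\ket{\Omega}$, in three moves: promote cyclicity and separability from $\ket{\Omega}$ to every cyclic--separating vector of the code; intertwine the relative Tomita operators through $u$; and feed this into Araki's formula. For the first move, fix $\ket{\Psi}$ cyclic and separating with respect to $M_{code}$, so by Tomita--Takesaki it is also cyclic and separating for $M_{code}^\prime$. If $\calo^\prime\in M_{phys}^\prime$ kills $u\ket{\Psi}$, then for each $\calo\in M_{code}$ with reconstruction $\tilde{\calo}\in M_{phys}$ we get $\calo^\prime u\calo\ket{\Psi}=\calo^\prime\tilde{\calo}u\ket{\Psi}=\tilde{\calo}\calo^\prime u\ket{\Psi}=0$, so the bounded operator $\calo^\prime u$ annihilates the dense set $M_{code}\ket{\Psi}$ and hence vanishes; then $\calo^\prime u\ket{\Omega}=0$, which forces $\calo^\prime=0$ since $u\ket{\Omega}$, being cyclic for $M_{phys}$, is separating for $M_{phys}^\prime$. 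The mirror argument --- using reconstruction of $M_{code}^\prime$, cyclicity of $\ket{\Psi}$ for $M_{code}^\prime$, and separability of $u\ket{\Omega}$ for $M_{phys}$ --- shows $u\ket{\Psi}$ is separating for $M_{phys}$. Thus $u\ket{\Psi}$ is cyclic and separating for $M_{phys}$, and by Tomita--Takesaki also for $M_{phys}^\prime$.

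Next, fix $\ket{\Phi}\in\calh_{code}$, let $S_{\Phi|\Psi}$ be the closure of the antilinear map $\calo\ket{\Psi}\mapsto\calo^\dagger\ket{\Phi}$ on $M_{code}\ket{\Psi}$, let $F_{\Phi|\Psi}$ be the closure of $\calo^\prime\ket{\Psi}\mapsto\calo^{\prime\dagger}\ket{\Phi}$ on $M_{code}^\prime\ket{\Psi}$, and recall that $\ket{\Psi}$ cyclic--separating for $M_{code}$ makes both densely defined and closable with $F_{\Phi|\Psi}=S_{\Phi|\Psi}^\dagger$, so the relative modular operator is $\Delta_{\Phi|\Psi}=S_{\Phi|\Psi}^\dagger S_{\Phi|\Psi}=F_{\Phi|\Psi}S_{\Phi|\Psi}$; define $S_{u\Phi|u\Psi},F_{u\Phi|u\Psi},\Delta_{u\Phi|u\Psi}$ analogously on $\calh_{phys}$, which is legitimate by the first move. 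Using the reconstruction identities --- crucially the relations $u\calo^\dagger\ket{\Theta}=\tilde{\calo}^\dagger u\ket{\Theta}$ and $u\calo^{\prime\dagger}\ket{\Theta}=\tilde{\calo}^{\prime\dagger}u\ket{\Theta}$ built into the hypothesis --- one checks on the core $M_{code}\ket{\Psi}$ that $u\,S_{\Phi|\Psi}(\calo\ket{\Psi})=u\calo^\dagger\ket{\Phi}=\tilde{\calo}^\dagger u\ket{\Phi}=S_{u\Phi|u\Psi}(\tilde{\calo}u\ket{\Psi})=S_{u\Phi|u\Psi}(u\calo\ket{\Psi})$, and likewise $u\,F_{\Phi|\Psi}=F_{u\Phi|u\Psi}u$ on $M_{code}^\prime\ket{\Psi}$. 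Since $u$ is bounded and the physical operators are closed, these propagate to $u\,D(S_{\Phi|\Psi})\subseteq D(S_{u\Phi|u\Psi})$ with $S_{u\Phi|u\Psi}u=uS_{\Phi|\Psi}$, and similarly for $F$; composing, $u$ maps $D(\Delta_{\Phi|\Psi})$ into $D(\Delta_{u\Phi|u\Psi})$ with $\Delta_{u\Phi|u\Psi}u=u\Delta_{\Phi|\Psi}$.

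From $u\Delta_{\Phi|\Psi}\subseteq\Delta_{u\Phi|u\Psi}u$ and $u^\dagger u=1$ one obtains the resolvent intertwiner $(\Delta_{u\Phi|u\Psi}+z)^{-1}u=u(\Delta_{\Phi|\Psi}+z)^{-1}$ for $z\notin(-\infty,0]$, hence $f(\Delta_{u\Phi|u\Psi})u=uf(\Delta_{\Phi|\Psi})$ for every bounded continuous $f$ on $[0,\infty)$ and, taking $f(\lambda)=z/(\lambda+z)$ with $z\to0^+$, also for the spectral projection onto $\ker\Delta$. Therefore the scalar spectral measures $d\langle\Psi|E_{\Delta_{\Phi|\Psi}}(\lambda)|\Psi\rangle$ and $d\langle u\Psi|E_{\Delta_{u\Phi|u\Psi}}(\lambda)|u\Psi\rangle$ coincide, since both integrate each bounded continuous $f$ to $\langle\Psi|f(\Delta_{\Phi|\Psi})|\Psi\rangle=\langle u\Psi|f(\Delta_{u\Phi|u\Psi})|u\Psi\rangle$. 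Integrating $-\log\lambda$ against this common measure gives, via Araki's formula, $\cals_{\Psi|\Phi}(M_{code})=\cals_{u\Psi|u\Phi}(M_{phys})$, both sides finite or $+\infty$ together. Running the identical argument with $(M_{code},M_{phys})$ replaced by $(M_{code}^\prime,M_{phys}^\prime)$ --- legitimate since the reconstruction hypothesis is symmetric in primed and unprimed algebras and $\ket{\Psi},u\ket{\Psi}$ are cyclic--separating for the commutants too --- yields $\cals_{\Psi|\Phi}(M_{code}^\prime)=\cals_{u\Psi|u\Phi}(M_{phys}^\prime)$.

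I expect the main obstacle to be the unbounded-operator bookkeeping: confirming that the intertwining relation established on a core genuinely descends to the closed operators $S_{\Phi|\Psi},F_{\Phi|\Psi}$ and then to $\Delta_{\Phi|\Psi}$, and that the spectral-measure comparison correctly accommodates a nontrivial kernel of $\Delta_{\Phi|\Psi}$ (the case $\cals=+\infty$) and the behaviour at $\lambda=\infty$. A variant that keeps everything bounded is to intertwine the unitaries directly, $\Delta_{u\Phi|u\Psi}^{it}u=u\Delta_{\Phi|\Psi}^{it}$, and then analytically continue $\langle\Psi|\Delta_{\Phi|\Psi}^{\alpha}|\Psi\rangle$ and $\langle u\Psi|\Delta_{u\Phi|u\Psi}^{\alpha}|u\Psi\rangle$ across the strip $0\le\mathrm{Re}\,\alpha\le 1/2$ and compare derivatives at $\alpha=0$; either route rests on exactly the reconstruction identities used above, which also establish that the bulk and boundary modular operators agree on the code subspace.
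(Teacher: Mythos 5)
Your proposal is correct in substance, and its technical core takes a genuinely different route from the paper's. For the first step (cyclicity/separability of $u\ket{\Psi}$) you argue via ``separating for the commutant'' using the single vector $u\ket{\Omega}$, whereas the paper (Lemma \ref{lem:firstlemma}) runs an explicit $\epsilon$-approximation; both work. The real divergence is in the modular-operator step: the paper proves the exact operator equality $uS^c_{\Psi|\Phi}=S^p_{u\Psi|u\Phi}u$ (equal domains, which needs Lemma \ref{lem:phystocode}, $u^\dagger\calp u\in M_{code}$), decomposes $\calh_{phys}=\text{Im}\,u\oplus(\text{Im}\,u)^\perp$, shows the spectral projections commute with $uu^\dagger$, and identifies $u^\dagger P^p_\Omega u$ as the projection-valued measure of $\Delta^c_{\Psi|\Phi}$; you instead keep only the containments $uS^c\subseteq S^pu$, $uF^c\subseteq F^pu$, deduce $u\Delta^c\subseteq\Delta^pu$, and then use the resolvent intertwiner $(\Delta^p+z)^{-1}u=u(\Delta^c+z)^{-1}$ (which indeed follows from the containment alone, since self-adjointness makes the resolvents everywhere defined) to transport the functional calculus and conclude equality of the scalar spectral measures. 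This buys a cleaner argument: you never need $u^\dagger\calp u\in M_{code}$, the exact domain equality, or the $\text{Im}\,u$ decomposition. The one ingredient you should make explicit is the step $\Delta^c=F^cS^c$ with $F^c_{\Phi|\Psi}=(S^c_{\Phi|\Psi})^\dagger$: you need this \emph{equality} (not just $F\subseteq S^\dagger$) on the code side for an \emph{arbitrary} $\ket{\Phi}\in\calh_{code}$, since only then does $x\in D(\Delta^c)$ guarantee $S^cx\in D(F^c)$ so that the $F$-intertwining applies. The paper quotes $S^\dagger_{\Psi|\Phi}=S'_{\Psi|\Phi}$ only when both vectors are cyclic and separating, and its own proof sidesteps the issue by taking adjoints of the restricted operators on $\text{Im}\,u$; the general statement (cyclic--separating $\ket{\Psi}$, arbitrary $\ket{\Phi}$) is standard relative Tomita--Takesaki theory (Araki, Takesaki), but you should cite or prove it rather than leave it implicit. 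With that reference added, and the routine verification that the resolvent relation propagates to bounded Borel functions (hence to spectral projections, including the kernel projection governing the $+\infty$ case), your argument is complete, and it reproduces the paper's equation \eqref{eq:modularoperators} as the intertwining $\Delta^p_{u\Psi|u\Phi}u=u\Delta^c_{\Psi|\Phi}$ on $D(\Delta^c_{\Psi|\Phi})$.
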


Theorem \ref{thm:maintheorem}, our main result, has a natural interpretation in the context of AdS/CFT. As the notation suggests, $\calh_{code}$ may be interpreted as a code subspace of the physical Hilbert space $\calh_{phys}$ that consists of states with semi-classical bulk duals. The von Neumann algebra $M_{phys}$ denotes an algebra of boundary operators associated with a subregion on the boundary, and $M_{code}$ denotes an algebra of bulk operators associated with the dual entanglement wedge. The commutant algebra $M_{phys}^\prime$ is associated with the complementary boundary subregion, and $M_{code}^\prime$ is associated with the complement of the entanglement wedge of $M_{code}$.    

Theorem \ref{thm:maintheorem} provides a necessary and sufficient criterion for a subalgebra of bulk operators and its commutant to respectively be reconstructed in a subregion in the boundary and its complement. We need \cite{Jafferis:2015del} to argue that $M_{code}$ and $M_{code}^\prime$ are associated with entanglement wedges. While Theorem \ref{thm:maintheorem} may not come as a surprise to readers familiar with \cite{Harlow:2016fse,DongHarlowWall}, we emphasize that studying the infinite-dimensional case can potentially yield new physical insights in AdS/CFT. As an example in quantum field theory, the Reeh--Schlieder Theorem \cite{RS} cannot be anticipated by studying a finite-dimensional spin-lattice model where the Hilbert space factorizes as $\calh = \calh_1 \otimes \calh_2 \otimes \cdots \otimes \calh_{N}$ where $\calh_i$ denotes the finite-dimensional Hilbert space at each site.

While proving Theorem \ref{thm:otherdirection}, we show in equation \eqref{eq:modularoperators} that the modular operators associated with the bulk and boundary subregions act the same way on $\calh_{code}$. Furthermore, while proving bulk reconstruction in Theorem \ref{thm:maintheorem}, we explicitly show how to define a boundary operator that represents a given bulk operator on the code subspace. In Section \ref{sec:discussion}, we discuss the implications of the Reeh--Schlieder Theorem for infinite- and finite-dimensional quantum error correction and make contact with the results of \cite{Harlow:2016fse}. 

An outline of our proof of Theorem \ref{thm:otherdirection} is the following. 
\begin{itemize}
	\item We prove that for any $\ket{\Psi} \in \calh_{code}$ which is cyclic and separating with respect to $M_{code}$, $u \ket{\Psi}$ is cyclic and separating with respect to $M_{phys}$.\footnote{This is because we may act with an operator in $M_{code}$ to send $\ket{\Psi}$ to a vector arbitrarily close to $\ket{\Omega}$, and we may act with an operator in $M_{phys}$ to send $u\ket{\Omega}$ arbitrarily close to any vector in $\calh_{phys}$.} If such is false, the relative entropy between $u\ket{\Psi}$ and $u\ket{\Phi}$ would not be possible to be defined, as the relative modular operator requires that $u\ket{\Psi}$ be cyclic and separating with respect to $M_{phys}$.
	\item Using the fact that $M_{phys}$ and $M_{phys}^\prime$ are commutants of each other	, we show that for any $\calp \in M_{phys}$, $u^\dagger \calp u \in M_{code}$.
	\item Let $S^c_{\Psi | \Phi}$ and $S^p_{u \Psi | u \Phi}$ denote 	relative Tomita operators defined with respect to $M_{code}$ and $M_{phys}$ respectively. We relate   $S^c_{\Psi | \Phi}$ and $S^p_{u \Psi | u \Phi}$ and derive $u S^c_{\Psi | \Phi} = S^p_{u \Psi | u \Phi} u$ for generically unbounded operators. In particular, we show that their domains are equal and $S^p_{u \Psi | u \Phi}$ restricted to the vector space $(\text{Im } u)^\perp$ has a range contained within $(\text{Im } u)^\perp$.
	\item We derive a relation for the relative modular operators  associated with $S^c_{\Psi | \Phi}$ and $S^p_{u \Psi | u \Phi}$.\footnote{With the relation for the Tomita operators we derived above, we obtain a relation for the relative modular operators $\Delta^c_{\Psi | \Phi}$ and $\Delta^p_{u \Psi | u \Phi}$ to be $u\Delta^c_{\Psi | \Phi} =  \Delta^p_{u \Psi | u \Phi} u$.} This is related to the physical notion that bulk modular flow is dual to boundary modular flow. Likewise, we show that $\Delta^p_{u \Psi | u \Phi}$ restricted to the vector space $(\text{Im } u)^{\perp}$ has a range contained within $(\text{Im } u)^{\perp}$.
	\item Using the spectral theorem, we show that the spectral projections commute with the projector $uu^\dagger$.\footnote{We apply the spectral theorem separately for the restriction of $\Delta^p_{u \Psi | u \Phi}$ to $\text{Im } u$ and $(\text{Im } u)^\perp$.} We derive that the spectral projections of $\Delta^c_{\Psi | \Phi}$ are given by $u^\dagger P^p_\Omega u$, where $P^p_\Omega$ denotes the spectral projections of $\Delta^p_{u \Psi | u \Phi}$.\footnote{We use the relation $\Delta^c_{\Psi | \Phi} = u^\dagger \Delta^p_{u \Psi | u \Phi} u$. For the projections, $\Omega$ denotes a measurable subset of $\mathbb{R}$}
	\item Any function of $\Delta^p_{u \Psi | u \Phi}$ or $\Delta^c_{\Psi | \Phi}$ can be constructed once the spectral projections are known. It follows that $\braket{\Psi|\log \Delta^c_{\Psi | \Phi}|\Psi} = \braket{u\Psi|\log \Delta^p_{u\Psi | u\Phi} |u \Psi}$, and thus the relative entropies are equal. 	
\end{itemize}

We note that Theorem \ref{thm:otherdirection} dictates that statement 1 of Theorem \ref{thm:maintheorem} implies statement 2  of Theorem \ref{thm:maintheorem}. A sketch of our proof of the converse is the following. This completes the proof of Theorem \ref{thm:maintheorem}.
\begin{itemize}
	\item For any $\ket{\Phi} \in \calh_{code}$ that is cyclic and separating with respect to $M_{code}$, and for any unitary $U^\prime \in M_{code}^\prime$, the properties of relative entropy and the assumptions of the theorem imply that $0 = \cals_{\Phi | U^\prime \Phi}(M_{code}) = \cals_{u\Phi | uU^\prime \Phi}(M_{phys})$.
	\item Following the logic of \cite{Witten:2018zxz}, one may show that $\braket{ u U^\prime \Phi |  \calp u U^\prime \Phi } = \braket{u \Phi |\calp | u \Phi}$ for all $\calp \in M_{phys}$. Using the assumption that cyclic and separating states with respect to $M_{code}$ are dense in $\calh_{code}$, it follows that $u^\dagger \calp u U^\prime = U^\prime u^\dagger \calp u$. The same logic also implies that for $\calp^\prime \in M_{phys}^\prime$ and any unitary $U \in M_{code}$, $u^\dagger \calp^\prime u U = U u^\dagger \calp^\prime u$.
	\item We define a linear map $X^{\prime \, \Phi \, U^\prime} : \calh_{phys}\rightarrow \calh_{phys}$ by $X^{\prime \, \Phi \, U^\prime}  \calp u \ket{\Phi} := \calp u U^\prime \ket{\Phi} \quad \forall \calp \in M_{phys}$, and we show that $X^{\prime \, \Phi \, U^\prime}$ is unitary and that $X^{\prime \, \Phi \, U^\prime} \in M_{phys}^\prime$.
	\item Since $u^\dagger X^{\prime \, \Phi \, U^\prime} u U = U u^\dagger X^{\prime \, \Phi \, U^\prime} u$ and any operator in $M_{code}$ may be written as a linear combination of four unitary operators in $M_{code}$, we show that $u^\dagger X^{\prime \, \Phi \, U^\prime} u = U^\prime$. We also show that $X^{\prime \, \Phi \, U^\prime}$ maps the vector space $\text{Im } u \rightarrow \text{Im } u$. Hence, $ X^{\prime \, \Phi \, U^\prime} u = u U^\prime$ 
	\item Using similar methods, we then show that $ (X^{\prime \, \Phi \, U^\prime})^{\dagger} u = u (U^\prime)^\dagger$. Thus, the unitary operator $U^\prime \in M_{code}$ may be reconstructed as $X^{\prime \, \Phi \, U^\prime}$ for some choice of $\ket{\Phi} \in \calh_{code}$ that is cyclic and separating with respect to $M_{code}$.  
	\item Since any operator in $M_{code}$ may be written as a linear combination of four unitary operators in $M_{code}$, we have a way to represent any operator in $M_{code}$ as an operator in $M_{phys}$. The same logic applies to show that any operator in $M_{code}^\prime$ may be represented as an operator in $M_{phys}^\prime$.
\end{itemize}

The rest of this paper is summarized as follows. In Section 2, we define von Neumann algebras and functions of operators, and we review the spectral theorem (for unbounded operators). In Section 3, we review the relative modular operator from Tomita-Takesaki theory, and define the relative entropy. In Section 4, we prove that when the bulk reconstruction is satisfied, the relative entropy is equivalent between the boundary and the bulk (Theorem \ref{thm:otherdirection}). In Section \ref{sec:converse}, we prove the converse, completing the proof of Theorem \ref{thm:maintheorem}. In Section \ref{sec:discussion}, we physically interpret Theorem \ref{thm:maintheorem} and relate our work to previous work on finite-dimensional quantum error correction and holography.

\section{Bounded and Unbounded Operators}
In this section, we review some results in functional analysis that are used in the proofs of Theorems \ref{thm:otherdirection} and \ref{thm:maintheorem}. In particular, we explain how to define a function of a bounded self-adjoint operator and we review the spectral theorem (for unbounded operators).  We mostly follow reference \cite{ReedSimon}.

\begin{defn}
	An \emph{operator} on a Hilbert space $\calh$ is a linear map from its domain, a linear subspace of $\calh$, into $\calh$.
\end{defn}
\begin{defn}
	A \emph{bounded operator} is an operator $\calo$ that satisfies $||\calo \ket{\psi}|| \leq K ||\ket{\psi}|| \quad \forall \ket{\psi} \in \calh$ for some $K \in \mathbb{R}$. We let $\calb(\calh)$ denote the algebra of bounded operators on $\calh$.
\end{defn}

\begin{defn}
	The {\em commutant} of a subset $S\subset \mathcal{B}(\mathcal{H})$ is the set $S^\prime$ of bounded operators that commute with all operators in $S$, i.e. $S^\prime =\{\calo \in \mathcal{B}(\mathcal{H}) : \calo \calp = \calp \calo \ \forall \calp \in S \}$. The double commutant of $S$ is the commutant of $S^\prime$.
\end{defn}
\begin{defn}
	A \emph{von Neumann algebra} is an algebra of bounded operators that contains the identity operator, is closed under hermitian conjugation, and is equal to its double commutant.  
\end{defn}

\begin{thm}
	\label{thm:continuity}
	Let $\calo \in \calb(\calh)$. Let $\{\ket{\Psi_n}\} \in \calh$ be a sequence of vectors such that its limit vanishes, i.e. $\lim_{n \rightarrow \infty} \ket{\Psi_n} = 0$. Then, $\lim_{n \rightarrow \infty} \calo \ket{\Psi_n} = 0$.
\end{thm}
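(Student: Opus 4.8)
The plan is to show that a bounded operator is automatically continuous at the origin, which is an immediate consequence of the defining inequality. First I would invoke the definition of a bounded operator: there exists a constant $K \in \mathbb{R}$ (which we may take to be nonnegative) such that $||\calo \ket{\psi}|| \leq K ||\ket{\psi}||$ for all $\ket{\psi} \in \calh$. Applying this inequality to each term of the sequence gives $0 \leq ||\calo \ket{\Psi_n}|| \leq K ||\ket{\Psi_n}||$.

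Next I would use the hypothesis $\lim_{n \to \infty} \ket{\Psi_n} = 0$, which by definition of convergence in a Hilbert space means $\lim_{n \to \infty} ||\ket{\Psi_n}|| = 0$. Hence $K ||\ket{\Psi_n}|| \to 0$, and the squeeze theorem for real sequences forces $||\calo \ket{\Psi_n}|| \to 0$, i.e. $\lim_{n \to \infty} \calo \ket{\Psi_n} = 0$. (If $K = 0$ the operator is identically zero and the conclusion is trivial; this edge case needs no separate treatment since the inequality still holds.)

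There is essentially no obstacle here: the only subtlety worth flagging is the standard fact that norm convergence of $\ket{\Psi_n}$ to $0$ is exactly what the notation $\lim_{n\to\infty}\ket{\Psi_n}=0$ abbreviates, so that the chain of inequalities can be read off directly. This lemma is recorded here only because it will be used repeatedly later — for instance, to pass limits through bounded operators when manipulating cyclic vectors and when applying the spectral theorem — so the proof can be kept to a single line once the definition of boundedness is unpacked.
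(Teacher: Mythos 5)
Your proof is correct: the paper states this result without proof precisely because it follows in one line from the boundedness inequality $||\calo\ket{\Psi_n}|| \leq K\,||\ket{\Psi_n}||$ together with norm convergence, exactly as you argue. Nothing further is needed.
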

Theorem \ref{thm:continuity} implies that bounded operators define a continuous linear map on the Hilbert space. Any bounded operator that annihilates a dense subspace of the Hilbert space is identically zero.
\begin{defn}
	A \emph{densely defined operator} on a Hilbert space $\calh$ is an operator whose domain is a dense subspace of $\calh$. 
\end{defn}

\subsection{Functions of bounded operators}

In this section, we will explain how to understand functions of bounded operators.

\begin{defn}
	The \emph{spectrum} of $\calo \in \calb(\calh)$ is defined as
	\begin{equation*}
	\sigma(\calo) := \{ \lambda \in \mathbb{C} : \calo - \lambda I \text{ is not invertible}\},
	\end{equation*}
	where $I$ denotes the identity operator.
\end{defn}
We will make use of the mathematical facts that $\sigma(\calo)$ is a nonempty closed bounded subset of $\mathbb{C}$ and that when $\calo$ is self-adjoint, $\sigma(\calo) \subset \mathbb{R}$ and
$ ||\calo|| = \sup_{\lambda \in \sigma(\calo)}|\lambda| $ \cite{Jones-vNalg}\cite{ReedSimon}.

\begin{defn}
	Let $\calo \in \calb(\calh)$ be a self-adjoint operator. We denote the set of continuous $\mathbb{R}$-valued functions defined on $\sigma(\calo)$ by $C(\sigma(\calo))$.
\end{defn}

\begin{defn}
	For every self-adjoint operator $\calo \in \calb(\calh)$, we define the $L_{\infty}$ norm (denoted by $|| \cdot ||_{\infty}$) of $f \in C(\sigma(\calo))$ by
	\begin{equation*}
	||f||_{\infty} = \sup_{x \in \sigma(\calo)}|f(x)|.
	\end{equation*} 
\end{defn}

\begin{thm}[\cite{ReedSimon}, page 121]
	Given a self-adjoint operator $\calo \in \calb(\calh)$, the set of polynomials (with $\mathbb{R}$-valued coefficients) is dense in $C(\sigma(\calo))$ in the $L_{\infty}$ norm.
\end{thm}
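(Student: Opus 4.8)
The plan is to reduce this to the classical Weierstrass approximation theorem, exploiting that $\sigma(\calo)$ is a compact subset of $\mathbb{R}$ (it is closed and bounded by the facts recalled above, and contained in $\mathbb{R}$ since $\calo$ is self-adjoint). First I would pick a closed interval $[a,b] \supset \sigma(\calo)$. The opening step is an elementary extension lemma: every $f \in C(\sigma(\calo))$ extends to some $\tilde f \in C([a,b])$ with $\|\tilde f\|_\infty = \|f\|_\infty$. Indeed, $[a,b] \setminus \sigma(\calo)$ is a countable disjoint union of open intervals; on each such interval define $\tilde f$ by affine interpolation between the (finite) values of $f$ at the endpoints, and by the single endpoint value on the two outermost intervals. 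This is manifestly continuous and does not increase the sup norm. (One may instead invoke the Tietze extension theorem, but the explicit construction is enough here.)

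The core step is to approximate $\tilde f$ uniformly on $[a,b]$ by polynomials with real coefficients. After an affine change of variable we may assume $[a,b] = [0,1]$. I would introduce the Bernstein polynomials $B_n(\tilde f)(x) = \sum_{k=0}^{n} \tilde f(k/n)\,\binom{n}{k}\, x^k (1-x)^{n-k}$, which visibly have real coefficients, and show $\|B_n(\tilde f) - \tilde f\|_\infty \to 0$. The estimate is standard: using $\sum_k \binom{n}{k} x^k (1-x)^{n-k} = 1$ one writes $B_n(\tilde f)(x) - \tilde f(x) = \sum_k \big(\tilde f(k/n) - \tilde f(x)\big)\binom{n}{k} x^k (1-x)^{n-k}$, splits the sum according to whether $|k/n - x| < \delta$ or not, bounds the near terms by the modulus of continuity of $\tilde f$ (which is uniformly continuous on the compact $[0,1]$), and bounds the far terms by $2\|\tilde f\|_\infty$ times the probability of the far event, which is controlled by Chebyshev's inequality together with the variance identity $\sum_k (k/n - x)^2 \binom{n}{k} x^k (1-x)^{n-k} = x(1-x)/n \le 1/(4n)$. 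Restricting the resulting polynomials back to $\sigma(\calo)$ and undoing the change of variable yields real-coefficient polynomials converging to $f$ in $\|\cdot\|_\infty$ on $\sigma(\calo)$.

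A cleaner alternative that bypasses the extension lemma is to apply the Stone--Weierstrass theorem directly: the real-coefficient polynomials, restricted to $\sigma(\calo)$, form a subalgebra of the real Banach algebra $C(\sigma(\calo))$ that contains the constants and separates points of $\sigma(\calo)$ (already the function $x \mapsto x$ does), so Stone--Weierstrass gives density in the sup norm. I expect no serious obstacle in either route; the only things requiring attention are keeping all coefficients real throughout — automatic for Bernstein polynomials and for Stone--Weierstrass over the real scalars — and, in the first route, checking the extension lemma, while the second route just cites Stone--Weierstrass.
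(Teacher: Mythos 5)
Your proposal is correct. The paper gives no proof of its own for this statement---it simply cites Reed--Simon---and the standard argument there is exactly your second route: the real-coefficient polynomials restricted to the compact set $\sigma(\calo)\subset\mathbb{R}$ form a subalgebra of $C(\sigma(\calo))$ containing constants and separating points, so Stone--Weierstrass gives density; your more hands-on extension-plus-Bernstein route is also valid, just less economical.
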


\begin{defn}
	For any polynomial $p(x) = \sum_{n = 0}^N a_n x^n$ with $a_n \in \mathbb{R}$, we define $p(\calo) := \sum_{n = 0}^N a_n \calo^n$ for $\calo \in \calb(\calh)$. 
\end{defn}

\begin{thm}[\cite{ReedSimon}, page 223]
	Let $p(x) = \sum_{n = 0}^N a_n x^n$ with $a_n \in \mathbb{R}$. Let $\calo \in \calb(\calh)$.\footnote{Note that $\calo$ need not be self-adjoint.} Then
	\[ \sigma(p(\calo)) = \{p(\lambda)|\lambda \in \sigma(\calo) \}. \]
\end{thm}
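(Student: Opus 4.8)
The plan is to reduce the set equality to a statement about the invertibility of a product of commuting operators, using that over $\mathbb{C}$ every polynomial factors into linear terms. Fix $\mu \in \mathbb{C}$ and set $q(x) := p(x) - \mu$. If $p$ is identically zero or a nonzero constant, then $p(\calo) = a_0 I$, so $\sigma(p(\calo)) = \{a_0\}$, while $\{p(\lambda) : \lambda \in \sigma(\calo)\} = \{a_0\}$ because $\sigma(\calo)$ is nonempty; so the claim holds and we may assume $p$ has degree $N \geq 1$ with leading coefficient $a_N \neq 0$ (discarding vanishing leading coefficients if necessary). By the fundamental theorem of algebra, $q(x) = a_N \prod_{j=1}^N (x - \lambda_j)$ for some $\lambda_1, \dots, \lambda_N \in \mathbb{C}$ counted with multiplicity, and these $\lambda_j$ are exactly the solutions of $p(\lambda) = \mu$. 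Since $\calo \mapsto p(\calo)$ is an algebra homomorphism on polynomials (by the definition of $p(\calo)$), we obtain the operator identity
\[
p(\calo) - \mu I = a_N \prod_{j=1}^N (\calo - \lambda_j I),
\]
whose factors on the right pairwise commute.

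The key lemma I would establish first is: if $A_1, \dots, A_N \in \calb(\calh)$ pairwise commute, then $A_1 \cdots A_N$ is invertible in $\calb(\calh)$ if and only if each $A_j$ is invertible. The ``if'' direction is immediate. For ``only if'' it suffices, by induction on $N$, to treat $N = 2$: if $A, B$ commute and $C := (AB)^{-1}$ exists, then $ABC = CAB = I$; combining this with $AB = BA$ gives $A(BC) = I$ and $(CB)A = I$, so $A$ has a bounded left and right inverse and is therefore invertible, and then $B = A^{-1}(AB)$ is a product of invertibles, hence invertible. Applying this lemma to the displayed factorization, and noting the scalar $a_N \neq 0$ contributes an invertible factor, we conclude that $p(\calo) - \mu I$ is invertible if and only if $\calo - \lambda_j I$ is invertible for every $j$.

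Putting the pieces together: $\mu \in \sigma(p(\calo))$ iff $p(\calo) - \mu I$ is not invertible iff $\calo - \lambda_j I$ is not invertible for some $j$ iff some root $\lambda_j$ of $p(\lambda) = \mu$ lies in $\sigma(\calo)$ iff $\mu \in \{p(\lambda) : \lambda \in \sigma(\calo)\}$. As $\mu$ was arbitrary, this yields $\sigma(p(\calo)) = \{p(\lambda) : \lambda \in \sigma(\calo)\}$. The only real obstacle is the commuting-product lemma; once it is in hand the rest is bookkeeping with the factorization, the one subtle point being the degenerate constant/zero-polynomial cases, where one must invoke the nonemptiness of $\sigma(\calo)$ to match the two sides.
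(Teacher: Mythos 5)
Your argument is correct and is essentially the standard proof of this spectral mapping theorem, which the paper simply cites from Reed--Simon rather than proving: there too one factors $p(x)-\mu$ into linear factors over $\mathbb{C}$ and uses that a product of commuting bounded operators is invertible precisely when each factor is. Your handling of the degenerate constant case via the nonemptiness of $\sigma(\calo)$ and your two-sided-inverse argument for the commuting-product lemma are both sound, so nothing is missing.
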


\begin{thm}[\cite{ReedSimon}, page 223]
	For any self-adjoint operator $\calo \in \calb(\calh)$ and any polynomial $p \in C(\sigma(\calo))$,
	\begin{equation*}
	||p(\calo)|| = ||p||_{\infty}. 
	\end{equation*}
\end{thm}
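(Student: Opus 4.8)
The plan is to obtain the equality by chaining together three facts already recorded above: a real polynomial of a bounded self-adjoint operator is again bounded and self-adjoint; the operator norm of a bounded self-adjoint operator equals its spectral radius; and the spectral mapping theorem $\sigma(p(\calo)) = \{p(\lambda) : \lambda \in \sigma(\calo)\}$.

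First I would check that $p(\calo)$ is self-adjoint. Writing $p(x) = \sum_{n=0}^N a_n x^n$ with $a_n \in \mathbb{R}$, one computes $p(\calo)^\dagger = \sum_{n=0}^N a_n (\calo^\dagger)^n = \sum_{n=0}^N a_n \calo^n = p(\calo)$, using $\calo^\dagger = \calo$ and the reality of the $a_n$; moreover $p(\calo)$ is bounded, being a finite sum of products of the bounded operator $\calo$, so $p(\calo) \in \calb(\calh)$. Next I would apply the fact quoted earlier that for a self-adjoint $A \in \calb(\calh)$ one has $\sigma(A) \subset \mathbb{R}$ and $\|A\| = \sup_{\mu \in \sigma(A)} |\mu|$. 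With $A = p(\calo)$ this gives $\|p(\calo)\| = \sup_{\mu \in \sigma(p(\calo))} |\mu|$. Invoking the spectral mapping theorem stated just above, $\sigma(p(\calo)) = \{p(\lambda) : \lambda \in \sigma(\calo)\}$, rewrites this supremum as $\sup_{\lambda \in \sigma(\calo)} |p(\lambda)|$, which is exactly $\|p\|_\infty$ by the definition of the $L_\infty$ norm on $C(\sigma(\calo))$. Since $\sigma(\calo)$ is nonempty (another fact quoted above), the supremum is over a nonempty set and there is no degenerate case, so $\|p(\calo)\| = \|p\|_\infty$ and the argument is complete.

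There is no genuine obstacle here; the only point needing a moment's care is the verification that $p(\calo)$ really is self-adjoint, so that the spectral-radius formula applies — and this is precisely where the hypothesis that $p$ has real coefficients (the reason $C(\sigma(\calo))$ was defined above to consist of $\mathbb{R}$-valued functions, and polynomials taken with real coefficients) enters. With complex coefficients $p(\calo)$ would only be normal, and one would instead need the norm/spectral-radius identity for normal operators; in the present real setup the elementary chain above is all that is required.
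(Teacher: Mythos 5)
Your argument is correct and is essentially the paper's own proof: the paper's one-line proof chains $\|p(\calo)\| = \sup_{\lambda \in \sigma(p(\calo))}|\lambda| = \sup_{\lambda \in \sigma(\calo)}|p(\lambda)| = \|p\|_\infty$ exactly as you do, with your verification that $p(\calo)$ is bounded and self-adjoint (so the norm-equals-spectral-radius fact applies) simply making explicit what the paper leaves implicit.
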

\begin{proof}
	$||p(\calo)|| = \sup_{\lambda \in \sigma(p(\calo))} |\lambda| = \sup_{\lambda \in \sigma(\calo)} |p(\lambda)|= ||p||_{\infty}$.
\end{proof}

Let $\calo \in \calb(\calh)$ be self-adjoint. Let $P$ denote the space of polynomials defined on $\mathbb{R}$ with $\mathbb{R}$-valued coefficients. Define a map $\tilde{\phi}_\calo : P \rightarrow \calb(\calh)$ such that $\tilde{\phi}_\calo(p) = p(\calo)$ for any polynomial $p \in P$. The map $\tilde{\phi}_\calo$ is a bounded linear operator because $||\tilde{\phi}_\calo(p)|| = ||p||_\infty$. Hence, $\tilde{\phi}_\calo$ may be uniquely extended to a bounded linear operator $\phi_\calo : C(\sigma(\calo)) \rightarrow \calb(\calh)$. For $f \in C(\sigma(\calo))$, we define $f(\calo) := \phi_\calo(f)$. If $\{p_n\} \in P$ denotes a sequence of polynomials such that $\lim_{n \rightarrow \infty} p_n = f$ (where the limit converges in the $L_{\infty}$ norm), then we may also write
\begin{equation}
f(\calo) = \lim_{n \rightarrow \infty} p_n(\calo),
\end{equation}
where the limit converges in the norm topology. If $f,g \in C(\sigma(\calo))$, then one may show \cite{ReedSimon} that $f(\calo)g(\calo) = (fg)(\calo)$ and that $(f^*)(\calo) = f(\calo)^\dagger$.

\begin{thm}[\cite{Jones-vNalg}, page 19]
	\label{thm:fourunitaries}
	Let $M$ be a von Neumann algebra. Any operator in $M$ is a linear combination of four unitary operators in $M$.
\end{thm}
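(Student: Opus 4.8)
The plan is to reduce to the self-adjoint case and then exhibit an explicit unitary via the functional calculus developed above. First I would observe that any $T \in M$ decomposes as $T = A + iB$ with $A := (T + T^\dagger)/2$ and $B := (T - T^\dagger)/(2i)$, both self-adjoint and both in $M$ (since $M$ is an algebra closed under hermitian conjugation). So it suffices to write an arbitrary self-adjoint element of $M$ as a linear combination of two unitaries in $M$; applying this separately to $A$ and to $B$ then expresses $T$ as a linear combination of four unitaries in $M$.

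Given a self-adjoint $A \in M$, by rescaling we may assume $\|A\| \leq 1$, so that $\sigma(A) \subseteq [-1,1]$ using the fact that $\|A\| = \sup_{\lambda \in \sigma(A)} |\lambda|$. The function $f(x) = \sqrt{1 - x^2}$ then lies in $C(\sigma(A))$, so the functional calculus of Section 2.1 produces a bounded operator $B := f(A) = \phi_A(f)$. I would then check that $B$ is self-adjoint (since $f$ is real-valued, $(f^*)(A) = f(A)^\dagger$ gives $B^\dagger = B$), that $B^2 = I - A^2$ (using $f(A)g(A) = (fg)(A)$ together with the polynomial calculus applied to $1 - x^2$), and that $B$ commutes with $A$ (both are norm-limits of polynomials in $A$).

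The key point — and the step I expect to require the most care in the write-up — is that $B = f(A)$ actually lies in $M$, not merely in $\calb(\calh)$. This holds because the image of the functional-calculus map $\phi_A$ is the norm-closure of the algebra of polynomials in $A$ (by the density of polynomials in $C(\sigma(A))$ cited above), every polynomial in $A$ lies in $M$ since $M$ is an algebra containing the identity, and $M$ is norm-closed because $M = M^{\prime\prime}$ and commutants are closed in the norm topology (indeed already in the weak operator topology). Hence $B \in M$.

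Finally I would set $U := A + iB$. Since $A$ and $B$ are commuting self-adjoint elements of $M$ with $A^2 + B^2 = I$, a direct computation gives $U^\dagger U = U U^\dagger = A^2 + B^2 = I$, so $U$ is a unitary in $M$, and $A = \tfrac{1}{2}(U + U^\dagger)$. Combining with the first paragraph, $T$ is a linear combination (with coefficients absorbing the rescalings) of the four unitaries $U_1, U_1^\dagger, U_2, U_2^\dagger \in M$ built from $A$ and $B$ respectively, which is the claim.
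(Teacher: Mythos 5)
Your proposal is correct and follows essentially the same route as the paper's proof: split $T$ into its self-adjoint real and imaginary parts, then write each self-adjoint element (after rescaling) as $\tfrac{1}{2}(U+U^\dagger)$ with $U = A + i\sqrt{1-A^2}$ obtained from the continuous functional calculus. You in fact supply a detail the paper leaves implicit, namely that $f(A)\in M$ because $M$, being a double commutant, is norm-closed and contains all polynomials in $A$; and your use of $\|A\|\leq 1$ (rather than the paper's strict inequality) is harmless since $\sqrt{1-x^2}$ is continuous on all of $[-1,1]$.
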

\begin{proof}
	Let $\calo \in M$. We may write\[ \calo = \frac{1}{2}(\calo + \calo^\dagger) -\frac{i}{2} (i(\calo - \calo^\dagger)). \]
	This shows that $\calo$ may be written as a linear combination of two self-adjoint operators in $M$. Next, let $\mathcal{Q} \in M$ be a self-adjoint operator that satisfies $||\mathcal{Q}|| < 1$. The condition $||\mathcal{Q}|| < 1$ is important because the function $f(x) = \sqrt{1 - x^2}$ is $\mathbb{R}$-valued and continuous only for $|x| < 1$. Define $U := \mathcal{Q} + i \sqrt{1 - \mathcal{Q}^2}$. Then $U$ is unitary, $U \in M$, and $\mathcal{Q} =  \frac{U + U^\dagger}{2}$.
 \end{proof}

\subsection{Unbounded operators}
Unbounded operators are generically not defined on the entire Hilbert space. The domain of an operator $\calo$ is denoted by $D(\calo)$. The definition of $\calo^\dagger$ is subtle when $\calo$ is unbounded, as $\calo^\dagger$ may not be defined on the entire Hilbert space.
\begin{defn}
	A densely defined operator $\calo$ is \emph{closed} when $\calo (\lim_{n \rightarrow \infty} \ket{\psi_n}) = \lim_{n \rightarrow \infty} \calo \ket{\psi_n}$ whenever both limits exist.
\end{defn}
\begin{defn}
	Let $\calo$ be a densely defined operator on $\calh$. The domain of the adjoint $\calo^\dagger$ is defined by
	\[ D(\calo^\dagger) = \{\ket{\phi} : \exists \ket{\eta} \in \calh \text{ such that } \braket{\phi|\calo|\psi}= \braket{\eta|\psi} \quad \forall \ket{\psi} \in D(\calo) \}. \]
	For $\ket{\phi} \in D(\calo^\dagger)$ there is precisely one $\ket{\eta}$ that satisfies the above criteron. We define
	\[ \calo^\dagger \ket{\phi} := \ket{\eta}. \]
\end{defn}
\begin{thm}[\cite{ReedSimon}, page 252]
	If $\calo$ is a densely defined operator on $\calh$, then $\calo^\dagger$ is closed. If $\calo$ is closed, $D(\calo^\dagger)$ is dense in $\calh$.
\end{thm}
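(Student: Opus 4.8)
The plan is to argue entirely in terms of graphs of operators inside the direct sum Hilbert space $\calh \oplus \calh$. For a densely defined operator $\calo$ write $\Gamma(\calo) = \{(\ket{\psi}, \calo\ket{\psi}) : \ket{\psi} \in D(\calo)\} \subset \calh \oplus \calh$ for its graph; unwinding the definition of closedness given above, $\calo$ is closed precisely when $\Gamma(\calo)$ is a closed subspace. I would introduce the unitary $V : \calh \oplus \calh \to \calh \oplus \calh$ given by $V(\ket{\phi}, \ket{\psi}) = (-\ket{\psi}, \ket{\phi})$, and first establish the identity $\Gamma(\calo^\dagger) = (V\Gamma(\calo))^{\perp}$. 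This is just a translation of the definition of the adjoint: a pair $(\ket{f}, \ket{g})$ is orthogonal to every element $(-\calo\ket{\psi}, \ket{\psi})$ of $V\Gamma(\calo)$ exactly when $\braket{f|\calo|\psi} = \braket{g|\psi}$ for all $\ket{\psi} \in D(\calo)$, which by definition says $\ket{f} \in D(\calo^\dagger)$ and (by uniqueness of the vector $\ket{\eta}$ in the definition of $\calo^\dagger$) that $\calo^\dagger \ket{f} = \ket{g}$.

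The first assertion of the theorem then drops out immediately: an orthogonal complement in a Hilbert space is always a closed subspace, so $\Gamma(\calo^\dagger) = (V\Gamma(\calo))^{\perp}$ is closed, i.e. $\calo^\dagger$ is closed. (If one insists that ``closed'' also include being densely defined, as in the definition above, then that part of the statement is exactly the content of the second assertion.)

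For the second assertion, assume $\calo$ is closed, so $\Gamma(\calo)$ is a closed subspace and, $V$ being unitary, $V\Gamma(\calo)$ is closed as well. I would proceed by contradiction: suppose $D(\calo^\dagger)$ is not dense, so there is a nonzero $\ket{\chi}$ with $\ket{\chi} \perp D(\calo^\dagger)$. Then $(\ket{\chi}, 0) \in \calh \oplus \calh$ is orthogonal to $\Gamma(\calo^\dagger)$, since $\braket{(\ket{\chi},0) \,|\, (\ket{f}, \calo^\dagger\ket{f})} = \braket{\chi|f} = 0$ for every $\ket{f} \in D(\calo^\dagger)$. Hence $(\ket{\chi},0) \in \Gamma(\calo^\dagger)^{\perp} = \big((V\Gamma(\calo))^{\perp}\big)^{\perp} = \overline{V\Gamma(\calo)} = V\Gamma(\calo)$, using the identity from the first step, the fact that the double orthogonal complement of a subspace is its closure, and the closedness of $V\Gamma(\calo)$. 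But membership in $V\Gamma(\calo)$ means $(\ket{\chi}, 0) = (-\calo\ket{\psi}, \ket{\psi})$ for some $\ket{\psi} \in D(\calo)$, which forces $\ket{\psi} = 0$ and then $\ket{\chi} = -\calo\ket{\psi} = 0$, contradicting $\ket{\chi} \neq 0$. Therefore $D(\calo^\dagger)$ is dense.

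I do not expect a genuine obstacle, as this is a classical fact; the points requiring care are keeping the sign conventions in $V$ consistent with the graph identity, and checking that the density argument does not secretly presuppose $D(\calo^\dagger)$ dense. It does not: the identity $\Gamma(\calo^\dagger) = (V\Gamma(\calo))^{\perp}$ holds as an equality of subsets regardless of how small $D(\calo^\dagger)$ is, and the only structural input is that $\calo$ being closed makes $V\Gamma(\calo)$ a closed subspace, so that $S^{\perp\perp} = S$ may be applied to it.
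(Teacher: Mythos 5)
Your proof is correct: the identity $\Gamma(\calo^\dagger) = (V\Gamma(\calo))^{\perp}$ with $V(\ket{\phi},\ket{\psi}) = (-\ket{\psi},\ket{\phi})$ is established accurately (the density of $D(\calo)$ guaranteeing uniqueness of $\ket{\eta}$, hence that $\calo^\dagger$ is single-valued), and both assertions follow exactly as you argue. The paper itself offers no proof, citing \cite{ReedSimon} for this statement, and your graph-theoretic von Neumann argument is precisely the standard proof given in that reference, so there is nothing to compare beyond noting the agreement.
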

\begin{defn}
	A densely defined operator $\calo$ is \emph{self-adjoint} when $\calo = \calo^\dagger$. In particular, $D(\calo) = D(\calo^\dagger)$.
\end{defn}
\begin{defn}
	A densely defined operator is \emph{positive}  when $\braket{\psi|\calo|\psi} \ge 0 \quad \forall \ket{\psi} \in D(\calo)$.
\end{defn}

\begin{defn}
	Let $\calo$ be a closed operator on a Hilbert space $\calh$. $\lambda \in \mathbb{C}$ is in the \emph{resolvent set} of $\calo$ if $\lambda I - \calo$ is a bijection of $D(\calo)$ onto $\calh$. The \emph{spectrum} of $\calo$, denoted $\sigma(\calo)$, is defined to be the set of all complex numbers that are not in the resolvent set of $\calo$.
\end{defn}
\begin{thm}
	Let $\calo$ be a self-adjoint positive operator. Then the spectrum of $\calo$ is a subset of $[0,\infty)$.   
\end{thm}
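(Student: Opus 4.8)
The plan is to show directly that every $\lambda \in \mathbb{C}\setminus[0,\infty)$ belongs to the resolvent set of $\calo$, i.e. that $\lambda I - \calo$ is a bijection from $D(\calo)$ onto $\calh$. Write $\lambda = a + ib$ with $a,b$ real. The first step is a coercivity bound. For $\ket{\psi}\in D(\calo)$, expanding $\|(\calo - \lambda I)\ket{\psi}\|^2$ and using that $\calo - aI$ is self-adjoint (so that its cross terms with $-ib\ket{\psi}$ cancel, both being real) gives $\|(\calo-\lambda I)\ket{\psi}\|^2 = \|(\calo - aI)\ket{\psi}\|^2 + b^2\|\ket{\psi}\|^2$. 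When $b\neq 0$ this already yields $\|(\calo-\lambda I)\ket{\psi}\| \geq |b|\,\|\ket{\psi}\|$. When $b = 0$ and $a < 0$, positivity of $\calo$ together with Cauchy--Schwarz gives $\|(\calo - aI)\ket{\psi}\|\,\|\ket{\psi}\| \geq \braket{\psi|(\calo - aI)|\psi} = \braket{\psi|\calo|\psi} + |a|\,\|\ket{\psi}\|^2 \geq |a|\,\|\ket{\psi}\|^2$, hence $\|(\calo - \lambda I)\ket{\psi}\| \geq |a|\,\|\ket{\psi}\|$. In both cases there is a constant $c(\lambda) > 0$ with $\|(\calo-\lambda I)\ket{\psi}\| \geq c(\lambda)\|\ket{\psi}\|$ for all $\ket{\psi}\in D(\calo)$; since $\bar\lambda$ also lies outside $[0,\infty)$, the same kind of bound holds with $\lambda$ replaced by $\bar\lambda$.

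The second step converts this bound into the three facts that make $\lambda I - \calo$ a bijection. Injectivity of $\calo - \lambda I$ is immediate from $c(\lambda) > 0$. For closed range, if $(\calo-\lambda I)\ket{\psi_n} \to \ket{\eta}$ then $\{\ket{\psi_n}\}$ is Cauchy by the bound, so $\ket{\psi_n} \to \ket{\psi}$ for some $\ket{\psi}\in\calh$; then $\calo\ket{\psi_n} = (\calo-\lambda I)\ket{\psi_n} + \lambda\ket{\psi_n} \to \ket{\eta} + \lambda\ket{\psi}$, and since $\calo$ is closed, $\ket{\psi}\in D(\calo)$ with $(\calo-\lambda I)\ket{\psi} = \ket{\eta}$, so $\mathrm{Ran}(\calo - \lambda I)$ is closed. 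For dense range, I would use that the orthogonal complement of the range of a densely defined operator is the kernel of its adjoint, $\mathrm{Ran}(\calo - \lambda I)^\perp = \ker\!\big((\calo - \lambda I)^\dagger\big)$, which follows straight from the definition of the adjoint in the excerpt. Since $\calo$ is self-adjoint and $\lambda I$ is bounded, $(\calo - \lambda I)^\dagger = \calo - \bar\lambda I$, and the coercivity bound for $\bar\lambda$ forces $\ker(\calo - \bar\lambda I) = \{0\}$. Hence the range is dense; being also closed, it equals all of $\calh$. Combined with injectivity, $\lambda I - \calo$ is a bijection of $D(\calo)$ onto $\calh$, so $\lambda$ lies in the resolvent set. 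As $\lambda \in \mathbb{C}\setminus[0,\infty)$ was arbitrary, this gives $\sigma(\calo)\subseteq[0,\infty)$.

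The only places where the unbounded nature of $\calo$ genuinely enters are the identification $\mathrm{Ran}(\calo - \lambda I)^\perp = \ker\!\big((\calo-\lambda I)^\dagger\big)$ and the formula $(\calo - \lambda I)^\dagger = \calo - \bar\lambda I$; these are the steps to state carefully, but both are routine consequences of the definition of the adjoint recalled earlier. Everything else is the standard packaging ``coercive estimate $\Rightarrow$ injective with closed range,'' specialized to the two regimes $\mathrm{Im}\,\lambda \neq 0$ and $\lambda$ real and negative, so I do not anticipate a serious obstacle.
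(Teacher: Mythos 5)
Your proof is correct and follows essentially the same route as the paper: show every $\lambda \in \mathbb{C}\setminus[0,\infty)$ lies in the resolvent set via a coercivity bound (split into $\mathrm{Im}\,\lambda \neq 0$ and $\lambda < 0$ real), then combine injectivity, closed range from closedness of $\calo$, and dense range. The only differences are cosmetic: you obtain the negative-real bound via Cauchy--Schwarz rather than expanding the square, and you make explicit, via $\mathrm{Ran}(\calo-\lambda I)^\perp = \ker(\calo - \bar\lambda I)$, the dense-range step that the paper leaves as ``one may show.''
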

\begin{proof}
For any $\ket{\chi} \in D(\calo)$ and any $\lambda = \lambda_1 + i \lambda_2$ for $\lambda_1,\lambda_2 \in \mathbb{R}$, note that\footnote{To be explicit, we have that
\[ \braket{(\calo - \lambda I)\chi|(\calo - \lambda I)\chi} = \braket{(\calo - \lambda_1 I)\chi|(\calo - \lambda_1 I)\chi} + \braket{\lambda _2\chi| \lambda_2 \chi} + i \lambda_2 \braket{\chi|(\calo - \lambda_1 I )\chi} - i \lambda_2 \braket{ (\calo - \lambda_1I)\chi |\chi}.\] The last two terms cancel because $\calo$ is self-adjoint and $\lambda_1$ is real.}
\begin{equation}
||(\calo - \lambda I)\ket{\chi}||^2 = \lambda_2^2 ||\ket{\chi}||^2 + ||(\calo - \lambda_1) \ket{\chi}||^2 \ge \lambda_2^2 ||\ket{\chi}||^2.
\end{equation}
Let us consider the case when $\lambda_2 \neq 0$. Then $\ker(\calo - \lambda I) = \{0\}$ so that $\calo - \lambda I$ is an injection. Using the fact that $D(\calo)$ is dense in $\calh$, one may show that the orthocomplement of the range of $(\calo - \lambda I)$ is trivial, implying that the range of $(\calo - \lambda I)$ is dense in $\calh$. Then, the previous equation implies that if $\{\ket{\chi_n} \} \in D(\calo)$ is a sequence such that $\lim_{n \rightarrow \infty} (\calo - \lambda I)\ket{\chi_n}$ exists, then $\lim_{n \rightarrow \infty} \ket{\chi_n}$ also exists. Since $\calo$ is a closed operator, the range of $(\calo - \lambda I)$ is also closed. Thus, $(\calo - \lambda I)$ is a bijection from $D(\calo)$ onto $\calh$, demonstrating that $\lambda$ is in the resolvent set of $\calo$.

Now, consider the case when $\lambda \in \mathbb{R}$, $\lambda < 0$. For any $\ket{\chi} \in D(\calo)$,
\begin{equation}
||(\calo - \lambda I)\ket{\chi}||^2 = |\lambda|^2 ||\ket{\chi}||^2 - 2 \braket{\chi|\calo|\chi}  \lambda + ||\calo\ket{\chi}||^2.
\end{equation}
As $\calo$ is a positive operator,
\begin{equation}
||(\calo - \lambda I)\ket{\chi}||^2 \geq |\lambda|^2 ||\ket{\chi}||^2 .
\end{equation}
The same logic used in the previous case establishes that $\lambda$ is in the resolvent set of $\calo$. Hence, the spectrum of $\calo$ must be a subset of $[0,\infty)$. 
\end{proof}

\begin{thm}[\cite{ReedSimon}, page 316]
Let $\calo$ be a closed operator. Then $D(\calo^\dagger \calo) = \{\ket{\psi} : \ket{\psi} \in D(\calo), \ \calo\ket{\psi} \in D(\calo^\dagger) \}$ is dense in the Hilbert space and $\calo^\dagger \calo$ is self-adjoint and positive.
\end{thm}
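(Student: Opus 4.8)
The plan is to run von Neumann's classical graph argument on the direct-sum Hilbert space $\calh \oplus \calh$. Equip $\calh \oplus \calh$ with the inner product $\langle (a,b),(c,d)\rangle = \langle a,c\rangle + \langle b,d\rangle$, and let $V : \calh\oplus\calh \to \calh\oplus\calh$ be the unitary $V(a,b) = (-b,a)$. Write $\Gamma(\calo) = \{(\psi,\calo\psi) : \psi \in D(\calo)\}$ for the graph of $\calo$; because $\calo$ is closed, $\Gamma(\calo)$ is a closed subspace. First I would check, directly from the definition of the adjoint, that $V\Gamma(\calo^\dagger) = \Gamma(\calo)^\perp$: a pair $(a,b)$ is orthogonal to every $(\psi,\calo\psi)$ precisely when $\langle b, \calo\psi\rangle = \langle -a,\psi\rangle$ for all $\psi \in D(\calo)$, i.e. exactly when $b \in D(\calo^\dagger)$ and $\calo^\dagger b = -a$, which says $(a,b) = (-\calo^\dagger b, b) \in V\Gamma(\calo^\dagger)$. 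This step uses only that $\calo$ is densely defined; closedness enters just to make $\Gamma(\calo)$ closed, so that we get the orthogonal decomposition $\calh\oplus\calh = \Gamma(\calo)\oplus V\Gamma(\calo^\dagger)$.

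Next I would exploit this decomposition. Given $h \in \calh$, write $(h,0) = (\psi,\calo\psi) + (-\calo^\dagger\eta,\eta)$ with $\psi\in D(\calo)$ and $\eta\in D(\calo^\dagger)$. Comparing second components gives $\eta = -\calo\psi$, so in particular $\calo\psi \in D(\calo^\dagger)$, i.e. $\psi \in D(\calo^\dagger\calo)$; comparing first components gives $h = \psi + \calo^\dagger\calo\psi = (I+\calo^\dagger\calo)\psi$. Thus $I + \calo^\dagger\calo$ maps $D(\calo^\dagger\calo)$ onto $\calh$. It is also injective, since $(I+\calo^\dagger\calo)\psi = 0$ forces $\|\psi\|^2 + \|\calo\psi\|^2 = \langle\psi,(I+\calo^\dagger\calo)\psi\rangle = 0$, where I use $\langle\calo^\dagger\calo\psi,\psi\rangle = \langle\calo\psi,\calo\psi\rangle$, the defining property of the adjoint applied to $\calo\psi \in D(\calo^\dagger)$. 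The same identity gives positivity immediately: $\langle\psi,\calo^\dagger\calo\psi\rangle = \|\calo\psi\|^2 \ge 0$ for all $\psi\in D(\calo^\dagger\calo)$.

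With $I + \calo^\dagger\calo : D(\calo^\dagger\calo)\to\calh$ a bijection in hand, density and self-adjointness follow. For density, if $h \perp D(\calo^\dagger\calo)$, write $h = (I+\calo^\dagger\calo)\psi$ with $\psi\in D(\calo^\dagger\calo)$; pairing with $\psi$ gives $0 = \langle h,\psi\rangle = \|\psi\|^2 + \|\calo\psi\|^2$, so $\psi = 0$ and $h = 0$. For self-adjointness, set $A := \calo^\dagger\calo$; it is symmetric, since $\langle A\psi,\phi\rangle = \langle\calo\psi,\calo\phi\rangle = \langle\psi,A\phi\rangle$ on $D(A)$, hence $A \subseteq A^\dagger$, and it remains to show $D(A^\dagger)\subseteq D(A)$. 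I would first check $I + A^\dagger$ is injective: if $A^\dagger\xi = -\xi$ then $\langle\xi,(I+A)\psi\rangle = \langle\xi,\psi\rangle + \langle A^\dagger\xi,\psi\rangle = 0$ for all $\psi\in D(A)$, and surjectivity of $I+A$ forces $\xi = 0$. Now given $\phi\in D(A^\dagger)$, use surjectivity of $I+A$ to pick $\psi\in D(A)$ with $(I+A)\psi = (I+A^\dagger)\phi$; since $A\subseteq A^\dagger$ this reads $(I+A^\dagger)(\phi - \psi) = 0$, so injectivity gives $\phi = \psi\in D(A)$, and therefore $A = A^\dagger$.

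I expect the self-adjointness step to be the main obstacle, since it is the one place where the distinction between a symmetric and a self-adjoint operator has to be handled carefully; the resolution is that the bijectivity of $I + \calo^\dagger\calo$ — itself the payoff of the graph decomposition — is exactly what upgrades symmetry to self-adjointness. A secondary point to get right is the identity $V\Gamma(\calo^\dagger) = \Gamma(\calo)^\perp$: one must verify it is an equality rather than a mere inclusion, and keep track of which hypotheses are used where, noting in particular that closedness of $\calo$ is invoked only to guarantee $\Gamma(\calo)$ is closed so that the orthogonal splitting of $\calh\oplus\calh$ is legitimate.
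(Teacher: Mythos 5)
Your proof is correct and complete: the graph identity $V\Gamma(\calo^\dagger)=\Gamma(\calo)^\perp$, the resulting surjectivity and injectivity of $I+\calo^\dagger\calo$, and the upgrade from symmetry to self-adjointness are all carried out accurately, with density established before invoking $(\calo^\dagger\calo)^\dagger$. The paper does not prove this statement but simply cites Reed--Simon, and your argument is exactly the classical von Neumann graph-decomposition proof given there, so there is nothing to reconcile.
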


\subsection{The spectral theorem for unbounded operators}

In this section, we closely follow \cite{ReedSimon} (pages 262-263), to which we refer the reader for more details on the spectral theorem. Note that a projection $P \in \calb(\calh)$ is idempotent and hermitian i.e. $P = P^2 = P^\dagger$. 

\begin{defn}
	A \emph{projection valued measure} assigns a projection $P_\Omega$ to every Borel set $\Omega \subset \mathbb{R}$ such that 
\begin{itemize}
	\item $P_{\emptyset} = 0$, $P_{(-\infty,\infty)} = I$
	\item $P_{\Omega_1} P_{\Omega_2} = P_{\Omega_1 \cap \Omega_2}$
	\item If $\Omega = \cup_{n = 1}^\infty \Omega_n$ with $\Omega_n \cap \Omega_{m} = \emptyset$ if $n \neq m$, then $P_\Omega$ is a strong limit of $\sum_{n = 1}^N P_{\Omega_n}$. 
\end{itemize}
\end{defn}

Given any vector $\ket{\psi} \in \calh$, $\braket{\psi|P_\Omega|\psi}$ defines an integration measure for Borel functions, which we will use in Definition \ref{def:log}. 

\begin{thm}[{\bf Spectral Theorem} \cite{ReedSimon}, page 263]
	There is a one-to-one correspondence between self-adjoint operators $\calo$ and projection valued measures $P^\calo_\Omega$. The correspondence is given by
	\[ \calo = \int_\mathbb{R} \lambda \, d(P^\calo_\lambda). \]
	The notation means that we are integrating the function $f(\lambda) = \lambda$ on $\mathbb{R}$ with the projection-valued measure given by $P^\calo_\Omega$. The integral converges strongly.\footnote{For any $\ket{\psi} \in D(\calo)$, the integral $\int_\mathbb{R} \lambda d(P^\calo_\lambda \ket{\psi})$ with vector-valued measure $P^\calo_\Omega \ket{\psi}$ converges in the Hilbert space norm to $\calo\ket{\psi}$. The integral does not converge for $\ket{\psi} \notin D(\calo)$.}
\end{thm}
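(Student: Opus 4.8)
The plan is to reduce the statement to the spectral theorem for \emph{bounded} self-adjoint operators, which in turn follows from the continuous functional calculus $\phi_\calo$ built above together with the Riesz--Markov representation theorem, and then to pass from the bounded to the unbounded case via the Cayley transform.

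For a bounded self-adjoint $\calo \in \calb(\calh)$, I first extend $\phi_\calo : C(\sigma(\calo)) \to \calb(\calh)$ to all bounded Borel functions on $\sigma(\calo)$. For each $\ket{\psi} \in \calh$ the functional $f \mapsto \braket{\psi|\phi_\calo(f)|\psi}$ is linear, positive (because $\phi_\calo(f) = \phi_\calo(\sqrt{f})^\dagger \phi_\calo(\sqrt{f}) \ge 0$ when $f \ge 0$, using $f(\calo)g(\calo) = (fg)(\calo)$ and $(f^*)(\calo) = f(\calo)^\dagger$), and of norm $\|\ket{\psi}\|^2$, so by Riesz--Markov there is a unique finite positive Borel measure $\mu_\psi$ with $\braket{\psi|\phi_\calo(f)|\psi} = \int f\, d\mu_\psi$; polarization produces complex measures $\mu_{\phi,\psi}$. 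For a bounded Borel $g$ I set $\braket{\phi|g(\calo)|\psi} := \int g\, d\mu_{\phi,\psi}$; the bound $|\int g\, d\mu_{\phi,\psi}| \le \|g\|_\infty \|\ket{\phi}\|\,\|\ket{\psi}\|$ identifies this as the sesquilinear form of a bounded operator $g(\calo)$ with $\|g(\calo)\| \le \|g\|_\infty$, and dominated convergence shows $g \mapsto g(\calo)$ is linear, multiplicative, $*$-preserving, and carries bounded pointwise limits of uniformly bounded sequences to strong limits. Then I define $P^\calo_\Omega := \chi_\Omega(\calo)$ for Borel $\Omega \subset \mathbb{R}$: since $\chi_\Omega^2 = \chi_\Omega = \overline{\chi_\Omega}$ each $P^\calo_\Omega$ is a projection, $P^\calo_\emptyset = 0$ and $P^\calo_\mathbb{R} = I$, $P^\calo_{\Omega_1} P^\calo_{\Omega_2} = P^\calo_{\Omega_1 \cap \Omega_2}$ follows from multiplicativity, and strong countable additivity follows from the convergence property above applied to $\chi_{\cup_n \Omega_n} = \lim_N \sum_{n=1}^N \chi_{\Omega_n}$. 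Uniformly approximating $f(\lambda) = \lambda$ on the compact set $\sigma(\calo)$ by simple functions and using $\|g(\calo)\| \le \|g\|_\infty$ then gives $\calo = \int_\mathbb{R} \lambda\, dP^\calo_\lambda$ with strong convergence, and this assignment is invertible since the measures $\mu_\psi$ are recovered from $\calo$.

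For a general self-adjoint $\calo$, I use the Cayley transform. Taking $\lambda_1 = 0$, $\lambda_2 = \pm 1$ in the identity from the proof that a positive self-adjoint operator has spectrum in $[0,\infty)$ gives $\|(\calo \pm iI)\ket{\chi}\|^2 = \|\calo\ket{\chi}\|^2 + \|\ket{\chi}\|^2$, and the argument there shows $\calo \pm iI$ are bijections of $D(\calo)$ onto $\calh$. Hence $U := (\calo - iI)(\calo + iI)^{-1}$ is a well-defined bounded operator; the norm identity shows $U$ is isometric with dense range, so unitary, and the identities $(I-U)(\calo+iI)\ket{\chi} = 2i\ket{\chi}$ and $(I+U)(\calo+iI)\ket{\chi} = 2\calo\ket{\chi}$ show that $I - U$ is injective with $\operatorname{Ran}(I-U) = D(\calo)$ and $\calo = i(I+U)(I-U)^{-1}$, so in particular $\{1\}$ carries no spectral mass for $U$. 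Applying the bounded spectral theorem to the commuting bounded self-adjoint operators $\operatorname{Re} U$ and $\operatorname{Im} U$ (equivalently, developing a continuous functional calculus for the normal operator $U$) produces a projection-valued measure on the unit circle, which I push forward along the homeomorphism $z \mapsto i(1+z)/(1-z)$ from $S^1 \setminus \{1\}$ onto $\mathbb{R}$ to obtain $P^\calo_\Omega$. Using the functional-calculus identities for $U$ one checks that $\ket{\psi} \in D(\calo)$ iff $\int_\mathbb{R} \lambda^2\, d\braket{\psi|P^\calo_\lambda|\psi} < \infty$ and that $\int_\mathbb{R} \lambda\, d(P^\calo_\lambda \ket{\psi}) = \calo\ket{\psi}$ on that domain, truncating to $\{|\lambda| \le N\}$ where the bounded calculus applies. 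For uniqueness, any projection-valued measure $Q$ with $\calo = \int \lambda\, dQ_\lambda$ determines the resolvents $(\calo - zI)^{-1} = \int (\lambda - z)^{-1}\, dQ_\lambda$ for $z \notin \mathbb{R}$, whence the Stieltjes inversion formula recovers each $\braket{\psi|Q_\Omega|\psi}$, and by polarization $Q = P^\calo$.

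The main obstacle I anticipate is the domain bookkeeping in the unbounded step: showing that $\int_\mathbb{R} \lambda\, dP^\calo_\lambda$ has domain \emph{exactly} $D(\calo)$ and agrees with $\calo$ there, and handling the point $z = 1$ on the circle (equivalently $\lambda \to \infty$) where the Cayley map degenerates --- this is precisely where unboundedness enters, and a careless argument would wrongly produce a bounded operator. The remaining delicate point is the strong countable additivity of $\Omega \mapsto \chi_\Omega(\calo)$ in the bounded step, but that is routine measure theory once the scalar measures $\mu_\psi$ are available.
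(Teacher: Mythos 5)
The paper does not prove this statement: it is quoted as standard background from \cite{ReedSimon} (the text explicitly refers the reader to pages 262--263 for details), so there is no internal proof to compare against. Your outline is the standard argument and is essentially the development of the cited source: the bounded case via the continuous functional calculus, Riesz--Markov, and the Borel calculus giving $P^\calo_\Omega=\chi_\Omega(\calo)$, and the unbounded case via von Neumann's Cayley-transform reduction, with uniqueness recovered from the resolvents by Stieltjes inversion. The points you lean on without full proof --- multiplicativity of the Borel functional calculus (a double-limit or monotone-class argument beyond plain dominated convergence), the spectral theorem for the unitary $U$ (which requires a normal-operator or joint calculus for $\operatorname{Re}U$ and $\operatorname{Im}U$, not merely the single bounded self-adjoint case you constructed), and the domain identification $D(\calo)=\{\ket{\psi}:\int_\mathbb{R}\lambda^2\,d\braket{\psi|P^\calo_\lambda|\psi}<\infty\}$ together with $\int_\mathbb{R}\lambda\,d(P^\calo_\lambda\ket{\psi})=\calo\ket{\psi}$ via truncation to $\{|\lambda|\le N\}$ --- are exactly the standard nontrivial steps, and you have correctly flagged them; with those filled in, the proof is complete and correct.
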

Intuitively, $P_\Omega^\calo$ is the projection onto the ``eigenspace'' spanned by all ``eigenvalues'' in $\Omega$. We will need that $P^\calo_{(-\infty,\infty)} = P^\calo_{\sigma(\calo)}$, where $\sigma(\calo)$ denotes the spectrum of $\calo$. For the details on how the spectral projections associated with a self-adjoint operator $\calo$ are constructed, see Theorem VIII.4 and discussions afterwards in Section VIII.3 of \cite{ReedSimon}. 

\begin{defn}
	\label{def:log}
	Given a self-adjoint positive operator $\calo$,  the diagonal matrix element of $\log \calo$ is given by
	\[ \braket{\psi| \log \calo|\psi} := \int_{0}^\infty \log \lambda \, d (\braket{\psi|P^\calo_\lambda|\psi}), \]
	for all $\ket{\psi} \in \calh$ such that the above integral converges, where $P^\calo_\Omega$ is the unique projection valued measure associated with $\calo$ by the spectral theorem.
\end{defn}
Note that $\log x$ is continuous for $x \in (0,\infty)$. Thus, $\log x$ is a Borel function. One can define a self-adjoint operator using any real-valued Borel function on $\mathbb{R}$. See page 264 of \cite{ReedSimon}.
\begin{thm}
	\label{thm:saturation}
	Let $\calo$ be a self-adjoint positive operator. For all $\ket{\psi} \in D(\calo)$ such that $\braket{\psi|\log \calo|\psi}$ is defined,
	\[ \braket{\psi|\log \calo|\psi} \leq \braket{\psi|\calo|\psi} - \braket{\psi|\psi}, \]
	and the inequality is saturated if and only if $\calo \ket{\psi} = \ket{\psi}$. 
\end{thm}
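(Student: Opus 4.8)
The plan is to reduce the whole statement to the elementary scalar inequality $\log\lambda \le \lambda-1$ on $(0,\infty)$, with equality precisely at $\lambda=1$, and then integrate it against the spectral measure of $\calo$ in the state $\ket{\psi}$. Concretely, I would fix $\ket{\psi}\in D(\calo)$ and introduce the finite positive Borel measure $\mu$ on $\mathbb{R}$ given by $\mu(\Omega)=\braket{\psi|P^\calo_\Omega|\psi}$, where $P^\calo$ is the projection valued measure furnished by the spectral theorem. Its total mass is $\mu(\mathbb{R})=\braket{\psi|P^\calo_{(-\infty,\infty)}|\psi}=\braket{\psi|\psi}$, and since $\calo$ is positive its spectrum lies in $[0,\infty)$, so $\mu(\sigma(\calo))=\braket{\psi|P^\calo_{\sigma(\calo)}|\psi}=\braket{\psi|\psi}=\mu(\mathbb{R})$ shows $\mu$ is carried by $[0,\infty)$. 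By the spectral theorem $\braket{\psi|\calo|\psi}=\int_{[0,\infty)}\lambda\,d\mu(\lambda)$ (finite because $\ket{\psi}\in D(\calo)$) and $\braket{\psi|\psi}=\int_{[0,\infty)}1\,d\mu(\lambda)$, while by Definition \ref{def:log} one has $\braket{\psi|\log\calo|\psi}=\int_{[0,\infty)}\log\lambda\,d\mu(\lambda)$. Since $\log\lambda\le\lambda-1\le\lambda$ for $\lambda\ge1$, the positive part of this last integrand is automatically $\mu$-integrable, so the hypothesis that $\braket{\psi|\log\calo|\psi}$ "is defined" forces $\int_{(0,1)}|\log\lambda|\,d\mu<\infty$ and in particular $\mu(\{0\})=0$; hence all three integrals may be taken over $(0,\infty)$, where $\lambda-1-\log\lambda$ is a nonnegative Borel function.

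Then I would simply integrate the pointwise bound term by term:
\[
\braket{\psi|\log\calo|\psi}=\int_{(0,\infty)}\log\lambda\,d\mu(\lambda)\ \le\ \int_{(0,\infty)}(\lambda-1)\,d\mu(\lambda)=\braket{\psi|\calo|\psi}-\braket{\psi|\psi},
\]
which is the claimed inequality (and if $\braket{\psi|\log\calo|\psi}=-\infty$ it holds trivially and cannot be saturated, the right-hand side being finite). For the equality case, the gap between the two sides is exactly $\int_{(0,\infty)}\big(\lambda-1-\log\lambda\big)\,d\mu(\lambda)$; as the integrand is nonnegative and vanishes only at $\lambda=1$, this integral is zero iff $\mu$ is concentrated at $\{1\}$, i.e. iff $\braket{\psi|P^\calo_{\{1\}}|\psi}=\braket{\psi|\psi}$. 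Because $P^\calo_{\{1\}}$ is a projection, $\|P^\calo_{\{1\}}\ket{\psi}\|^2=\braket{\psi|P^\calo_{\{1\}}|\psi}=\|\ket{\psi}\|^2$ forces $P^\calo_{\{1\}}\ket{\psi}=\ket{\psi}$, and then the spectral representation $\calo=\int_{\mathbb{R}}\lambda\,d(P^\calo_\lambda)$ collapses to $\calo\ket{\psi}=1\cdot\ket{\psi}$. Conversely, if $\calo\ket{\psi}=\ket{\psi}$ then $\int(\lambda-1)^2\,d\mu=\|(\calo-I)\ket{\psi}\|^2=0$, so $\mu$ is again concentrated at $\{1\}$ and both sides of the inequality equal $0$; in particular $\braket{\psi|\log\calo|\psi}$ is then defined and equals $0$.

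I do not expect a genuine obstacle here: the content is entirely the convexity estimate $\log\lambda\le\lambda-1$. The only points that need a little care are bookkeeping ones — checking, from the hypothesis that $\braket{\psi|\log\calo|\psi}$ is defined, that the spectral measure $\mu$ has no atom at $0$ so that $\log\lambda$ is $\mu$-a.e. finite and the integrals split cleanly, and translating the measure-theoretic statement "$\mu$ is concentrated at $\{1\}$" into the operator statement $\calo\ket{\psi}=\ket{\psi}$ using that a projection which preserves the norm of a vector must fix that vector exactly.
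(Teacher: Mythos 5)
Your proof is correct and follows essentially the same route as the paper: integrate the scalar inequality $\log\lambda \le \lambda-1$ against the spectral measure $\braket{\psi|P^\calo_\Omega|\psi}$, and characterize saturation by concentration of that measure at $\lambda=1$, upgraded to $\calo\ket{\psi}=\ket{\psi}$ via the fact that a projection preserving the norm of $\ket{\psi}$ must fix it (the paper phrases this last step through saturation of Cauchy--Schwarz, a cosmetic difference). Your explicit treatment of the converse direction and of the integrability bookkeeping is a slight strengthening but does not change the argument.
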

\begin{proof}
Assume $\ket{\psi} \neq 0$.	For $x > 0$, note that $\log x \leq x - 1$. It follows that
	\begin{equation} 
\braket{\psi| \log \calo|\psi} = \int_{0}^\infty \log \lambda \, d (\braket{\psi|P^\calo_\lambda|\psi})\leq \int_{0}^\infty \lambda  \, d (\braket{\psi|P^\calo_\lambda|\psi}) - \int_{0}^\infty 1 \, d (\braket{\psi|P^\calo_\lambda|\psi}).
\label{eq:inequality}
\end{equation}
	The first integral on the right hand side converges because $\ket{\psi} \in D(\calo)$. The second integral converges to $\braket{\psi|\psi}$ because the spectrum of $\calo$ is a subset of $[0,\infty)$, which implies that $P^\calo_{[0,\infty)} = P^\calo_{(-\infty,\infty)} = I$. Hence,
		\begin{equation} 
	\braket{\psi| \log \calo|\psi} \leq \braket{\psi|\calo|\psi} - \braket{\psi|\psi} .
	\end{equation}
As $\log x \leq x - 1$ is only saturated for $x = 1$, the inequality in equation \eqref{eq:inequality} is only saturated when the measure $\braket{\psi|P^\calo_\Omega|\psi}$ is such that $\braket{\psi|P^\calo_\Omega|\psi} = 0$ when $1 \notin \Omega$. If $1 \neq \Omega$, then $\braket{\psi|P^\calo_\Omega|\psi} = \braket{P^\calo_\Omega\psi|P^\calo_\Omega\psi}$ implies that $P^\calo_\Omega \ket{\psi} = 0$. If $1 \in \Omega$, then the fact that $\int_\mathbb{R} 1 d(\braket{\psi|P^\calo_\lambda|\psi}) = \braket{\psi|\psi}$ implies that $ \braket{P^\calo_\Omega \psi|P^\calo_\Omega \psi} = \braket{\psi|P^\calo_\Omega|\psi}  = \braket{\psi|\psi}$. For $1 \in \Omega$, note that the Cauchy-Schwartz inequality $|\braket{\psi|P^\calo_\Omega|\psi}| \leq ||\ket{\psi}|| \cdot||P^\calo_\Omega \ket{\psi}||$ is saturated, which implies that $P_\Omega^\calo\ket{\psi}$ is a multiple of $\ket{\psi}$, and this multiple must be $1$. Thus, for $1 \in \Omega$, $P^\calo_\Omega \ket{\psi} = \ket{\psi}$. This implies that
\begin{equation}
\calo\ket{\psi} = \int_{\mathbb{R}} \lambda \, d(P^\calo_\lambda \ket{\psi}) = \ket{\psi}.
\end{equation}
\end{proof}

\section{Review of Tomita-Takesaki theory}

Previous works on entanglement entropy and AdS/CFT \cite{Jafferis:2015del,Casini2008,DongHarlowWall, casinitestetorroba2016} have used the definition for the relative entropy as $S(\rho,\sigma) = \text{Tr }(\rho \log \rho - \rho \log \sigma)$. Since $S(\rho,\sigma)$ does not increase upon performing a partial trace on $\rho$ and $\sigma$, the relative entropy may be intuitively thought of as a measure of distinguishability between two states. Araki's definition of the relative entropy \cite{Araki} also has a monotonicity property, and it reduces to $S(\rho,\sigma)$ when the Hilbert space is finite-dimensional \cite{Witten:2018zxz}. Hence, we might expect that statements about relative entropy in AdS/CFT can be reformulated for infinite-dimensional Hilbert spaces. 

We want to understand the connection between entanglement wedge reconstruction and the equivalence of bulk and boundary relative entropies in infinite dimensional Hilbert spaces, using Tomita-Takesaki theory. Tomita-Takesaki theory provides us with the relative modular operator which is used to define the relative entropy. In this section, we review properties of the relative modular operator and the definition of the relative entropy, following \cite{Araki,Witten:2018zxz, Jones-vNalg}.

\begin{defn}
	\label{def:cyc}
	A vector $\ket{\Psi} \in \calh$ is said to be {\em cyclic} with respect to a von Neumann algebra $M$ when the set of vectors $\calo\ket{\Psi}$ for $\calo \in M$ is dense in $\calh$. 
\end{defn}
\begin{defn}
	\label{def:sep}
	A vector $\ket{\Psi} \in \calh$ is {\em separating} with respect to a von Neumann algebra $M$ when zero is the only operator in $M$ that annihilates $\ket{\Psi}$. That is, $\calo\ket{\Psi} = 0 \implies \calo = 0$ for $\calo \in M$. 
\end{defn}

Given a von Neumann algebra $M \subset \calb(\calh)$ and a vector $\ket{\Psi} \in \calh$, we may define a map $e_\Psi:M \rightarrow \calh : \calo \mapsto \calo \ket{\Psi}$. $\calh$ is the closure of the image of $e_\Psi$ iff $\ket{\Psi}$ is cyclic with respect to $M$. Also, $\ker e_\Psi = \{0\}$\footnote{In other words, $e_\Psi$ is an injective map.} iff $\ket{\Psi}$ is separating with respect to $M$.

\begin{defn}
Let $\ket{\Psi},\ket{\Phi} \in \calh$ and $M$ be a von Neumann algebra. 
The {\em relative Tomita operator} is the operator $S_{\Psi | \Phi}$ that acts as
\begin{equation*}
S_{\Psi|\Phi} \ket{x} := \ket{y}
\end{equation*}
for any sequence $\{\calo_n\} \in M$ such that the limits $\ket{x} = \lim_{n \rightarrow \infty} \calo_n \ket{\Psi}$ and $\ket{y} = \lim_{n \rightarrow \infty} \calo_n^\dagger \ket{\Phi}$ both exist.
\end{defn}

The relative Tomita operator $S_{\Psi | \Phi}$ is well-defined on a dense subset of the Hilbert space if and only if $\ket{\Psi}$ is cyclic and separating with respect to $M$.\footnote{$S_{\Psi | \Phi}$ is well-defined if and only if $\lim_{n \rightarrow \infty} \calo_n \ket{\Psi} = 0 \implies \lim_{n \rightarrow \infty} \calo_n^\dagger \ket{\Psi} = 0$. See footnote 14 of \cite{Witten:2018zxz} for a proof of why this is true. $S_{\Psi | \Phi}$ is densely defined because $\ket{\Psi}$ is cyclic with respect to $M$.} Note that $S_{\Psi|\Phi}$ is a closed operator.

\begin{thm}[\cite{Jones-vNalg}, page 94]
	Let $\ket{\Psi},\ket{\Phi} \in \calh$ both be cyclic and separating with respect to a von Neumann algebra $M$. Let $S_{\Psi|\Phi}$ and $S^\prime_{\Psi|\Phi}$ be the relative Tomita operators defined with respect to $M$ and its commutant $M^\prime$ respectively. Then
	\begin{equation}
	S_{\Psi|\Phi}^\dagger = S^\prime_{\Psi|\Phi}, \ S_{\Psi|\Phi}^{\prime \, \dagger} = S_{\Psi|\Phi}.
	\end{equation}
\end{thm}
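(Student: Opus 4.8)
The plan is to prove the first relation, $S_{\Psi|\Phi}^\dagger = S'_{\Psi|\Phi}$; the second, $(S'_{\Psi|\Phi})^\dagger = S_{\Psi|\Phi}$, then follows for free by taking adjoints of both sides and using that $S_{\Psi|\Phi}$ is closed and densely defined, so that $S_{\Psi|\Phi}^{\dagger\dagger} = S_{\Psi|\Phi}$. Observe at the outset that $\ket{\Psi}$ is cyclic and separating for $M$ if and only if it is cyclic and separating for $M'$ (cyclicity with respect to an algebra is the same as separating-ness with respect to its commutant), and likewise for $\ket{\Phi}$; hence $S'_{\Psi|\Phi}$ is a well-defined closed operator with $M'\ket{\Psi}$ as a core, and all the adjoints appearing below are meaningful. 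I would prove $S_{\Psi|\Phi}^\dagger = S'_{\Psi|\Phi}$ by establishing the two inclusions $S'_{\Psi|\Phi}\subseteq S_{\Psi|\Phi}^\dagger$ and $S_{\Psi|\Phi}^\dagger\subseteq S'_{\Psi|\Phi}$ in turn.

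The first inclusion is the routine one. For $\calo\in M$ and $\calo'\in M'$, using that operators in $M$ commute with operators in $M'$,
\[
\braket{\calo'\Psi | S_{\Psi|\Phi}\calo\Psi} = \braket{\calo'\Psi | \calo^\dagger\Phi} = \braket{\calo\calo'\Psi | \Phi} = \braket{\calo'\calo\Psi | \Phi} = \braket{\calo\Psi | \calo^{\prime\dagger}\Phi}.
\]
Since $M\ket{\Psi}$ is a core for $S_{\Psi|\Phi}$ and $\calo\ket{\Psi}\mapsto \braket{\calo^{\prime\dagger}\Phi | \calo\Psi}$ is a bounded linear functional of $\calo\ket{\Psi}$ (it is $\braket{\calo^{\prime\dagger}\Phi | \, \cdot \,}$), this shows $\calo'\ket{\Psi}\in D(S_{\Psi|\Phi}^\dagger)$ with $S_{\Psi|\Phi}^\dagger\calo'\ket{\Psi} = \calo^{\prime\dagger}\ket{\Phi}$. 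Thus the closed operator $S_{\Psi|\Phi}^\dagger$ restricts on the core $M'\ket{\Psi}$ to the defining prescription of $S'_{\Psi|\Phi}$, and therefore $S'_{\Psi|\Phi}\subseteq S_{\Psi|\Phi}^\dagger$.

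The reverse inclusion $S_{\Psi|\Phi}^\dagger\subseteq S'_{\Psi|\Phi}$ is the substance of the theorem: it is equivalent to the statement that $M'\ket{\Psi}$ is a core for the closed operator $S_{\Psi|\Phi}^\dagger$, and this is where I expect the real difficulty. It cannot be reached by formal manipulation alone; running the computation above with the roles of $M$ and $M'$ exchanged gives $S_{\Psi|\Phi}\subseteq (S'_{\Psi|\Phi})^\dagger$, but this inclusion and the first one are mutually adjoint, so no combination of them yields equality. I would instead use the standard bounded-vector argument: given $\ket{\xi}\in D(S_{\Psi|\Phi}^\dagger)$, choose $\calo'_n\in M'$ with $\calo'_n\ket{\Psi}\to\ket{\xi}$ and then insert spectral truncations built from $(\calo'_n)^\dagger\calo'_n$ so that the truncated operators remain uniformly bounded while $(\calo'_n)^\dagger\ket{\Phi}$ is forced to converge; one identifies the limit as $S_{\Psi|\Phi}^\dagger\ket{\xi}$ by pairing against $\calo\ket{\Psi}$, $\calo\in M$, through the adjoint identity $\braket{\xi | S_{\Psi|\Phi}\calo\Psi} = \braket{\calo\Psi | S_{\Psi|\Phi}^\dagger\xi}$, and closedness of $S_{\Psi|\Phi}^\dagger$ then puts $\ket{\xi}$ in $D(S'_{\Psi|\Phi})$ with the correct value. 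The one genuinely delicate point --- as opposed to bookkeeping with domains and adjoints --- is arranging the truncations so that the approximating vector and its image converge simultaneously; this is precisely what is carried out on page 94 of \cite{Jones-vNalg}. An alternative that bypasses the relative case entirely is to amplify $M$ by $2\times 2$ matrices, build a cyclic and separating vector out of $\ket{\Psi}$ and $\ket{\Phi}$, and read off $S_{\Psi|\Phi}$ from a corner of an ordinary Tomita operator, reducing to the non-relative statement $S_\Omega^\dagger = S'_\Omega$.
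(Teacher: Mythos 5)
The paper itself offers no proof of this statement: it is imported verbatim from \cite{Jones-vNalg} (page 94) and used as an input, so your attempt has to stand on its own, and on its own it establishes only the easy half. The inclusion $S'_{\Psi|\Phi}\subseteq S_{\Psi|\Phi}^\dagger$ is correct (with the minor caveat that these operators are antilinear, so the relevant adjoint is the antilinear one and the functional you exhibit is conjugate-linear; verifying the identity on the core $M\ket{\Psi}$ and extending by closedness is fine), and so is your observation that the second identity follows from the first by taking adjoints of a closed, densely defined operator. But the reverse inclusion $S_{\Psi|\Phi}^\dagger\subseteq S'_{\Psi|\Phi}$ --- which you yourself identify as the entire content of the theorem --- is not proven: you defer it to page 94 of \cite{Jones-vNalg}, i.e.\ to exactly the citation the paper already makes, and the sketch you offer in its place would not work as stated. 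Choosing an arbitrary sequence $\calo'_n\in M'$ with $\calo'_n\ket{\Psi}\to\ket{\xi}$ uses only cyclicity of $\ket{\Psi}$ for $M'$ and carries no information about $\ket{\eta}=S_{\Psi|\Phi}^\dagger\ket{\xi}$; multiplying such $\calo'_n$ by spectral projections of $(\calo'_n)^\dagger\calo'_n$ also perturbs $\calo'_n\ket{\Psi}$, and nothing forces $(\calo'_n)^\dagger\ket{\Phi}$ (truncated or not) to converge at all, let alone to $\ket{\eta}$.

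The workable version of the bounded-truncation argument starts from the datum $(\ket{\xi},\ket{\eta})$ itself rather than from an approximating sequence: define $T_0(x\ket{\Psi}):=x\ket{\xi}$ for $x\in M$, and use the relation defining $\ket{\eta}$, namely $\braket{\xi|\calo^\dagger\Phi}=\braket{\calo\Psi|\eta}$ for all $\calo\in M$, to check that $y\ket{\Phi}\mapsto y\ket{\eta}$ ($y\in M$) is a densely defined formal adjoint --- this is precisely where cyclicity and separation of $\ket{\Phi}$, an assumption your outline never invokes, enters. Hence $T_0$ is closable with closure $T$ affiliated with $M'$, and truncating the polar decomposition of this single operator, $a'_n:=T\,\chi_{[0,n]}(|T|)\in M'$, gives $a'_n\ket{\Psi}\to\ket{\xi}$ and $a_n^{\prime\dagger}\ket{\Phi}\to\ket{\eta}$, which is what places $\ket{\xi}$ in $D(S'_{\Psi|\Phi})$ with the correct image. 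Your alternative route via the $2\times 2$ amplification is a legitimate standard reduction, but it merely trades the problem for the unproved non-relative identity $S_\Omega^\dagger=S'_\Omega$. So the proposal has a genuine gap at the one load-bearing step, and the mechanism sketched for closing it is not the right one.
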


\begin{defn}
Let $S_{\Psi|\Phi}$ be a relative Tomita operator and $\ket{\Psi} \in \calh$ be cyclic and separating with respect to a von Neumann algebra $M$. The {\em relative modular operator} is $$\Delta_{\Psi|\Phi} := S_{\Psi|\Phi}^\dagger S_{\Psi|\Phi}.$$
\end{defn}

If $\ket{\Phi}$ is replaced with $\calo^\prime \ket{\Phi}$, where $\calo^\prime \in M^\prime$ is unitary, then the relative modular operator remains unchanged \cite{Witten:2018zxz}:
\begin{equation} \Delta_{\Psi|\Phi} = \Delta_{\Psi|\calo^\prime\Phi}. \end{equation}

\begin{defn}[\cite{Araki}] \label{def:relent}
Let $\ket{\Psi},\ket{\Phi} \in \calh$ and $\ket{\Psi}$ be cyclic and separating with respect to a von Neumann algebra $M$. Let $\Delta_{\Psi|\Phi}$ be a relative modular operator. The {\em relative entropy} with respect to $M$ of $\ket{\Psi}$ is
	\begin{equation*} 
	\cals_{\Psi|\Phi}(M) = - \braket{\Psi|\log \Delta_{\Psi | \Phi}|\Psi}. 
	\end{equation*} 
\end{defn}
Note that the relative entropy $\cals_{\Psi|\Phi}(M)$ is nonnegative and it vanishes precisely when $\ket{\Phi} = \calo^\prime \ket{\Psi}$ for a unitary $\calo^\prime \in M^\prime$.

\begin{defn}
	Let $M$ be a von Neumann algebra on $\calh$ and $\ket{\Psi}$ be a cyclic and separating vector for $M$. The \emph{Tomita operator} $S_\Psi$ is
	$$ S_\Psi := S_{\Psi | \Psi},$$ where $S_{\Psi | \Psi}$ is the relative modular operator defined with respect to $M$. The \emph{modular operator} $\Delta_\Psi = S_\Psi^\dagger S_\Psi$ and the \emph{antiunitary operator} $J_\Psi$ are the operators that appear in the polar decomposition of $S_\Psi$ such that
	$$S_\Psi = J_\Psi \Delta_{\Psi}^{1/2}.$$
\end{defn}

\begin{thm}[Tomita-Takesaki \cite{Takesaki}]
	\label{thm:modflow}
	Let $M$ be a von Neumann algebra on $\calh$ and let $\ket{\Psi}$ be a cyclic and separating vector for $M$. Then \begin{itemize}
		\item $J_\Psi M J_\Psi = M^\prime.$
		\item $\Delta_\Psi^{it} M \Delta_\Psi^{-it} = M \quad \forall t \in \mathbb{R}$.
	\end{itemize} 
\end{thm}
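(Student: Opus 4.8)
The final statement is Theorem~\ref{thm:modflow}, the fundamental theorem of Tomita--Takesaki theory, and since it is a standard black-box result the plan is to follow its classical proof and ultimately cite \cite{Takesaki}. First I would set up the two relevant closed conjugate-linear operators: define $S_0 \calo\ket{\Psi} := \calo^\dagger\ket{\Psi}$ for $\calo \in M$ and $F_0 \calo^\prime\ket{\Psi} := \calo^{\prime\dagger}\ket{\Psi}$ for $\calo^\prime \in M^\prime$; these are densely defined because $\ket{\Psi}$ is cyclic for $M$ and, being separating for $M$, is cyclic for $M^\prime$. Manipulating inner products of the form $\braket{\calo\Psi|\calo^\prime\Psi}$ shows $F_0 \subseteq S_0^\dagger$ and $S_0 \subseteq F_0^\dagger$, so both are closable; writing $S = S_\Psi = \overline{S_0}$ and $F = \overline{F_0}$ one gets $S^\dagger = F$ and $F^\dagger = S$. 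Then $\Delta = \Delta_\Psi := S^\dagger S = F S$ is positive, self-adjoint, and injective with dense range, and from the polar decomposition $S = J\Delta^{1/2}$ together with the fact that $S^2$ is the identity on its domain one extracts the algebraic relations $J = J^\dagger = J^{-1}$ (so $J$ is an antiunitary involution), $J\Delta J = \Delta^{-1}$, and $F = J\Delta^{-1/2} = \Delta^{1/2}J$. This part is bookkeeping with closed unbounded operators, the polar decomposition, and the spectral theorem from Section~2.

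The analytic heart of the theorem is the claim $\Delta^{it} M \Delta^{-it} = M$ for all $t \in \mathbb{R}$. Here I would first pass to the modular-analytic elements: smearing $\calo \in M$ as $\calo_f = \int f(t)\,\Delta^{it}\calo\Delta^{-it}\,dt$ against functions $f$ whose Fourier transform is compactly supported yields an ultraweakly dense set of $\calo$ for which $z \mapsto \Delta^{z}\calo\Delta^{-z}$ extends to an entire $\calb(\calh)$-valued function. For such $\calo$ one combines the commutation $[M,M^\prime]=0$ with the defining relations $S\calo\ket{\Psi}=\calo^\dagger\ket{\Psi}$ and $F\calo^\prime\ket{\Psi}=\calo^{\prime\dagger}\ket{\Psi}$ to produce, at the special value of $z$ where the half-powers $\Delta^{\pm 1/2}$ and hence $S$ and $F$ enter, an identity forcing $\Delta^{it}\calo\Delta^{-it}$ to commute with all of $M^\prime$; a three-lines (Phragm\'en--Lindel\"of) estimate on the strip $0 \le \mathrm{Re}\,z \le 1$ upgrades the special-value identity to all real $t$. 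Hence $\Delta^{it}\calo\Delta^{-it} \in M^{\prime\prime} = M$, and running the argument for $\Delta^{-it}$ yields the reverse inclusion, so $\Delta^{it} M \Delta^{-it} = M$. Extending from the analytic elements to all of $M$ uses ultraweak continuity of $\calo \mapsto \Delta^{it}\calo\Delta^{-it}$ together with Kaplansky density. Constructing enough modular-analytic elements and making the growth estimate and the special-value identity rigorous is the step I expect to be the main obstacle; everything else is comparatively formal.

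Finally, for $J M J = M^\prime$: since $F$ stands to $M^\prime$ exactly as $S$ stands to $M$, the relation $F = J\Delta^{-1/2}$ shows that conjugation by $J$ intertwines the modular data of $M$ with that of $M^\prime$. One checks that $JMJ$ is again a von Neumann algebra (because $J$ is an antiunitary involution) and that each $J\calo J$ with $\calo \in M$ commutes with $M$ --- this comes out of the $t=0$ specialization of the identity used above, after conjugating by $\Delta^{1/2}$ --- so $JMJ \subseteq M^\prime$. Running the symmetric argument with the roles of $M,S$ and $M^\prime,F$ interchanged gives $JM^\prime J \subseteq M$, i.e.\ $M^\prime \subseteq JMJ$, and combining the two inclusions yields $JMJ = M^\prime$. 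I would add that in our setting $\calh$, $M$, and $\ket{\Psi}$ arise from a quantum field theory / holographic configuration, so $M$ is generically a type III factor and there is no finite-dimensional shortcut; we therefore invoke the theorem as established in \cite{Takesaki}.
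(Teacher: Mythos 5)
The paper does not prove Theorem \ref{thm:modflow} at all: it is stated as a standard background result and attributed directly to \cite{Takesaki}, exactly as you ultimately do. Your sketch of the classical route (the closable operators $S_0$, $F_0$, the polar decomposition $S=J\Delta^{1/2}$ with $J\Delta J=\Delta^{-1}$, modular-analytic elements plus a three-lines argument for $\Delta^{it}M\Delta^{-it}=M$, and the symmetric argument for $JMJ=M^\prime$) is the standard textbook strategy and is consistent with the paper's treatment, so citing \cite{Takesaki} for the hard analytic core is entirely appropriate here.
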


Theorem \ref{thm:modflow} is important because it allows us to interpret $\Delta_\Psi$ as the operator that generates a modular flow on $M$. Suppose that the Hilbert space $\calh$ factorizes as $\calh = \calh_\ell \otimes \calh_r$. For concreteness, we may intuitively think of $\calh_\ell$ as a Hilbert space that corresponds to the left Rindler wedge of Minkowski space, while $\calh_r$ corresponds to the right Rindler wedge. 

\begin{figure}[H]
\begin{center}
\scalebox{0.65}{
\begin{tikzpicture}
\draw[->] (-5,0)--(5,0);
\draw[->] (0,-5)--(0,5);
\draw (-4.5,-4.5)--(4.5,4.5);
\draw (-4.5,4.5)--(4.5,-4.5);
\draw[->] (5,-2) arc (250:110:2);
\draw[->] (4,-2.7) arc (240:120:3);
\draw[->] (-5,-2) arc (-70:70:2);
\draw[->] (-4,-2.7) arc (-60:60:3);
\node[draw=none,label={[label distance=0.1mm]south:$K_r$}] (R) at (5,3.5) {};
\node[draw=none,label={[label distance=0.1mm]south:$K_\ell$}] (L) at (-5,3.5) {};
\end{tikzpicture}}
\label{fig:rindlerwedges}
\caption{Two Rindler wedges in Minkowski space. The generators $K_ r$ and $K_\ell$ correspond to boosts, as shown.}
\end{center}
\end{figure}
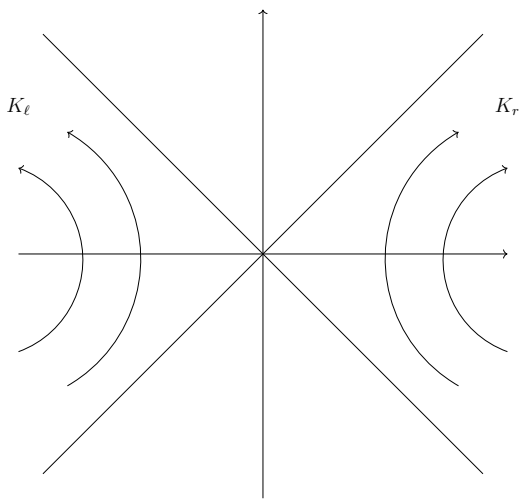

For a given state $\ket{\Psi} \in \calh$, if we define
\begin{equation}
\rho := \ket{\Psi}\bra{\Psi}, \quad \rho_\ell := \text{Tr}_r \, \rho, \quad \rho_r := \text{Tr}_\ell \, \rho,
\end{equation}
then the reduced density matrices $\rho_\ell$ and $\rho_r$ generate a modular flow on operators that act on $\calh_\ell$ and $\calh_r$, respectively. The modular operator $\Delta_\Psi$ corresponding to the von Neumann algebra that acts nontrivially on $\calh_\ell$ is then given by
\begin{equation}
\Delta_\Psi = \rho_\ell \otimes \rho_r^{-1}.
\end{equation}
When $\ket{\Psi}$ is the vacuum and $\calh_\ell$ and $\calh_r$ correspond to Rindler wedges, we have that
\begin{equation}
\rho_\ell = e^{- 2 \pi K_\ell}, \quad \rho_r = e^{- 2 \pi K_r},
\end{equation}
where $K_\ell$ and $K_r$ are the boost generators that act respectively on the left and right wedges (see Figure \ref{fig:rindlerwedges}). The modular operator $\Delta_\Psi$ is then given by
\begin{equation}
\Delta_\Psi = e^{-2\pi (K_\ell - K_r)}.
\end{equation}
In this context, Theorem \ref{thm:modflow} states that the modular flow maps operators in a Rindler wedge to operators in the same Rindler wedge. Thus, the algebraically defined modular flow in Theorem \ref{thm:modflow} has a geometric interpretation. This is an example of \emph{modular covariance}, which is the property that the modular flow is a spacetime symmetry. The unitary group generated by the modular operator associated with the vacuum state implements the Lorentz boosts that leave the Rindler wedges invariant. The antiunitary operator $J$ corresponds to the operator $CRT$, where $C$ denotes charge conjugation, $R$ is a reflection that maps one wedge into the other, and $T$ is time reversal \cite{baumgartelwollenberg}.




One of the findings of \cite{Jafferis:2015del} is that bulk modular flow is dual to boundary modular flow. As an intermediate step in proving the equivalence of bulk and boundary entropies, we will also show that the bulk and boundary modular operators act on the code subspace in the same way. This is further evidence that the definitions and theorems of Tomita-Takesaki theory are relevant for understanding bulk reconstruction.

\section{Proof of Theorem \ref{thm:otherdirection}}

This section contains the proof of Theorem \ref{thm:otherdirection}. In Lemma \ref{lem:firstlemma}, we show that cyclic and separating states in $\calh_{code}$ are mapped to cyclic and separating states in $\calh_{phys}$.
In Lemma \ref{lem:phystocode}, we relate operators in $M_{phys}$ to operators in $M_{code}$. In Section \ref{sec:specialcase}, we consider Theorem \ref{thm:otherdirection} in a special case where the relative Tomita operators are bounded. In Section \ref{sec:general}, we prove Theorem \ref{thm:otherdirection} in full generality.

\begin{lem}
\label{lem:firstlemma}
Under the assumptions of Theorem \ref{thm:otherdirection}, for every $\ket{\Psi} \in \calh_{code}$ that is cyclic and separating with respect to $M_{code}$, $u\ket{\Psi}$ is cyclic and separating with respect to $M_{phys}$.
\end{lem}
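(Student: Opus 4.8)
The plan is to use the two hypotheses of Theorem \ref{thm:otherdirection} exactly as suggested by its footnote: cyclicity of $\ket{\Psi}$ with respect to $M_{code}$ lets us move $\ket{\Psi}$ arbitrarily close to the ``good'' vector $\ket{\Omega}$ using operators in $M_{code}$, the reconstruction hypothesis turns this into a statement about $M_{phys}$ acting on $u\ket{\Psi}$, and finally cyclicity of $u\ket{\Omega}$ with respect to $M_{phys}$ closes the argument. Throughout I will use that $u$, being an isometry, is bounded and hence continuous (Theorem \ref{thm:continuity}), and that every element of a von Neumann algebra is bounded, hence continuous.

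\emph{Cyclicity.} Fix $\ket{\Psi} \in \calh_{code}$ cyclic and separating with respect to $M_{code}$. Since $\ket{\Psi}$ is cyclic, choose $\calo_n \in M_{code}$ with $\calo_n \ket{\Psi} \to \ket{\Omega}$; applying $u$ gives $u\calo_n\ket{\Psi} \to u\ket{\Omega}$, and the reconstruction hypothesis supplies $\tilde{\calo}_n \in M_{phys}$ with $u\calo_n\ket{\Psi} = \tilde{\calo}_n u\ket{\Psi}$ for all arguments, so $\tilde{\calo}_n u\ket{\Psi} \to u\ket{\Omega}$. Now let $\ket{\chi} \in \calh_{phys}$ and $\epsilon > 0$ be arbitrary. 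Since $u\ket{\Omega}$ is cyclic with respect to $M_{phys}$, pick $\tilde{\calp} \in M_{phys}$ with $\|\tilde{\calp}\, u\ket{\Omega} - \ket{\chi}\| < \epsilon/2$; then $\tilde{\calp}\tilde{\calo}_n u\ket{\Psi} \to \tilde{\calp}\, u\ket{\Omega}$ by continuity of $\tilde{\calp}$, so for large $n$ we get $\|\tilde{\calp}\tilde{\calo}_n u\ket{\Psi} - \ket{\chi}\| < \epsilon$ with $\tilde{\calp}\tilde{\calo}_n \in M_{phys}$. Hence $M_{phys}\, u\ket{\Psi}$ is dense in $\calh_{phys}$, i.e. $u\ket{\Psi}$ is cyclic with respect to $M_{phys}$.

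\emph{Separating.} I will invoke the standard fact that, for any von Neumann algebra $M \subset \calb(\calh)$, a vector is separating with respect to $M$ if and only if it is cyclic with respect to $M^\prime$: if $M^\prime\ket{\xi}$ is dense and $\calo \in M$ annihilates $\ket{\xi}$, then $\calo$ annihilates the dense set $M^\prime\ket{\xi}$, so $\calo = 0$; conversely, if $M^\prime\ket{\xi}$ is not dense, the projection onto its closure lies in $(M^\prime)^\prime = M$, is nonzero, and kills $\ket{\xi}$. Consequently, $\ket{\Psi}$ being cyclic and separating for $M_{code}$ makes it cyclic for $M_{code}^\prime$, and $u\ket{\Omega}$ being cyclic and separating for $M_{phys}$ makes it cyclic for $M_{phys}^\prime$. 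Running the cyclicity argument of the previous paragraph verbatim with $M_{code}$ replaced by $M_{code}^\prime$, $M_{phys}$ by $M_{phys}^\prime$, and the reconstruction hypothesis applied to $\calo^\prime \in M_{code}^\prime$ (with image $\tilde{\calo}^\prime \in M_{phys}^\prime$) shows $u\ket{\Psi}$ is cyclic with respect to $M_{phys}^\prime$, hence separating with respect to $M_{phys}$.

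\emph{Main obstacle.} There is no serious difficulty; the content lies in correctly passing norm limits through $u$, through the reconstructed operators, and through elements of $M_{phys}$, together with the cyclic/separating--commutant duality. The one point requiring care is that the reconstruction hypothesis yields operators $\tilde{\calo}_n$ satisfying $u\calo_n\ket{\Theta} = \tilde{\calo}_n u\ket{\Theta}$ for \emph{all} $\ket{\Theta}$, so that they may be freely composed with other elements of $M_{phys}$ and combined with norm limits; this legitimizes the two-stage approximation (first approximating $\ket{\Omega}$ inside the code subspace, then approximating $\ket{\chi}$ inside $\calh_{phys}$).
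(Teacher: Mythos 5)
Your argument is correct and takes essentially the same route as the paper: a two-stage approximation through $u\ket{\Omega}$ (approximate $\ket{\Omega}$ inside the code subspace via cyclicity of $\ket{\Psi}$, push through $u$ with the reconstruction hypothesis, then approximate the target with $M_{phys}$), followed by running the identical argument for the commutant algebras and invoking the cyclic/separating duality to conclude that $u\ket{\Psi}$ is separating for $M_{phys}$ --- exactly the paper's ``the same logic shows\dots'' step, which you merely make explicit. One small slip in your justification of that standard duality: when $\overline{M^\prime\ket{\xi}}$ is a proper subspace, the element of $M$ that is nonzero and annihilates $\ket{\xi}$ is $I-P$ (the projection onto the orthogonal complement), not the projection $P$ onto $\overline{M^\prime\ket{\xi}}$ itself, since $P\ket{\xi}=\ket{\xi}$; with that trivial correction the proof is complete.
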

\begin{proof}
Let $\ket{\Omega}$ be defined as in Theorem \ref{thm:otherdirection}.	We will first show that $u\ket{\Psi}$ is cyclic with respect to $M_{phys}$. That is, we can act on $u \ket{\Psi}$ with an operator in $M_{phys}$ to get a state arbitrarily close to any state in $\calh_{phys}$. Given any $\ket{\Phi} \in \calh_{phys}$ and $\epsilon > 0$, we need to choose $\calp \in M_{phys}$ such that $||\ket{\Phi} - \calp u \ket{\Psi}|| < \epsilon$.
	Choose $\hat{\calp} \in M_{phys}$ such that $||\hat{\calp}u\ket{\Omega} - \ket{\Phi}|| < \frac{\epsilon}{2}$ and $\hat{\calp} \neq 0$. Since $\ket{\Psi}$ is cyclic with respect to $M_{code}$, choose $\calo \in M_{code}$ such that $||\calo \ket{\Psi} - \ket{\Omega}|| < \frac{\epsilon}{2 ||\hat{\calp}||}$. Let $\tilde{\calo} \in M_{phys}$ be an operator that satisfies $\tilde{\calo}u \ket{\Theta} = u \calo \ket{\Theta} \forall \ket{\Theta} \in \calh_{code}$. Then, note that
	\begin{equation}
\ket{\Phi} - \hat{\calp} \tilde{\calo} u \ket{\Psi}  = \ket{\Phi} - \hat{\calp} u \calo \ket{\Psi} = \ket{\Phi} - \hat{\calp} u \ket{\Omega} - \hat{\calp} u (\calo \ket{\Psi} - \ket{\Omega} ) .
\end{equation}
By the triangle inequality,
	\begin{equation}
||\ket{\Phi} - \hat{\calp} \tilde{\calo} u \ket{\Psi}||  \leq || \ket{\Phi} - \hat{\calp} u \ket{\Omega}||+|| \hat{\calp} || \cdot ||  \calo \ket{\Psi} - \ket{\Omega}  ||.
\end{equation}
By choosing $\calp = \hat{\calp} \tilde{\calo} $, we see that $u \ket{\Psi}$ is cyclic with respect to $M_{phys}$. The same logic shows that $u\ket{\Psi}$ is cyclic with respect to $M_{phys}^\prime$ and hence separating for $M_{phys}$.
\end{proof}

\begin{lem} \label{lem:phystocode}
	Under the assumptions of Theorem \ref{thm:otherdirection}, for any $\calp \in M_{phys}$, $u^\dagger \calp u \in M_{code}$.
\end{lem}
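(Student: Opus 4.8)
The plan is to use the double commutant property of the von Neumann algebra $M_{code}$: since $M_{code} = (M_{code}^\prime)^\prime$, it suffices to show that the operator $u^\dagger \calp u$ — which is bounded, being a product of bounded operators, hence lies in $\calb(\calh_{code})$ — commutes with every element of $M_{code}^\prime$.

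First I would invoke the bulk reconstruction hypothesis of Theorem \ref{thm:otherdirection} for an arbitrary $\calo^\prime \in M_{code}^\prime$, producing some $\tilde{\calo}^\prime \in M_{phys}^\prime$ with
\begin{equation*}
u \calo^\prime \ket{\Theta} = \tilde{\calo}^\prime u \ket{\Theta}, \qquad u \calo^{\prime\dagger} \ket{\Theta} = \tilde{\calo}^{\prime\dagger} u \ket{\Theta} \qquad \text{for all } \ket{\Theta} \in \calh_{code}.
\end{equation*}
Because these equalities hold on all of $\calh_{code}$, they are genuine operator identities $u\calo^\prime = \tilde{\calo}^\prime u$ and $u\calo^{\prime\dagger} = \tilde{\calo}^{\prime\dagger} u$ of bounded maps $\calh_{code} \to \calh_{phys}$. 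Taking the adjoint of the second identity yields the companion identity $\calo^\prime u^\dagger = u^\dagger \tilde{\calo}^\prime$ of bounded maps $\calh_{phys} \to \calh_{code}$.

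Then for any $\calp \in M_{phys}$ I would run the computation
\begin{equation*}
(u^\dagger \calp u)\,\calo^\prime = u^\dagger \calp\, (u\calo^\prime) = u^\dagger \calp\, \tilde{\calo}^\prime u = u^\dagger \tilde{\calo}^\prime \calp\, u = (\calo^\prime u^\dagger)\,\calp u = \calo^\prime\,(u^\dagger \calp u),
\end{equation*}
where the third equality uses that $\tilde{\calo}^\prime \in M_{phys}^\prime$ commutes with $\calp \in M_{phys}$. Since $\calo^\prime$ was an arbitrary element of $M_{code}^\prime$, we conclude $u^\dagger \calp u \in (M_{code}^\prime)^\prime = M_{code}$.

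There is no serious obstacle here; the one point to watch is that the hypothesis provides a \emph{single} intertwiner $\tilde{\calo}^\prime$ for both $\calo^\prime$ and $\calo^{\prime\dagger}$, which is precisely what allows the adjoint relation $\calo^\prime u^\dagger = u^\dagger \tilde{\calo}^\prime$ to feature the same $\tilde{\calo}^\prime$ that appears in $u\calo^\prime = \tilde{\calo}^\prime u$, so that the two substitutions in the displayed chain are both legitimate. Note that neither Lemma \ref{lem:firstlemma} nor the cyclic-and-separating hypotheses of the theorem are needed for this particular statement.
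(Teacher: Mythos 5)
Your proof is correct and follows essentially the same route as the paper: use the intertwining relations for $\calo^\prime$ and $\calo^{\prime\dagger}$ (the latter via an adjoint) to commute $u^\dagger \calp u$ past an arbitrary element of $M_{code}^\prime$, then conclude by the double commutant property $M_{code}^{\prime\prime} = M_{code}$. The paper carries out the identical computation at the level of matrix elements $\braket{\Psi|u^\dagger \calp u \calo^\prime|\Phi}$ rather than as operator identities, which is only a cosmetic difference.
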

\begin{proof}
	Choose any $\calo^\prime \in M_{code}^\prime$. For any $\ket{\Psi},\ket{\Phi} \in \calh_{code}$, we have that
	\begin{align}
	\begin{split}
	\braket{\Psi|u^\dagger \calp u \calo^\prime|\Phi} 
	&= \braket{\Psi|u^\dagger \calp  \tilde{\calo}^\prime u |\Phi}
	=\braket{\Psi|u^\dagger \tilde{\calo}^\prime \calp   u |\Phi}
	=\braket{\tilde{\calo}^{\prime \, \dagger} u\Psi|  \calp   u |\Phi} \\
	&=\braket{u \calo^{\prime \, \dagger} \Psi|  \calp   u |\Phi}
	=\braket{  \Psi| \calo^{\prime } u^\dagger \calp   u |\Phi}.
	\end{split}
	\end{align} 
	Hence, $u^\dagger \calp u \in M_{code}^{\prime \prime} = M_{code}$.
\end{proof}	

\subsection{Special case of bounded relative Tomita operator}
\label{sec:specialcase}

We will first prove Theorem \ref{thm:otherdirection} in the special case where the relative Tomita operators with respect to $M_{code}$ and $M_{phys}$, denoted respectively by $S_{\Psi|\Phi}^c$ and $S_{\Psi|\Phi}^p$, are bounded operators. In this special case, we do not have to keep track of their domains. The proof of the general case is similar, but technically more complicated.

For any $\calo \in M_{code}$,
\begin{equation} u S^c_{\Psi|\Phi} \calo \ket{\Psi} = u \calo^\dagger \ket{\Phi} =  \tilde{\calo}^\dagger u \ket{\Phi} =  S^p_{u\Psi|u\Phi}  \tilde{\calo} u \ket{\Psi} =  S^p_{u\Psi|u\Phi} u \calo  \ket{\Psi}, \end{equation}
Hence \begin{equation} \left(u S^c_{\Psi|\Phi}  -  S^p_{u\Psi|u\Phi} u\right) \calo  \ket{\Psi} = 0. \end{equation}
$\left(u S^c_{\Psi|\Phi}  -  S^p_{u\Psi|u\Phi} u\right)$ is a bounded operator that annihilates a dense subspace of $\calh_{code}$, since $\ket{\Psi}$ is cyclic with respect to $M_{code}$. It follows from the fact that the kernel of $\left(u S^c_{\Psi|\Phi}  -  S^p_{u\Psi|u\Phi} u\right)$ is closed that
\begin{equation} \label{eq:firstrelation} u S^c_{\Psi|\Phi}  =  S^p_{u\Psi|u\Phi} u.\end{equation}

Likewise, for any $\calp \in M_{phys}$,
\begin{align} 
u^\dagger S^p_{u \Psi | u \Phi} \calp \ket{u\Psi}  = u^\dagger \calp^\dagger  u\ket{\Phi} & = S^c_{\Psi | \Phi }u^\dagger \calp  \ket{u\Psi}, \\
\left(u^\dagger S^p_{u \Psi | u \Phi}   - S^c_{\Psi | \Phi }u^\dagger   \right) &\calp\ket{u\Psi} = 0. 
\end{align}
As $u\ket{\Psi}$ is cyclic with respect to $M_{phys}$ by assumption, we have that \begin{equation} 
\label{eq:secondrelation} 
u^\dagger S^p_{u \Psi | u \Phi}   = S^c_{\Psi | \Phi }u^\dagger, \quad   S^{p \, \dagger}_{u \Psi | u \Phi} u    = u S^{c \, \dagger}_{\Psi | \Phi }.
\end{equation}
Equations \eqref{eq:firstrelation} and \eqref{eq:secondrelation} imply that the subspace $\text{Im } u$ is mapped to itself under $S^{p}_{u \Psi | u \Phi}$ and $S^{p \, \dagger}_{u \Psi | u \Phi}$. Thus, the subspace $\text{Im } u$ is mapped to itself under $\Delta^{p}_{u \Psi | u \Phi}$. From the fact that $\Delta^{p}_{u \Psi | u \Phi}$ is self-adjoint and bounded, it follows that the subspace $(\text{Im } u)^\perp$ is mapped to itself under $\Delta^{p}_{u \Psi | u \Phi}$. Equations \eqref{eq:firstrelation} and \eqref{eq:secondrelation} also imply that \begin{equation}
\Delta^c_{\Psi|\Phi} = u^\dagger \Delta^p_{u\Psi|u\Phi}u. \label{eq:deltarelation}\end{equation}
Note that $\Delta^p_{u \Psi| u\Phi}$ and $\Delta^c_{\Psi|\Phi}$ are positive, self-adjoint, bounded operators. Thus, we may use the spectral theorem to study them. We will apply the spectral theorem to $(\Delta^p_{u \Psi| u\Phi})|_{\text{Im } u}$ and $(\Delta^p_{u \Psi| u\Phi})|_{(\text{Im } u)^\perp}$ separately.\footnote{$(\Delta^p_{u \Psi| u\Phi})|_{\text{Im } u}$ denotes the restriction of $\Delta^p_{u \Psi| u\Phi}$ to the closed subspace $\text{Im } u$.} We write
\begin{equation}
(\Delta^p_{u \Psi| u\Phi})|_{\text{Im } u} = \int_{\mathbb{R}} \lambda \, d (P^{\text{Im } u}_\lambda) , \
(\Delta^p_{u \Psi| u\Phi})|_{(\text{Im } u)^\perp} = \int_{\mathbb{R}} \lambda \, d (P^{(\text{Im } u)^\perp}_\lambda).
\end{equation}
For a Borel set $\Omega \subset \mathbb{R}$, the projections $P_\Omega^{\text{Im } u}$ and $P_\Omega^{(\text{Im } u)^\perp}$ commute with $u u^\dagger$ because $u u^\dagger$ is the projection onto $\text{Im } u$. The spectral decomposition of $\Delta^p_{u \Psi| u\Phi}$ is given by
\begin{equation}\Delta^p_{u \Psi| u\Phi} = \int_{\mathbb{R}} \lambda \, d (P^{p}_\lambda). \end{equation}
By the uniqueness of the spectral decomposition, we have that $P_\Omega^p = P_\Omega^{\text{Im } u} + P_\Omega^{(\text{Im } u)^\perp}$. Thus, $P_\Omega^p$ commutes with $u u^\dagger$. Let $\Omega_1$ and $\Omega_2$ be two Borel sets. Then
\begin{equation}
u^\dagger P_{\Omega_1}^p u  u^\dagger P_{\Omega_2}^p u = u^\dagger P_{\Omega_1}^p  P_{\Omega_2}^p u  .
\end{equation}
One can then check that the family of projections $u^\dagger P_{\Omega}^p u = u^\dagger P_{\Omega}^{\text{Im } u} u$ is a projection valued measure on $\calh_{code}$. We will now show that this is the projection valued measure associated with $\Delta^c_{\Psi|\Phi}$. From equation \eqref{eq:deltarelation}, it follows that for any $\ket{\Theta} \in \calh_{code}$, we have that
\begin{equation} \Delta^c_{\Psi | \Phi} \ket{\Theta} =  u^\dagger\Delta^p_{u \Psi | u \Phi} u\ket{ \Theta} = \int_\mathbb{R} \lambda \, d (u^\dagger P^p_\lambda  u \ket{  \Theta}). \end{equation}
By the uniqueness of the spectral decomposition of $\Delta^c_{\Psi | \Phi}$, we conclude that $u^\dagger P^p_\Omega u$ is the projection valued measure associated with $\Delta^c_{\Psi | \Phi}$. It follows that
\begin{align}
\begin{split} -\braket{\Psi|\log(\Delta^c_{\Psi | \Phi}) |\Psi} &= -\int_0^\infty \log(\lambda) \, d ( \braket{ \Psi| u^\dagger P^p_\lambda u |  \Psi}) \\&= - \int_0^\infty \log(\lambda) \, d ( \braket{ u\Psi| P^p_\lambda |u  \Psi}) = -\braket{u \Psi| \log(\Delta^p_{u \Psi | u \Phi}) |u\Psi}.
\end{split} \end{align}

The same logic can be applied to the commutant algebras $M_{code}^\prime$ and $M_{phys}^\prime$. Hence,
\begin{align} 
\cals_{\Psi|\Phi}(M_{code})= \cals_{u\Psi|u\Phi}(M_{phys}), \quad \cals_{\Psi|\Phi}(M_{code}^\prime)= \cals_{u\Psi|u\Phi}(M_{phys}^\prime) .
\end{align}

\subsection{General proof of Theorem \ref{thm:otherdirection}}
\label{sec:general}

\begin{lem}
\label{lem:reltomita}
Let $S^c_{\Psi | \Phi}$ denote the relative Tomita operator defined with respect to $M_{code}$. Let $S^p_{u\Psi | u\Phi}$ denote the relative Tomita operator defined with respect to $M_{phys}$. Let $S^{c \, \prime}_{\Psi | \Phi}$ and $S^{p \, \prime}_{u\Psi | u\Phi}$ denote the relative Tomita operators defined with respect to $M_{code}^\prime$ and $M_{phys}^\prime$. Then $u S^c_{\Psi | \Phi} = S^p_{u \Psi | u \Phi} u$ and $u S^{c \, \prime}_{\Psi | \Phi} = S^{p \, \prime}_{u \Psi | u \Phi} u$.
\end{lem}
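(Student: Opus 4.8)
The plan is to carry out the bounded-case manipulation of Section~\ref{sec:specialcase} at the level of the natural cores of the relative Tomita operators, and then upgrade the resulting identity to the closed operators themselves; the only real work is the control of domains. Recall that $S^c_{\Psi|\Phi}$ is closed and is the closure of its restriction to the core $D^c_0 := \{\calo\ket{\Psi} : \calo \in M_{code}\}$, on which it acts by $\calo\ket{\Psi}\mapsto \calo^\dagger\ket{\Phi}$, and similarly $S^p_{u\Psi|u\Phi}$ is the closure of its restriction to $D^p_0 := \{\calp\,u\ket{\Psi} : \calp \in M_{phys}\}$, acting by $\calp\,u\ket{\Psi}\mapsto \calp^\dagger u\ket{\Phi}$; both cores are dense since $\ket{\Psi}$ is cyclic and separating for $M_{code}$ and, by Lemma~\ref{lem:firstlemma}, $u\ket{\Psi}$ is cyclic and separating for $M_{phys}$. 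For $\calo \in M_{code}$ choose $\tilde{\calo}\in M_{phys}$ as in the hypothesis of Theorem~\ref{thm:otherdirection}. Then $u\calo\ket{\Psi} = \tilde{\calo}\,u\ket{\Psi}\in D^p_0$ and
\[
u\,S^c_{\Psi|\Phi}(\calo\ket{\Psi}) = u\calo^\dagger\ket{\Phi} = \tilde{\calo}^\dagger u\ket{\Phi} = S^p_{u\Psi|u\Phi}(\tilde{\calo}\,u\ket{\Psi}) = S^p_{u\Psi|u\Phi}(u\calo\ket{\Psi}),
\]
so $u$ intertwines the two core operators. Dually, for $\calp\in M_{phys}$ Lemma~\ref{lem:phystocode} gives $u^\dagger\calp u\in M_{code}$, so $u^\dagger(\calp\,u\ket{\Psi}) = (u^\dagger\calp u)\ket{\Psi}\in D^c_0$ and the same three-line computation yields $u^\dagger S^p_{u\Psi|u\Phi}(\calp\,u\ket{\Psi}) = S^c_{\Psi|\Phi}(u^\dagger\calp\,u\ket{\Psi})$.

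Next I would promote these to the closures using that $u,u^\dagger$ are bounded and $S^c_{\Psi|\Phi},S^p_{u\Psi|u\Phi}$ are closed. If $\ket{x}\in D(S^c_{\Psi|\Phi})$, pick $\calo_n\in M_{code}$ with $\calo_n\ket{\Psi}\to\ket{x}$ and $\calo_n^\dagger\ket{\Phi}\to S^c_{\Psi|\Phi}\ket{x}$; then $u\calo_n\ket{\Psi}=\tilde{\calo}_n u\ket{\Psi}\to u\ket{x}$ and $S^p_{u\Psi|u\Phi}(\tilde{\calo}_n u\ket{\Psi}) = u\calo_n^\dagger\ket{\Phi}\to u\,S^c_{\Psi|\Phi}\ket{x}$, so closedness of $S^p_{u\Psi|u\Phi}$ gives $u\ket{x}\in D(S^p_{u\Psi|u\Phi})$ and $S^p_{u\Psi|u\Phi}u\ket{x}=u\,S^c_{\Psi|\Phi}\ket{x}$. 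The parallel argument with $u^\dagger$ and the second core identity shows $\ket{y}\in D(S^p_{u\Psi|u\Phi})\Rightarrow u^\dagger\ket{y}\in D(S^c_{\Psi|\Phi})$; applied to $\ket{y}=u\ket{x}$ this gives the converse implication $u\ket{x}\in D(S^p_{u\Psi|u\Phi})\Rightarrow \ket{x}=u^\dagger u\ket{x}\in D(S^c_{\Psi|\Phi})$. Hence $D(S^c_{\Psi|\Phi})=\{\ket{x}\in\calh_{code} : u\ket{x}\in D(S^p_{u\Psi|u\Phi})\}$, which is exactly the domain of $S^p_{u\Psi|u\Phi}u$, and on it the two operators agree, i.e. $u S^c_{\Psi|\Phi}=S^p_{u\Psi|u\Phi}u$. (The same bookkeeping shows $\text{Im } u$, and — once the primed identity is in hand — also $(\text{Im } u)^\perp$, are carried into themselves by $S^p_{u\Psi|u\Phi}$, using $S^{p\,\dagger}_{u\Psi|u\Phi}=S^{p\,\prime}_{u\Psi|u\Phi}$ together with the density of $u\,D(S^{c\,\prime}_{\Psi|\Phi})$ in $\text{Im } u$.)

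For the primed statement I would note that the hypotheses of Theorem~\ref{thm:otherdirection} are symmetric under $M_{code}\leftrightarrow M_{code}'$, $M_{phys}\leftrightarrow M_{phys}'$: the intertwining hypothesis already supplies $\tilde{\calo}'\in M_{phys}'$ for each $\calo'\in M_{code}'$, the proof of Lemma~\ref{lem:phystocode} applies verbatim with primes to give $u^\dagger\calp' u\in M_{code}'$ for $\calp'\in M_{phys}'$, the vector $\ket{\Psi}$ being cyclic and separating for $M_{code}$ is equivalent to the same for $M_{code}'$, and $u\ket{\Psi}$ is cyclic and separating for $M_{phys}'$ by Lemma~\ref{lem:firstlemma}. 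Running the argument above with everything primed yields $u S^{c\,\prime}_{\Psi|\Phi}=S^{p\,\prime}_{u\Psi|u\Phi}u$.

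The genuinely delicate point, and the place I expect to have to be careful, is establishing the \emph{equality} of domains in the second paragraph rather than a mere one-sided inclusion: it is precisely here that Lemma~\ref{lem:phystocode} (which produces the reverse intertwiner through $u^\dagger$) is indispensable. Everything else is the bounded-case algebra plus the standard closed-graph argument; one just has to remember that the relative Tomita operators are unbounded and defined only on dense subspaces, and keep track of which core is being used at each step.
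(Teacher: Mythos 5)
Your proposal is correct and follows essentially the same route as the paper: you establish that $S^p_{u\Psi|u\Phi}u$ extends $uS^c_{\Psi|\Phi}$ using the operators $\tilde{\calo}$ supplied by the reconstruction hypothesis, and then rule out a proper extension by pulling a physical approximating sequence back to $M_{code}$ via Lemma \ref{lem:phystocode} (writing $u^\dagger\calp_n u\in M_{code}$), exactly as in the paper's proof, with the primed case handled by the same symmetry observation. The only cosmetic difference is that you phrase the first step as a core identity plus a closedness argument, whereas the paper reads the same inclusion of domains directly off the sequence-based definition of the relative Tomita operator.
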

\begin{proof}
	$D(S^c_{\Psi|\Phi})$ is given by all $\ket{x} \in \calh_{code}$ that may be written as
	\begin{equation}
	\ket{x} = \lim_{n \rightarrow \infty} \calo_n \ket{\Psi}
	\end{equation}
	for some sequence $\{\calo_n\} \in M_{code}$ such that the limit
	\begin{equation}
	\ket{y} := \lim_{n \rightarrow \infty} \calo_n^\dagger \ket{\Phi}
	\end{equation}
	exists. By definition, $S^c_{\Psi|\Phi} \ket{x} := \ket{y}$. Given $\ket{x}$ and $\ket{y}$ defined as above, it follows that
	\begin{equation}
	u\ket{x} = \lim_{n \rightarrow \infty} \tilde{\calo}_n u \ket{\Psi}, \quad
	u\ket{y} = \lim_{n \rightarrow \infty} \tilde{\calo}_n^\dagger u \ket{\Phi}.
	\end{equation}
	Hence, $u\ket{x} \in D(S^p_{u\Psi|u\Phi})$. It follows that for all $\ket{x} \in D(S^c_{\Psi|\Phi})$, 
	\[ u S^c_{\Psi|\Phi} \ket{x} =  S^p_{u\Psi|u\Phi} u \ket{x},  \]
	which means that $ S^p_{u\Psi|u\Phi} u$ is an extension of $u S^c_{\Psi|\Phi}$. To see that $S^p_{u\Psi|u\Phi} u$ is not a proper extension, suppose $\ket{w} \in D( S^p_{u\Psi|u\Phi} u) $. Then $u\ket{w} \in D(S^p_{u\Psi|u\Phi})$, meaning that there exists a sequence $\{\calp_n\} \in M_{phys}$ such that
	\begin{equation}
	u\ket{w} = \lim_{n \rightarrow \infty} \calp_n u \ket{\Psi},\ \text{and} \
	\lim_{n \rightarrow \infty} \calp_n^\dagger u \ket{\Psi}\ \text{exists.}
	\end{equation}
	We may also write $\ket{w} = \lim_{n \rightarrow \infty} u^\dagger \calp_n u \ket{\Psi}$. From Lemma \ref{lem:phystocode}, $u^\dagger \calp_n u \in M_{code}$. Hence, $\ket{w} \in D(S^c_{\Psi|\Phi})$; so we may write $uS^c_{\Psi|\Phi} =  S^p_{u\Psi | u\Phi} u$ as an operator equality because the operators on both sides have the same domain and act the same way on vectors in the domain. The same logic establishes that $u S^{c \, \prime}_{\Psi | \Phi} = S^{p \, \prime}_{u \Psi | u \Phi} u$.
\end{proof}

\begin{lem}
\label{lem:lastlemma}
	Let $\Delta^p_{u \Psi | u \Phi} := S^{p \, \dagger}_{u \Psi|u\Phi} S^p_{u \Psi|u\Phi}$ be the relative modular operator associated with $S^p_{u \Psi|u\Phi}$. Then,
	\begin{itemize}
			\item $\Delta^{p}_{u \Psi | u \Phi}$ maps the vector space $(\text{Im } u) \cap D(\Delta^p_{u \Psi|u\Phi})$ into $(\text{Im } u)$, and 
			$(\Delta^p_{u \Psi | u \Phi})|_{(\text{Im } u)}$
			 is densely defined on $(\text{Im } u)$.
		\item $\Delta^p_{u \Psi | u \Phi}$ maps the vector space $(\text{Im } u)^\perp \cap D(\Delta^p_{u \Psi|u\Phi})$ into $(\text{Im } u)^\perp$, and $(\Delta^p_{u \Psi | u \Phi})|_{(\text{Im } u)^\perp}$ is densely defined on $(\text{Im } u)^\perp$.
	\end{itemize}
\end{lem}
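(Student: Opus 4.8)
The plan is to reduce both bullet points to the single statement that the orthogonal projection $\Pi:=uu^\dagger$ onto $\text{Im } u$ preserves the domains of $S^p_{u\Psi|u\Phi}$, of its adjoint, and of $\Delta^p_{u\Psi|u\Phi}$, and commutes with each of these operators on its domain. Note first that $\ket{u\Psi}$ is cyclic and separating with respect to $M_{phys}$ by Lemma~\ref{lem:firstlemma}, so $S^p_{u\Psi|u\Phi}$ is a densely defined closed operator and $\Delta^p_{u\Psi|u\Phi}=S^{p\dagger}_{u\Psi|u\Phi}S^p_{u\Psi|u\Phi}$ is densely defined, positive and self-adjoint.

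The main step, and the one where the domain bookkeeping must be handled with care, is to show that $\Pi$ maps $D(S^p_{u\Psi|u\Phi})$ into itself with $\Pi\,S^p_{u\Psi|u\Phi}\ket{x}=S^p_{u\Psi|u\Phi}\,\Pi\ket{x}$ for every $\ket{x}\in D(S^p_{u\Psi|u\Phi})$; this is the unbounded analogue of equations~\eqref{eq:firstrelation}--\eqref{eq:secondrelation}. Given such $\ket{x}$, choose $\{\calp_n\}\subset M_{phys}$ with $\ket{x}=\lim_n\calp_n\ket{u\Psi}$ and $\lim_n\calp_n^\dagger\ket{u\Phi}$ existing. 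Applying the bounded map $u^\dagger$ and invoking Lemma~\ref{lem:phystocode} (so that $u^\dagger\calp_n u\in M_{code}$), we get $u^\dagger\ket{x}=\lim_n(u^\dagger\calp_n u)\ket{\Psi}$ while $\lim_n(u^\dagger\calp_n u)^\dagger\ket{\Phi}=u^\dagger\lim_n\calp_n^\dagger\ket{u\Phi}$ exists; hence $u^\dagger\ket{x}\in D(S^c_{\Psi|\Phi})$ and $S^c_{\Psi|\Phi}\,u^\dagger\ket{x}=u^\dagger S^p_{u\Psi|u\Phi}\ket{x}$. By Lemma~\ref{lem:reltomita} (which gives $u\,D(S^c_{\Psi|\Phi})\subseteq D(S^p_{u\Psi|u\Phi})$ and $u\,S^c_{\Psi|\Phi}=S^p_{u\Psi|u\Phi}\,u$) it follows that $\Pi\ket{x}=u\,(u^\dagger\ket{x})\in D(S^p_{u\Psi|u\Phi})$ and
\[ S^p_{u\Psi|u\Phi}\,\Pi\ket{x}=u\,S^c_{\Psi|\Phi}\,u^\dagger\ket{x}=u\,u^\dagger\,S^p_{u\Psi|u\Phi}\ket{x}=\Pi\,S^p_{u\Psi|u\Phi}\ket{x}. \]
The subtle point here is that one cannot simply split a generic element of the domain of the closed operator $S^p_{u\Psi|u\Phi}$ along $\text{Im } u\oplus(\text{Im } u)^\perp$; it is Lemma~\ref{lem:phystocode} that lets us pull the approximating sequence back into $M_{code}$.

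Next, because $\Pi$ is bounded and self-adjoint, I would deduce the analogous statement for $S^{p\dagger}_{u\Psi|u\Phi}$ by duality: for $\ket{\eta}\in D(S^{p\dagger}_{u\Psi|u\Phi})$ and every $\ket{\xi}\in D(S^p_{u\Psi|u\Phi})$, the chain $\braket{S^p_{u\Psi|u\Phi}\xi|\Pi\eta}=\braket{S^p_{u\Psi|u\Phi}\Pi\xi|\eta}=\braket{\Pi S^{p\dagger}_{u\Psi|u\Phi}\eta|\xi}$ (using self-adjointness of $\Pi$, the previous step, and the definition of $S^{p\dagger}_{u\Psi|u\Phi}$) exhibits $\ket{\xi}\mapsto\braket{S^p_{u\Psi|u\Phi}\xi|\Pi\eta}$ as a bounded functional, so $\Pi\ket{\eta}\in D(S^{p\dagger}_{u\Psi|u\Phi})$ and $S^{p\dagger}_{u\Psi|u\Phi}\,\Pi\ket{\eta}=\Pi\,S^{p\dagger}_{u\Psi|u\Phi}\ket{\eta}$. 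Composing, $\Pi$ maps $D(\Delta^p_{u\Psi|u\Phi})=\{\ket{\psi}\in D(S^p_{u\Psi|u\Phi}):S^p_{u\Psi|u\Phi}\ket{\psi}\in D(S^{p\dagger}_{u\Psi|u\Phi})\}$ into itself and satisfies $\Delta^p_{u\Psi|u\Phi}\,\Pi=\Pi\,\Delta^p_{u\Psi|u\Phi}$ on this domain; the same conclusions hold verbatim for $I-\Pi$.

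Finally the lemma follows. If $\ket{v}\in(\text{Im } u)\cap D(\Delta^p_{u\Psi|u\Phi})$ then $\ket{v}=\Pi\ket{v}$, so $\Delta^p_{u\Psi|u\Phi}\ket{v}=\Delta^p_{u\Psi|u\Phi}\,\Pi\ket{v}=\Pi\,\Delta^p_{u\Psi|u\Phi}\ket{v}\in\text{Im } u$; if $\ket{w}\in(\text{Im } u)^\perp\cap D(\Delta^p_{u\Psi|u\Phi})$ then $\Pi\ket{w}=0$, so $\Pi\,\Delta^p_{u\Psi|u\Phi}\ket{w}=\Delta^p_{u\Psi|u\Phi}\,\Pi\ket{w}=0$, i.e. $\Delta^p_{u\Psi|u\Phi}\ket{w}\in(\text{Im } u)^\perp$. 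For the density assertions, $D(\Delta^p_{u\Psi|u\Phi})\cap\text{Im } u=\Pi\,D(\Delta^p_{u\Psi|u\Phi})$ and $D(\Delta^p_{u\Psi|u\Phi})\cap(\text{Im } u)^\perp=(I-\Pi)\,D(\Delta^p_{u\Psi|u\Phi})$ because $\Pi$ and $I-\Pi$ preserve $D(\Delta^p_{u\Psi|u\Phi})$; and since $D(\Delta^p_{u\Psi|u\Phi})$ is dense in $\calh_{phys}$ while $\Pi$, $I-\Pi$ are continuous with ranges $\text{Im } u$, $(\text{Im } u)^\perp$, these images are dense in $\text{Im } u$ and $(\text{Im } u)^\perp$ respectively. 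Thus the only genuinely delicate ingredient is the domain-preserving commutation of $\Pi$ with $S^p_{u\Psi|u\Phi}$ in the main step; everything after that is formal.
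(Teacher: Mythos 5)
Your proof is correct, and its opening move is exactly the paper's: pull the approximating sequence $\{\calp_n\}\subset M_{phys}$ back into $M_{code}$ via Lemma \ref{lem:phystocode} to conclude $u^\dagger\ket{x}\in D(S^c_{\Psi|\Phi})$ with $S^c_{\Psi|\Phi}u^\dagger\ket{x}=u^\dagger S^p_{u\Psi|u\Phi}\ket{x}$, then use Lemma \ref{lem:reltomita} to get $uu^\dagger\ket{x}\in D(S^p_{u\Psi|u\Phi})$ and the intertwining with $\Pi=uu^\dagger$. After that you genuinely diverge. The paper never commutes $\Pi$ through $S^{p\,\dagger}_{u\Psi|u\Phi}$ abstractly: it reruns the domain-splitting argument for the commutant algebras, identifies $S^{p\,\dagger}_{u\Psi|u\Phi}$ with $S^{p\,\prime}_{u\Psi|u\Phi}$, restricts $S^p_{u\Psi|u\Phi}$ to $\text{Im } u$ and $(\text{Im } u)^\perp$ (proving density of the restricted domains by hand and noting the restrictions are closed), and then invokes the theorem that $T^\dagger T$ is densely defined and self-adjoint for the restrictions. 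You instead show by the adjoint-duality computation that $\Pi$ preserves $D(S^{p\,\dagger}_{u\Psi|u\Phi})$ and commutes with it, compose to get that $\Pi$ reduces $\Delta^p_{u\Psi|u\Phi}$, and deduce density of $D(\Delta^p_{u\Psi|u\Phi})\cap\text{Im } u$ in $\text{Im } u$ from density of $D(\Delta^p_{u\Psi|u\Phi})$ together with continuity of $\Pi$. Your route avoids both the commutant detour (no appeal to $S^{p\,\dagger}=S^{p\,\prime}$, hence no conditions on $u\ket{\Phi}$ enter at that point) and the paper's implicit claim that the restriction of the adjoint equals the adjoint of the restriction; the paper's route, in exchange, produces the auxiliary statements about $S^{p\,\prime}_{u\Psi|u\Phi}$ and the closed, densely defined restrictions of $S^p_{u\Psi|u\Phi}$ that it reuses in the proof of Theorem \ref{thm:otherdirection}. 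Two small remarks: since $S^p_{u\Psi|u\Phi}$ is antilinear, your duality chain should be read with the antilinear adjoint convention $\braket{S^\dagger\phi|\psi}=\braket{S\psi|\phi}$ (it then goes through verbatim), and the "bounded functional" phrasing is superfluous — the identity $\braket{S^p_{u\Psi|u\Phi}\xi|\Pi\eta}=\braket{\Pi S^{p\,\dagger}_{u\Psi|u\Phi}\eta|\xi}$ for all $\xi\in D(S^p_{u\Psi|u\Phi})$ is by itself exactly what the definition of the adjoint requires.
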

\begin{proof}
Let $\ket{x} \in D(S^p_{u\Psi | u \Phi})$. Then there exists a sequence $\{\calp_n\} \in M_{phys}$ such that
\begin{equation} \ket{x} = \lim_{n \rightarrow \infty} \calp_n \ket{u \Psi},\ \text{and} \ \lim_{n \rightarrow \infty} \calp_n^\dagger \ket{u \Phi}\ \text{exists.}\end{equation}
Then $u^\dagger \ket{x} \in D(S^c_{\Psi|\Phi})$.  We may write
\begin{equation} S^c_{\Psi|\Phi} u^\dagger \ket{x} = u^\dagger S^p_{u \Psi | u \Phi} \ket{x}. \label{eq:relation} \end{equation}
The fact that $u^\dagger \ket{x} \in D(S^c_{\Psi|\Phi})$ and Lemma \ref{lem:reltomita} together imply that $u u^\dagger \ket{x} \in D(S^p_{u\Psi | u\Phi})$.

We may uniquely decompose $\ket{x}$ into the sum
\begin{equation} \ket{x} = \ket{a} + \ket{b} \end{equation}
where $\ket{a} \in \text{Im } u$ and $\ket{b} \in (\text{Im } u)^\perp$. We know that $\ket{a} = u u^\dagger \ket{x} \in D(S^p_{u \Psi | u \Phi})$. As $D(S^p_{u \Psi | u \Phi})$ is a vector space, this implies that $\ket{b} \in D(S^p_{u \Psi | u \Phi})$.

It follows from the above that
\begin{equation} D(S^p_{u \Psi|u\Phi}) = \text{Im } u \cap D(S^p_{u \Psi|u\Phi}) \oplus (\text{Im } u)^\perp \cap D(S^p_{u \Psi|u\Phi}). \end{equation}

From equation \eqref{eq:relation},
\begin{equation} uu^\dagger S^p_{u \Psi | u \Phi} \ket{b} = 0, \end{equation}
which means that $S^p_{u \Psi | u \Phi}$ maps the vector space $(\text{Im } u)^\perp \cap D(S^p_{u \Psi|u\Phi}) \rightarrow (\text{Im } u)^\perp$.

From Lemma \ref{lem:reltomita} we may write, for all $\ket{x} \in D(S^p_{u\Psi | u \Phi})$,
\begin{equation} 
u S^c_{\Psi|\Phi} u^\dagger \ket{x}= S^p_{u \Psi | u \Phi} u u^\dagger \ket{x}.\end{equation}
It follows from $u u^\dagger \ket{x} = \ket{a}$ that
\begin{equation} 
u S^c_{\Psi|\Phi} u^\dagger \ket{x}= S^p_{u \Psi | u \Phi} \ket{a}.\end{equation}
It follows from $u^\dagger \ket{b} = 0$ that
\begin{equation} 
u S^c_{\Psi|\Phi} u^\dagger \ket{a} = S^p_{u \Psi | u \Phi}  \ket{a},\end{equation}
which means that $S^p_{u \Psi | u \Phi}$ maps the vector space $(\text{Im } u) \cap D(S^p_{u \Psi|u\Phi}) \rightarrow (\text{Im } u)$.

We will now show that $(\text{Im } u) \cap D(S^p_{u \Psi|u\Phi})$ is dense in $(\text{Im } u)$. Given any $\ket{A} \in (\text{Im } u)$, choose $\ket{X} \in \calh_{phys}$ such that $uu^\dagger \ket{X} = \ket{A}$. Next, choose a sequence $\{\ket{x_n}\} \in D(S^p_{u \Psi|u\Phi})$ that converges to $\ket{X}$. We then have that $\lim_{n \rightarrow \infty} uu^\dagger \ket{x_n} =\ket{A}$. Since $\ket{x_n} \in D(S^p_{u \Psi|u\Phi})$, we know from earlier that $uu^\dagger \ket{x_n} \in D(S^p_{u \Psi|u\Phi})$. Hence, $(\text{Im } u) \cap D(S^p_{u \Psi|u\Phi})$ is dense in $(\text{Im } u)$. The same logic shows that $(\text{Im } u)^\perp \cap D(S^p_{u \Psi|u\Phi})$ is dense in $(\text{Im } u)^\perp$.

Furthermore, $(S^p_{u\Psi | u \Phi})|_{(\text{Im } u)}$ is a closed operator because $(\text{Im } u)$ is a closed subspace.

We can apply all of the above logic to the commutant algebras.  To summarize,
\begin{itemize}
	\item $S^p_{u \Psi | u \Phi}$ maps the vector space $(\text{Im } u)^\perp \cap D(S^p_{u \Psi|u\Phi})\rightarrow(\text{Im } u)^\perp$, and $(S^p_{u \Psi | u \Phi})|_{(\text{Im } u)^\perp}$ is closed and densely defined on $(\text{Im } u)^\perp$.
	\item $S^p_{u \Psi | u \Phi}$ maps the vector space $(\text{Im } u) \cap D(S^p_{u \Psi|u\Phi})\rightarrow(\text{Im } u)$, and $(S^p_{u \Psi | u \Phi})|_{(\text{Im } u)}$ is closed and densely defined on $(\text{Im } u)$.
	\item $S^{p \, \prime}_{u \Psi | u \Phi}$ maps the vector space $(\text{Im } u)^\perp \cap D(S^{p \, \prime}_{u \Psi|u\Phi})\rightarrow(\text{Im } u)^\perp$, and $(S^{p \, \prime}_{u \Psi | u \Phi})|_{(\text{Im } u)^\perp}$ is closed and densely defined on $(\text{Im } u)^\perp$.
	\item $S^{p \, \prime}_{u \Psi | u \Phi}$ maps the vector space $(\text{Im } u) \cap D(S^{p \, \prime}_{u \Psi|u\Phi})\rightarrow(\text{Im } u)$, and $(S^{p \, \prime}_{u \Psi | u \Phi})|_{(\text{Im } u)}$ is closed and densely defined on $(\text{Im } u)$.
\end{itemize}
It directly follows that the above statements also hold for the adjoints $S^{p \, \dagger}_{u \Psi | u \Phi}$ and $S^{p \, \prime \, \dagger}_{u \Psi | u \Phi}$.
Recall that $\Delta^p_{u \Psi | u \Phi} = S^{p \, \dagger}_{u \Psi | u \Phi} S^p_{u \Psi | u \Phi}$. We may compute $(\Delta^p_{u \Psi | u \Phi})|_{(\text{Im } u)}$ from $(S^p_{u \Psi | u \Phi})|_{(\text{Im } u)}$ and $(\Delta^p_{u \Psi | u \Phi})|_{(\text{Im } u)^\perp}$ from $(S^p_{u \Psi | u \Phi})|_{(\text{Im } u)^\perp}$.
In particular, $(\Delta^p_{u \Psi | u \Phi})|_{(\text{Im } u)}$ is given by 
\begin{equation}
(\Delta^p_{u \Psi | u \Phi})|_{(\text{Im } u)} = (S^{p \, \dagger}_{u \Psi | u \Phi}|_{(\text{Im } u)})( S^p_{u \Psi | u \Phi}|_{(\text{Im } u)}) = (S^{p }_{u \Psi | u \Phi}|_{(\text{Im } u)})^\dagger (S^p_{u \Psi | u \Phi}|_{(\text{Im } u)}).
\end{equation} 
It follows that $(\Delta^p_{u \Psi | u \Phi})|_{(\text{Im } u)}$ is densely defined and self-adjoint on $(\text{Im }u)$. The same logic can be applied to $(\Delta^p_{u \Psi | u \Phi})|_{(\text{Im } u)^\perp}$.
\end{proof}

Having established Lemmas \ref{lem:firstlemma} to \ref{lem:lastlemma}, we can now prove Theorem \ref{thm:otherdirection}, which shows that entanglement wedge reconstruction implies the equivalence of bulk and boundary relative entropies.

\begin{thm:otherdirection}
Let $u : \calh_{code}\rightarrow \calh_{phys}$ be an isometry between two Hilbert spaces. Let $M_{code}$ and $M_{phys}$ be von Neumann algebras on $\calh_{code}$ and $\calh_{phys}$ respectively. Let $M^\prime_{code}$ and $M^\prime_{phys}$ respectively be the commutants of $M_{code}$ and $M_{phys}$. 

\noindent Suppose that

\begin{itemize}
	\item There exists some state $\ket{\Omega} \in \calh_{code}$ such that $u\ket{\Omega} \in \calh_{phys}$ is cyclic and separating with respect to $M_{phys}$. 
	\item $\forall \calo \in M_{code}\ \forall \calo^\prime \in M_{code}^\prime, \quad 
	\exists\tilde{\calo} \in M_{phys}\ \exists \tilde{\calo}^\prime \in M_{phys}^\prime$ such that
	\begin{align} \nonumber
	\begin{split}
	\forall \ket{\Theta} \in \calh_{code} \quad 
	\begin{cases}
	u \calo \ket{\Theta} =  \tilde{\calo} u \ket{\Theta}, \quad
	&u \calo^\prime \ket{\Theta} =  \tilde{\calo}^\prime u \ket{\Theta}, \\
	u \calo^\dagger \ket{\Theta} =  \tilde{\calo}^\dagger u \ket{\Theta}, \quad
	&u \calo^{\prime \dagger} \ket{\Theta} = \tilde{\calo}^{\prime\dagger} u \ket{\Theta}.
	\end{cases}
	\end{split}
	\end{align}
\end{itemize}

\noindent Then, for any $\ket{\Psi}$, $\ket{\Phi} \in \calh_{code}$ with $\ket{\Psi}$ cyclic and separating with respect to $M_{code}$, 
\begin{itemize}
	\item $u \ket{\Psi}$ is cyclic and separating with respect to $M_{phys}$ and $M_{phys}^\prime$,
	\item $\cals_{\Psi|\Phi}(M_{code})= \cals_{u\Psi|u\Phi}(M_{phys}), \quad \cals_{\Psi|\Phi}(M_{code}^\prime)= \cals_{u\Psi|u\Phi}(M_{phys}^\prime),$
\end{itemize}
where $\cals_{\Psi|\Phi}(M)$ is the relative entropy.
\end{thm:otherdirection}

\begin{proof}
	$\Delta^p_{u \Psi| u\Phi}$ and $\Delta^c_{\Psi|\Phi}$ are positive, densely defined, self-adjoint operators that are generically unbounded. Thus, we may use the spectral theorem to study them. We will apply the spectral theorem to $(\Delta^p_{u \Psi| u\Phi})|_{\text{Im } u}$ and $(\Delta^p_{u \Psi| u\Phi})|_{(\text{Im } u)^\perp}$ separately. We write
	\begin{equation}
(\Delta^p_{u \Psi| u\Phi})|_{\text{Im } u} = \int_{\mathbb{R}} \lambda \, d (P^{\text{Im } u}_\lambda) , \
(\Delta^p_{u \Psi| u\Phi})|_{(\text{Im } u)^\perp} = \int_{\mathbb{R}} \lambda \, d (P^{(\text{Im } u)^\perp}_\lambda).
\end{equation}
For a Borel set $\Omega \subset \mathbb{R}$, the projections $P_\Omega^{\text{Im } u}$ and $P_\Omega^{(\text{Im } u)^\perp}$ commute with $u u^\dagger$ because $u u^\dagger$ is the projection onto $\text{Im } u$. The spectral decomposition of $\Delta^p_{u \Psi| u\Phi}$ is given by
\begin{equation}\Delta^p_{u \Psi| u\Phi} = \int_{\mathbb{R}} \lambda \, d (P^{p}_\lambda). \end{equation}
By the uniqueness of the spectral decomposition, we have that $P_\Omega^p = P_\Omega^{\text{Im } u} + P_\Omega^{(\text{Im } u)^\perp}$. Thus, $P_\Omega^p$ commutes with $u u^\dagger$. Let $\Omega_1$ and $\Omega_2$ be two Borel sets. Then
\begin{equation}
u^\dagger P_{\Omega_1}^p u  u^\dagger P_{\Omega_2}^p u = u^\dagger P_{\Omega_1}^p  P_{\Omega_2}^p u  .
\end{equation}
One can then check that the family of projections $u^\dagger P_{\Omega}^p u = u^\dagger P_{\Omega}^{\text{Im } u} u$ is a projection valued measure on $\calh_{code}$. We will now show that this is the projection valued measure associated with $\Delta^c_{\Psi|\Phi}$. From Lemma \ref{lem:reltomita} we have that
\begin{equation}
uS^{c }_{\Psi|\Phi} u^\dagger =  S^p_{u\Psi | u\Phi} u u^\dagger = (S^p_{u\Psi | u\Phi})|_{(\text{Im } u)} . 
\end{equation}
We may take the adjoint of the above equation to obtain
\begin{equation}
uS^{c \, \dagger }_{\Psi|\Phi} u^\dagger = (S^{p \, \dagger}_{u\Psi | u\Phi})|_{(\text{Im } u)}, 
\end{equation}
from which it follows that
\begin{equation}
\label{eq:modularoperators}
u\Delta^{c }_{\Psi|\Phi}u^\dagger = (\Delta^p_{u\Psi | u\Phi})|_{(\text{Im } u)}, \quad \Delta^{c }_{\Psi|\Phi} = u^\dagger \Delta^p_{u\Psi | u\Phi}u .
\end{equation}
For any $\ket{\Theta} \in D(\Delta^c_{\Psi | \Phi})$, we have that
\begin{equation} \Delta^c_{\Psi | \Phi} \ket{\Theta} =  u^\dagger\Delta^p_{u \Psi | u \Phi} u\ket{ \Theta} = \int_\mathbb{R} \lambda \, d (u^\dagger P^p_\lambda  u \ket{  \Theta}). \end{equation}
By the uniqueness of the spectral decomposition of $\Delta^c_{\Psi | \Phi}$, we conclude that $u^\dagger P^p_\Omega u$ is the projection valued measure associated with $\Delta^c_{\Psi | \Phi}$.

It follows that
\begin{align}
\begin{split} -\braket{\Psi|\log(\Delta^c_{\Psi | \Phi}) |\Psi} &= -\int_0^\infty \log(\lambda) \, d ( \braket{ \Psi| u^\dagger P^p_\lambda u |  \Psi}) \\&= - \int_0^\infty \log(\lambda) \, d ( \braket{ u\Psi| P^p_\lambda |u  \Psi}) = -\braket{u \Psi| \log(\Delta^p_{u \Psi | u \Phi}) |u\Psi}.
\end{split} \end{align}

The same logic can be applied to the commutant algebras $M_{code}^\prime$ and $M_{phys}^\prime$. Hence,
\begin{align} 
\cals_{\Psi|\Phi}(M_{code})= \cals_{u\Psi|u\Phi}(M_{phys}), \quad \cals_{\Psi|\Phi}(M_{code}^\prime)= \cals_{u\Psi|u\Phi}(M_{phys}^\prime) .
\end{align}
\end{proof}

\section{Proof of Theorem \ref{thm:maintheorem}}
\label{sec:converse}
\begin{thm:maintheorem}
	Let $u : \calh_{code}\rightarrow \calh_{phys}$ be an isometry between two Hilbert spaces. Let $M_{code}$ and $M_{phys}$ be von Neumann algebras on $\calh_{code}$ and $\calh_{phys}$ respectively. Let $M^\prime_{code}$ and $M^\prime_{phys}$ respectively be the commutants of $M_{code}$ and $M_{phys}$. Suppose that the set of cyclic and separating vectors with respect to $M_{code}$ is dense in $\calh_{code}$. Also suppose that if $\ket{\Psi} \in \calh_{code}$ is cyclic and separating with respect to $M_{code}$, then $u \ket{\Psi}$ is cyclic and separating with respect to $M_{phys}$. Then the following two statements are equivalent:

\begin{description}
	\item[ 1. Bulk reconstruction ]
	
	$\forall \calo \in M_{code}\ \forall \calo^\prime \in M_{code}^\prime, \quad 
	\exists\tilde{\calo} \in M_{phys}\ \exists \tilde{\calo}^\prime \in M_{phys}^\prime$ such that
	\begin{align} \nonumber
	\begin{split}
	\forall \ket{\Theta} \in \calh_{code} \quad 
	\begin{cases}
	u \calo \ket{\Theta} =  \tilde{\calo} u \ket{\Theta}, \quad
	&u \calo^\prime \ket{\Theta} =  \tilde{\calo}^\prime u \ket{\Theta}, \\
	u \calo^\dagger \ket{\Theta} =  \tilde{\calo}^\dagger u \ket{\Theta}, \quad
	&u \calo^{\prime \dagger} \ket{\Theta} = \tilde{\calo}^{\prime\dagger} u \ket{\Theta}.
	\end{cases}
	\end{split}
	\end{align}
	
	\item[ 2. Relative entropy equals bulk relative entropy]
	For any $\ket{\Psi}$, $\ket{\Phi} \in \calh_{code}$ with $\ket{\Psi}$ cyclic and separating with respect to $M_{code}$, \[\cals_{\Psi|\Phi}(M_{code})= \cals_{u\Psi|u\Phi}(M_{phys}), \text{and} \  \cals_{\Psi|\Phi}(M_{code}^\prime)= \cals_{u\Psi|u\Phi}(M_{phys}^\prime),\] where $\cals_{\Psi|\Phi}(M)$ is the relative entropy.
\end{description}

\end{thm:maintheorem}

\begin{proof}
Given the proof of Theorem \ref{thm:otherdirection}, we only need to show that statement 2 implies statement 1. Let $\ket{\Phi} \in \calh_{code}$ be cyclic and separating with respect to $M_{code}$, and let $U \in M_{code}$ and $U^\prime \in M_{code}^\prime$ be unitary operators. We can easily see that
\begin{equation}
0 = \cals_{\Phi | U^\prime \Phi}(M_{code}) = \cals_{u\Phi | uU^\prime \Phi}(M_{phys}).
\end{equation}
Due to Theorem \ref{thm:saturation}, this implies that
\begin{equation}
\Delta^p_{u \Phi | u U^\prime \Phi} \ket{u \Phi} = \ket{u \Phi},
\end{equation}
where $\Delta^p_{u \Phi | u U^\prime \Phi} = S^{p \, \dagger}_{u \Phi | u U^\prime \Phi} S^p_{u \Phi | u U^\prime \Phi}$ and $S^p_{u \Phi | u U^\prime \Phi}$ is the relative modular operator defined with respect to $M_{phys}$. It follows that for any $\calp \in M_{phys}$,
\begin{equation}
\label{eq:matrixelofp}
\braket{ u U^\prime \Phi |  \calp u U^\prime \Phi }
=
\braket{S^p_{u \Phi | u U^\prime \Phi} u \Phi | S^{p}_{u \Phi | u U^\prime \Phi} \calp^\dagger u  \Phi}
=
\braket{\calp^\dagger u  \Phi|S^{p \, \dagger}_{u \Phi | u U^\prime \Phi} S^p_{u \Phi | u U^\prime \Phi} u \Phi}  = \braket{u \Phi |\calp | u \Phi}.
\end{equation}
This implies that
\begin{equation}
\braket{    \Phi   | U^{\prime \, \dagger} u^\dagger \calp u U^\prime - u^\dagger \calp u|\Phi}  = 0.
\end{equation}
	We now use the assumption that cyclic and separating vectors with respect to $M_{code}$ are dense in $\calh_{code}$. For any $\ket{\Psi} \in \calh_{code}$, choose a sequence $\{\ket{\Phi_n}\} \in \calh_{code}$ such that each $\ket{\Phi_n}$ is cyclic and separating with respect to $M_{code}$, and $\ket{\Psi} = \lim_{n \rightarrow \infty} \ket{\Phi_n}$. Then,
\begin{equation}
\braket{    \Psi   | U^{\prime \, \dagger} u^\dagger \calp u U^\prime - u^\dagger \calp u|\Psi} = \lim_{n \rightarrow \infty} \braket{    \Phi_n   | U^{\prime \, \dagger} u^\dagger \calp u U^\prime - u^\dagger \calp u|\Phi_n} = 0.
\end{equation}
Hence, this implies that the  operators that are measured in the limit itself is zero, i.e. $U^{\prime \, \dagger} u^\dagger \calp u U^\prime - u^\dagger \calp u = 0$. This then gives the following identity involving the isometry $u$, an arbitrary operator $\calp \in M_{phys}$, and a unitary operator $U^\prime \in M_{code}^{\prime}$:
\begin{equation}
u^\dagger \calp u U^\prime = U^{\prime } u^\dagger \calp u .
\end{equation}
The same logic can be applied to the commutant algebras; thus, for any $\calp^\prime \in M_{phys}^\prime$, $U \in M_{code}$ with $U$ unitary, we have a similar relation:
\begin{equation}
\label{eq:commuting}
u^\dagger \calp^\prime u U = U u^\dagger \calp^\prime u.
\end{equation}

Another consequence of equation \eqref{eq:matrixelofp} is that for any $\calp_1,\calp_2 \in M_{phys}$, we have that
\begin{equation} \label{eq:unitary} \braket{\calp_1 u U^\prime \Phi | \calp_2 u U^\prime \Phi} = \braket{ \calp_1 u \Phi | \calp_2 u \Phi}. \end{equation}
Naturally, we define a linear map $X^{\prime \, \Phi \, U^\prime} : \calh_{phys}\rightarrow \calh_{phys}$. We define $X^{\prime \, \Phi \, U^\prime}$ by
\begin{equation}
\label{eq:defXp}
X^{\prime \, \Phi \, U^\prime}  \calp u \ket{\Phi} := \calp u U^\prime \ket{\Phi} \quad \forall \calp \in M_{phys}.
\end{equation}
Then we see that $X^{\prime \, \Phi \, U^\prime}$ is densely defined. From equation \eqref{eq:unitary}, we see that $X^{\prime \, \Phi \, U^\prime}$ preserves the norm of all vectors in its domain. Hence, $X^{\prime \, \Phi \, U^\prime}$ may be uniquely extended to a bounded operator, which is unitary. By definition, $X^{\prime \, \Phi \, U^\prime}$ commutes with all operators in $M_{phys}$; hence, we deduce that $X^{\prime \, \Phi \, U^\prime} \in M_{phys}^\prime$. (The superscripts on $X^{\prime \, \Phi \, U^\prime}$ remind us that it depends on the choice of $\ket{\Phi}$ and $U^\prime$ and that it is in the commutant of $M_{phys}$.)

Next, we use equations \eqref{eq:commuting} and \eqref{eq:defXp} with $\calp^\prime = X^{\prime \, \Phi \, U^\prime}$. We find that
\begin{equation}
u^\dagger X^{\prime \, \Phi \, U^\prime} u U \ket{\Phi} = U u^\dagger X^{\prime \, \Phi \, U^\prime} u \ket{\Phi} = U u^\dagger u U^\prime \ket{\Phi} = U U^\prime \ket{\Phi} = U^\prime U \ket{\Phi}.
\end{equation}
The first equality follows from equation \eqref{eq:commuting}, the second equality follows from \eqref{eq:defXp}, the third equality follows from the fact that $u^\dagger u$ is the identity on $\calh_{code}$, and the last equality follows because $U \in M_{code}$ and $U^\prime \in M_{code}^\prime$. Recall that $U$ is an arbitrary unitary operator in $M_{code}$. We now need Theorem \ref{thm:fourunitaries}, which states that any operator in $M_{code}$ may be written as a linear combination of four unitary operators in $M_{code}$ \cite{Jones-vNalg}. The above equation implies that for any $\calo \in M_{code}$, we have that
\begin{equation}
(u^\dagger X^{\prime \, \Phi \, U^\prime} u - U^\prime)\calo \ket{\Phi} = 0.
\end{equation}
Note that $(u^\dagger X^{\prime \, \Phi \, U^\prime} u - U^\prime)$ is a bounded operator, so its kernel is closed. Recall that $\ket{\Phi}$ is cyclic with respect to $M_{code}$. Since any vector in the Hilbert space may be written as $\lim_{n \rightarrow \infty} \calo_n \ket{\Phi}$ for some sequence of operators $\{\calo_n\} \in M_{code}$, it follows that $(u^\dagger X^{\prime \, \Phi \, U^\prime} u - U^\prime)$ annihilates every vector in $\calh_{code}$. In other words,
\begin{equation}
\label{eq:operatormap}
u^\dagger X^{\prime \, \Phi \, U^\prime} u = U^\prime.
\end{equation}
Choose an arbitrary $\ket{\Psi} \in \calh_{code}$ with $\braket{\Psi|\Psi} = 1$. We may uniquely write $X^{\prime \, \Phi \, U^\prime} u \ket{\Psi}$ as
\begin{equation}
X^{\prime \, \Phi \, U^\prime} u\ket{\Psi} = \ket{a} + \ket{b},
\end{equation}
where $\ket{a} \in \text{Im } u$, and $\ket{b} \in (\text{Im u})^\perp$. Note that $X^{\prime \, \Phi \, U^\prime}$ is unitary; hence, we can decompose as
\begin{equation}
\braket{u \Psi | X^{\prime \, \Phi \, U^\prime \, \dagger} X^{\prime \, \Phi \, U^\prime} | u \Psi}   = 1 = \braket{a|a} + \braket{b|b}.
\end{equation}
Next, note that
\begin{equation}
u^\dagger \ket{a} = u^\dagger (\ket{a} + \ket{b})  
=u^\dagger X^{\prime \, \Phi \, U^\prime} u \ket{\Psi} = U^\prime \ket{\Psi}.
\end{equation}
Hence, 
\begin{equation}
\braket{a |a} = \braket{u^\dagger a | u^\dagger a} = \braket{U^\prime \Psi| U^\prime \Psi} = 1.
\end{equation}
This implies that $\braket{b|b} = 0$; hence $\ket{b} = 0$. Hence, $X^{\prime \Phi U^\prime}$ maps the vector space $\text{Im u}$ to itself. We may then use equation \eqref{eq:operatormap} to find that
\begin{equation}
\label{eq:bulktoboundaryequivalence}
X^{\prime \, \Phi \, U^\prime} u = u u^\dagger X^{\prime \, \Phi \, U^\prime} u = u U^\prime.
\end{equation}

Next, we define a linear map $X^{\prime \, (U^\prime \Phi) \, (U^{\prime \, \dagger})} : \calh_{phys}\rightarrow \calh_{phys}$. We define $X^{\prime \, (U^\prime \Phi) \, (U^{\prime \, \dagger})}$ by
\begin{equation}
X^{\prime \, (U^\prime \Phi) \, (U^{\prime \, \dagger})}  \calp u  U^\prime \ket{\Phi} := \calp u  \ket{\Phi} \quad \forall \calp \in M_{phys}.
\end{equation}
It is easy to see that $U^\prime \ket{\Phi}$ is cyclic and separating with respect to $M_{code}$ given that $\ket{\Phi}$ is cyclic and separating with respect to $M_{code}$ and that $U^\prime \in M_{code}^\prime$ is unitary. It follows that $X^{\prime \, (U^\prime \Phi) \, (U^{\prime \, \dagger})}$ is densely defined and uniquely extends to a bounded operator, which is unitary. Since equation \eqref{eq:bulktoboundaryequivalence} is true for any $\ket{\Phi} \in \calh_{code}$ that is cyclic and separating with respect to $M_{code}$ and any unitary $U^\prime \in M_{code}^\prime$,
\begin{equation}
X^{\prime \, (U^\prime \Phi) \, (U^{\prime \, \dagger})} u  = u U^{\prime \, \dagger}.
\end{equation}
This relation can be used to see that for any $\calp \in M_{phys}$,
\begin{equation}
X^{\prime \, (U^\prime \Phi) \, (U^{\prime \, \dagger})}  X^{\prime \, \Phi \, U^\prime}  \calp u \ket{\Phi} = \calp u  \ket{\Phi}.
\end{equation}
Thus, we deduce that the two operators we defined are adjoints of each other:
\begin{equation}
(X^{\prime \, \Phi \, U^\prime} )^\dagger = X^{\prime \, (U^\prime \Phi) \, (U^{\prime \, \dagger})}.
\end{equation}
We have thus shown that for every unitary operator $U^\prime \in M_{code}^\prime$, there exists a unitary operator $X^\prime \in M_{phys}^\prime$ such that
\begin{equation}
X^{\prime } u = u U^\prime, \ \text{and} \ X^{\prime \, \dagger} u = u U^{\prime \dagger}.
\end{equation}
The same logic applies to show that for every unitary operator $U \in M_{code}$, there exists a unitary operator $X \in M_{phys}$ such that
\begin{equation}
X u = u U, \ \text{and} \ X^{ \dagger} u = u U^{\dagger}.
\end{equation}
We conclude the proof by noting that any operator in a von Neumann algebra $M$ may be written as a linear combination of four unitary operators in $M$ (Theorem \ref{thm:fourunitaries}).
\end{proof}

Our proof provides an explicit formula for reconstructing an operator in $M_{code}$ as an operator in $M_{phys}$. Given $\calo \in M_{code}$, we define the operator $\tilde{\calo} \in M_{phys}$ by 
\begin{equation}
\label{eq:reconstructionformula}
\tilde{\calo} \calp^\prime u \ket{\Phi} := \calp^\prime u \calo \ket{\Phi} \quad \forall \calp^\prime \in M_{phys}^\prime,  
\end{equation}
where $\ket{\Phi} \in \calh_{code}$ is a fiducial state that is cyclic and separating with respect to $M_{code}$ and $M_{code}^\prime$. This formula follows from writing $\calo$ as a linear combination of four unitary operators in $M_{code}$ and using equation \eqref{eq:defXp} on each unitary operator. The arguments in our proof then establish that $\tilde{\calo} u = u \calo$. Note that $\tilde{\calo}$ does not depend on the choice of the fiducial state $\ket{\Phi}$. To see this, we define $\tilde{\calo}_\star \in M_{code}$ by
\begin{equation}
\tilde{\calo}_\star \calp^\prime u \ket{\Phi_\star} := \calp^\prime u \calo \ket{\Phi_\star} \quad \forall \calp^\prime \in M_{phys}^\prime,  
\end{equation}
where $\ket{\Phi_\star} \in \calh_{code}$ is a different fiducial state. Since $\tilde{\calo} u \ket{\Phi_\star} =  u \calo \ket{\Phi_\star}$, it follows that
\begin{equation}
\tilde{\calo} \calp^\prime u \ket{\Phi_\star} =  \calp^\prime  \tilde{\calo} u \ket{\Phi_\star}
=  \calp^\prime   u \calo \ket{\Phi_\star} = \tilde{\calo}_\star \calp^\prime u \ket{\Phi_\star}  \quad \forall \calp^\prime \in M_{phys}^\prime. 
\end{equation}
Hence, $\tilde{\calo}$ and  $\tilde{\calo}_\star$ are equal because they are both bounded operators that act the same way on a dense subspace of $\calh_{code}$.

\section{Discussion}
\label{sec:discussion}

In this section, we discuss the physical implications of Theorem \ref{thm:maintheorem}. In particular, we explain in physical settings the validity of the technical assumptions of the theorem. In Section \ref{sec:vnalgebras}, we motivate our use of von Neumann algebras by explaining how they arise in quantum field theory, with an approach inspired by \cite{Haag}. In Section \ref{sec:approx}, we summarize reasons why Theorem \ref{thm:maintheorem} is only approximately applicable to quantum gravity. In Section \ref{sec:rs}, we summarize the Reeh--Schlieder theorem. In Section \ref{sec:rsapplied}, we use the Reeh--Schlieder theorem to physically motivate the assumptions of Theorem \ref{thm:maintheorem}. In Section \ref{sec:implications}, we compare Theorem \ref{thm:maintheorem} with previous work on finite-dimensional error correction \cite{Harlow:2016fse}. 

\subsection{Von Neumann algebras in quantum field theory}
\label{sec:vnalgebras}

 Quantum field theories are characterized by algebras of operators acting on a Hilbert space $\calh$. For every open region in spacetime, there is an associated algebra \cite{Haag}. We will assume that there is a unique ground state $\ket{\Omega} \in \calh$. The closure of the set of states obtained by acting on $\ket{\Omega}$ with all operators in the algebra associated with the entire spacetime is defined to be the vacuum superselection sector, $\calh_0$. By definition, each superselection sector of the theory is an invariant subspace of this algebra.

Theories with lagrangian descriptions have a notion of an elementary field. Given an open region of spacetime $\calu$, we can define an associated operator algebra $\cala(\calu)$ by smearing the elementary fields with functions supported only in $\calu$.\footnote{ Assuming that the time-slice axiom \cite{Haag} holds, $\cala(\calu)$ should really be associated with the domain of dependence of $\calu$, as operators in the domain of dependence are related to operators in $\calu$ via an equation of motion. Note that the time-slice axiom does not hold for generalized free fields \cite{DuetschRehren}, which we consider in Section \ref{sec:approx}.} The operator algebra $\cala(\calu)$ generically contains unbounded operators. Given $\cala(\calu)$, we may obtain a von Neumann algebra $M(\calu)$, which only consists of bounded operators, as follows \cite{Haag}. For every unbounded operator (which we assume to be closed) in $\cala(\calu)$, we may perform a polar decomposition to obtain a partial isometry and a self-adjoint positive operator, which is canonically associated with a set of projections by the spectral theorem. The von Neumann algebra $M(\calu)$ is generated by the set of all spectral projections and partial isometries associated with the operators in $\cala(\calu)$.\footnote{If a subalgebra $\cals$ of bounded operators contains the identity and is closed under hermitian conjugation, then its double commutant, $S^{\prime \prime}$, is the von Neumann algebra generated by $\cals$. Von Neumann algebras are naturally associated with causally complete subregions \cite{Witten:2018zxz,Jones-vNalg}.} We assume that the operators in $\cala(\calu)$ may be approximated by operators in $M(\calu)$. As shown in \cite{Witten:2018zxz}, the Reeh--Schlieder theorem implies that states with bounded energy-momentum are cyclic with respect to $\cala(\calu)$ for any open subregion of spacetime $\calu$. We assume that this is also true for $M(\calu)$.

\subsection{Approximate entanglement wedge reconstruction}
\label{sec:approx}

Throughout the paper, we have used von Neumann algebras to denote subregions in the bulk and the boundary. In AdS/CFT, the boundary theory is a quantum field theory, so the  discussion in Section \ref{sec:vnalgebras} directly applies. However, the bulk theory is a theory of quantum gravity (string theory). For states with a semi-classical bulk dual, the bulk theory may be effectively described using quantum field theory on an asymptotically AdS background that might contain black holes. The applicability of quantum field theory motivates us to use von Neumann algebras to describe operators associated with covariantly defined subregions in the bulk, like the entanglement wedge of a boundary subregion.\footnote{Associating a set of operators with a subregion in the bulk is highly nontrivial due to nonlocal effects in the bulk \cite{ghosh2017}. This is addressed in \cite{ghosh2018}, which studies information measures for sets of operators that are not closed under multiplication. We do not consider this subtlety in our analysis.} Since entanglement wedges are causally complete, they naturally have an associated von Neumann algebra.

Since the long-distance bulk physics is only approximately described by quantum field theory, we need a generalization of Theorem \ref{thm:maintheorem} that relates the approximate bulk reconstruction to the approximate equivalence of relative entropies between the boundary and the bulk. We want to note that our formulation of bulk reconstruction in Theorem \ref{thm:maintheorem} is exact in the sense that correlation functions of operators in $M_{code}$ exactly equal correlation functions computed on the boundary with the corresponding operators in $M_{phys}$.

To be more precise, Theorem \ref{thm:maintheorem} is only valid for certain choices of the code subspace. If the code subspace consists of states with semi-classically distinct geometries, it is not clear how von Neumann algebras can be associated with subregions in a state independent way. For Theorem \ref{thm:maintheorem} to be relevant, we could choose $\calh_{code}$ to be a subspace describing long wavelength modes in quantum field theory on a fixed background and the entanglement wedge to be the classical minimal area surface corresponding to a boundary subregion.  
To order $G_N^0$, the bulk dual of entanglement entropy is given by the bulk entanglement entropy of the entanglement wedge plus a local integral on the minimal area surface \cite{FLM}. This was used to relate the bulk and boundary modular hamiltonians \cite{Jafferis:2015del}. Since the bulk and boundary modular hamiltonians only differ by operators localized on the minimal surface, the bulk and boundary relative entropies are equivalent up to $\calo(G_N)$ corrections \cite{Jafferis:2015del}. The bulk dual of relative entropy beyond order $G_N^0$ involves bulk modular hamiltonians evaluated with respect to different bulk surfaces \cite{Dong2017}.\footnote{It will be interesting to generalize equation (5.4) in \cite{Dong2017} to an expression that uses infinite-dimensional von Neumann algebras.} Since the formula for the bulk dual of relative entropy in Theorem \ref{thm:maintheorem} is only valid to order $G_N^0$, the two main statements in Theorem \ref{thm:maintheorem} can only be true in quantum gravity in an approximate sense. Theorem 4 of \cite{Cotler} proves that in the case of finite-dimensional von Neumann algebras, the approximate equivalence of bulk and boundary relative entropies implies approximate bulk reconstruction. Furthermore, \cite{alphabits} proves that entanglement wedge reconstruction can be exact to all orders in perturbation theory.\footnote{However,  in certain contexts, the entanglement wedge reconstruction  proposal  must be  nonperturbatively  approximate (see \cite{Kelly,alphabits}).}

It is possible for both statements in Theorem \ref{thm:maintheorem} to be exactly true in the limit $G_N \rightarrow 0$. In this case, the AdS/CFT duality relates a $(d+1)$-dimensional quantum field theory in AdS and a $d$-dimensional generalized free field theory, for which all connected $n$-point correlation functions vanish when $n \ge 3$.\footnote{The fact that all correlation functions may be expressed in terms of two-point functions arises from large-N factorization in the boundary CFT.} We may set $\calh_{code} = \calh_{phys}$ because every state in the boundary theory has a geometric dual. The case where the bulk theory is a free scalar is studied in \cite{DuetschRehren}. The authors of \cite{DuetschRehren} work in Poincar\'e coordinates, which has $d$-dimensional Minkowski space as its conformal boundary. They argue that in the boundary generalized free field theory, the algebra associated with the domain of dependence of any ball-shaped region in a spatial slice of Minkowski space is equal to the algebra associated with the causal wedge in the bulk.\footnote{This statement is also true for conformal transformations of such regions. For these boundary regions, the causal wedge is the same as the entanglement wedge \cite{Rangamani}.} This statement is expressed in equation (5.7) of \cite{DuetschRehren}. This implies that $M_{code}$ and $M_{phys}$ are isomorphic, i.e. $M_{code} = M_{phys}$, which means that the bulk and boundary relative entropies are equal.

\subsection{The Reeh--Schlieder theorem}
\label{sec:rs}

In the previous subsection, we explained how we use von Neumann algebras to approximately characterize bulk physics. Before we physically motivate the assumption in Theorem \ref{thm:maintheorem} that the set of cyclic and separating vectors with respect to $M_{code}$ is dense in $\calh_{code}$, we outline the conclusions of the Reeh--Schlieder theorem. Our discussion of the Reeh--Schlieder theorem follows the spirit of \cite{Witten:2018zxz}.

For the purposes of presenting the Reeh--Schlieder theorem, we restrict ourselves to quantum field theory in $d$-dimensional Minkowski space. Let $P^\mu$ be the energy-momentum operator. Each component of $P^\mu$ is a self-adjoint operator with its own set of spectral projections. Let $S_\Lambda$ be the subset of momentum space defined by 
$$S_\Lambda = \{p^\mu : |p^\mu| < \Lambda \quad \forall \mu \in \{ 0,1,\cdots, d-1 \} \}$$ 
for some cutoff energy $\Lambda$. Using the spectral projections of each $P^\mu$, we may construct a projection operator $\Pi_{S_\Lambda}$ that projects onto the subspace of states with energy-momentum in $S_\Lambda$. As $P^\mu$ is defined by smearing the local operator $T^{0\mu}$ (where $T^{\mu \nu}$ is the stress tensor) over an entire spatial slice,\footnote{Technically, a spatial slice is not an open subregion of spacetime.} $\Pi_{S_\Lambda}$ leaves each superselection sector invariant. Furthermore, for every $\ket{\Psi} \in \calh$,
$$ \lim_{\Lambda \rightarrow \infty} \Pi_{S_\Lambda} \ket{\Psi} = \ket{\Psi}.$$
Thus, the set of states of bounded energy-momentum in a given superselection sector is dense in that superselection sector.

The Reeh--Schlieder theorem may be applied to states of bounded energy-momentum. Let $\ket{\Xi}$ denote such a state. Let $\Sigma$ denote a spatial slice. Given an open proper subregion $\calv \subset \Sigma$, let $\calu_\calv$ be a small neighborhood in spacetime containing $\calv$. The Reeh--Schlieder theorem tells us that the closure of the set of states obtained by acting on $\ket{\Xi}$ with operators in the algebra $\cala(\calu_\calv)$ is equal to the closure of the set of states obtained by acting on $\ket{\Xi}$ with all local operators, which is the superselection sector of $\ket{\Xi}$.

Let us restrict our attention to a single superselection sector. Then $\ket{\Xi}$ is cyclic with respect to $\cala(\calu_\calv)$ and $M(\calu_\calv)$. Since $\calv$ is a proper subregion of $\Sigma$, the Reeh--Schlieder theorem may also be applied to the subregion $\calu_{\calv^\prime}$, where $\calv^\prime$ is the complement of the closure of $\calv$ in $\Sigma$. The result is that $\ket{\Xi}$ is also separating with respect to $M(\calu_\calv)$ \cite{Witten:2018zxz}. Thus, in quantum field theory in Minkowski space restricted to a single superselection sector, the fact that the set of states of bounded energy-momentum is dense implies that the set of cyclic and separating vectors with respect to $M(\calu_\calv)$ is dense.

\subsection{Physical motivation for the assumptions of Theorem \ref{thm:maintheorem}}
\label{sec:rsapplied}

We now use the Reeh--Schlieder theorem to understand the assumptions in Theorem \ref{thm:maintheorem} in a physical context. Without loss of generality, we assume that the bulk-to-boundary isometry $u$ in Theorem \ref{thm:maintheorem} maps $\calh_{code}$ into a single superselection sector of $\calh_{phys}$. That is, the code subspace lies within a single superselection sector. If this is not the case, then we can decompose $\calh_{code}$ into orthogonal subspaces that each are mapped into different superselection sectors of the boundary theory, and we can study Theorem \ref{thm:maintheorem} separately for each orthogonal subspace.

In Theorem \ref{thm:maintheorem}, we assume that the set of cyclic and separating vectors with respect to $M_{code}$ is dense in $\calh_{code}$. If the bulk theory was quantum field theory in Minkowski space, then the discussion in Section \ref{sec:rs} directly applies. However, the discussion in Section \ref{sec:rs} does not directly imply this because the bulk theory is only approximately described by quantum field theory and the background spacetime is asymptotically AdS. In \cite{Morrison}, a version of the Reeh--Schlieder theorem is proved for free scalar fields in global AdS. The theorem is valid for the vacuum state of the field quantized in global AdS, the vacuum state of the field quantized in any causal wedge, and finite-energy excitations of these vacua. If we choose to ignore the gravitational backreaction in the bulk and take $\calh_{code}$ to consist of finite-energy excitations of the global AdS vacuum, the results of \cite{Morrison} suggest to us that it is plausible that the set of cyclic and separating vectors with respect to $M_{code}$, where $M_{code}$ is associated with an entanglement wedge, is dense in the bulk vacuum superselection sector $\calh_0$. If $\calh_0$ is a proper subset of $\calh_{code}$, we should redefine $\calh_{code}$ to be $\calh_0$ for Theorem \ref{thm:maintheorem} to apply.

It would be interesting to investigate the plausibility of the assumption that the set of cyclic and separating states with respect to $M_{code}$ is dense in $\calh_{code}$ when $\calh_{code}$ contains black hole microstates. For a sufficiently large boundary subregion, the entanglement wedge of $M_{code}$ will contain the black hole, and the operators in $M_{code}$ correspond to local operators associated with the field degrees of freedom outside of the black hole as well as operators that act on the black hole microstates, whose description involves quantum gravity at the Planck scale. In quantum field theory, it is possible to generate the whole Hilbert space by acting on the vacuum with operators in a small subregion because the vacuum is highly entangled. It would be interesting to understand how the presence of a black hole changes the structure of spacetime entanglement outside the horizon. Holographic tensor network models suggest that entanglement wedge reconstruction is possible in the presence of a black hole \cite{Harlow:2018fse}; operators outside the black hole can in fact be ``pushed through'' the black hole tensor \cite{alphabits}. However, tensor network models of holography involve finite dimensional Hilbert spaces and thus cannot capture the pattern of entanglement that makes the Reeh--Schlieder theorem work.


Finally, we address the assumption in Theorem \ref{thm:maintheorem} that for all states $\ket{\Psi} \in \calh_{code}$ that are cyclic and separating with respect to $M_{code}$, $u \ket{\Psi}$ is cyclic and separating with respect to $M_{phys}$. In \cite{Morrison}, the Reeh--Schlieder theorem holds for the vacuum of global AdS, implying that the vacuum is cyclic and separating with respect to the local operator algebra associated with a bulk subregion. The image of the bulk vacuum state under the bulk-to-boundary isometry is the boundary vacuum state, which is cyclic and separating with respect to the local operator algebras associated with boundary subregions. Likewise, finite-energy excited states in the bulk map to states in the boundary CFT of bounded energy-momentum, which are also cyclic and separating. This supports the assumption of Theorem \ref{thm:maintheorem} that the cyclic and separating states with respect to $M_{code}$ map to the cyclic and separating states with respect to $M_{phys}$.

\subsection{von Neumann algebra with type III$_1$ factors as a special case}
\label{sec:connesstormer}

Our main physical justification of the assumption that cyclic and separating states with respect to $M_{code}$ are dense in $\calh_{code}$ is the fact that the Reeh--Schlieder theorem applies to states of bounded energy-momentum, which are dense in the Hilbert space. In a generic local quantum field theory, the von Neumann algebra of a type III$_1$ factor\footnote{The definition of a type III$_1$ factor is given in \cite{TensorNetwork}.} is associated with a causal subregion of the spacetime. When $M_{code}$ and $M_{code}^\prime$ are type III$_1$ factors, the assumption of Theorem \ref{thm:maintheorem} that cyclic and separating states with respect to $M_{code}$ are dense in $\calh_{code}$ also follows from a result of Connes--St{\o}rmer, which is presented below. 
\begin{thm}[Connes--St{\o}rmer \cite{ConnesStormer}]
	A factor $M$ is of type III$_1$ if and only if the action of its unitary group on its state space by inner automorphisms is topologically transitive in the norm topology. 		
\end{thm}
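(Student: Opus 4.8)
The plan is to prove the two implications separately. The forward direction---if $M$ is of type III$_{1}$, then the action of $\mathcal{U}(M)$ on the normal state space $S(M)$ by inner automorphisms $\varphi\mapsto\varphi\circ\mathrm{Ad}(u)$ is topologically transitive in the norm topology---is the substantive one; it is exactly the homogeneity theorem whose proof is the content of \cite{ConnesStormer}, so what I describe here is really the architecture of that argument. The reverse direction is comparatively soft: I would argue by contraposition, showing that if $M$ is a nontrivial factor that is \emph{not} of type III$_{1}$, then there is a norm-continuous $\mathrm{Ad}(\mathcal{U}(M))$-invariant function on $S(M)$ taking at least two values, so the orbit of every normal state misses a nonempty norm-open set.

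For the reverse direction, when $M$ is semifinite with trace $\tau$ the invariant is the $\tau$-spectral distribution of the density. Normal states are parametrized isometrically by positive $h\in L^{1}(M,\tau)$ with $\tau(h)=1$ via $\varphi_{h}(x)=\tau(hx)$, with $\|\varphi_{h}-\varphi_{k}\|=\|h-k\|_{1,\tau}$; conjugation acts by $\varphi_{h}\circ\mathrm{Ad}(u)=\varphi_{u^{\dagger}hu}$, and $u^{\dagger}hu$ has the same generalized $s$-number function $\mu_{h}$ as $h$. Since $h\mapsto\mu_{h}$ is norm-continuous, the orbit of $\varphi_{h}$ is confined to the proper norm-closed set $\{\varphi_{k}:\mu_{k}=\mu_{h}\}$. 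When $M$ is of type III$_{\lambda}$ with $0\le\lambda<1$, the obstruction instead comes from the Connes--Takesaki structure theory: the flow of weights (when $\lambda=0$) and the periodicity $2\pi/|\log\lambda|$ of the modular flow (when $0<\lambda<1$) are untouched by inner automorphisms, and one separates states with ``lacunary'' modular data from states with densely filled modular data---both of which occur on such a factor---using norm-robust functionals of the shape $\varphi\mapsto\langle\xi_{\varphi},f(\log\Delta_{\varphi})\xi_{\varphi}\rangle$ for a suitable bump $f$. This is where I would have to be most careful, since $\varphi\mapsto\mathrm{Sp}(\Delta_{\varphi})$ is only upper semicontinuous in norm; the remedy is to phrase the obstruction through those continuous functionals rather than through spectra directly, and I would lean on Connes' work on the $S$-invariant and on \cite{ConnesStormer} for the exact statements.

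For the forward direction I would first pass to the standard form. Realizing $M$ on $\mathcal{H}$ with natural positive cone $\mathcal{P}$ and modular conjugation $J$, uniqueness of vector representatives in $\mathcal{P}$ together with the Powers--St{\o}rmer inequality pinches $\|\varphi-\psi\|$ between $\|\xi_{\varphi}-\xi_{\psi}\|^{2}$ and $\|\xi_{\varphi}-\xi_{\psi}\|\,\|\xi_{\varphi}+\xi_{\psi}\|$, and the cone representative of $\varphi\circ\mathrm{Ad}(u^{\dagger})$ is $uJuJ\,\xi_{\varphi}$. As faithful normal states are norm-dense in $S(M)$, it suffices to show that for one faithful normal $\varphi$ with cone vector $\xi$, the orbit $\{uJuJ\,\xi:u\in\mathcal{U}(M)\}$ is norm-dense in the unit sphere of $\mathcal{P}$. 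The engine is Connes' characterization of type III$_{1}$, which says $\mathrm{Sp}(\Delta_{\varphi})=[0,\infty)$ for every faithful normal $\varphi$, or equivalently: for every faithful normal $\varphi$, every $\varepsilon>0$ and every $\mu>0$ there is a unitary $u\in M$ with $\|u\varphi-\mu\,\varphi u\|<\varepsilon$---an approximate eigenrelation for $\Delta_{\varphi}$ at every scale, realized \emph{inside} $M$. Using this I would (i) approximate the target faithful normal $\psi$ in norm by a state $\varphi(a\,\cdot\,a)$ with $a\in M_{+}$ invertible, possible because $\xi_{\varphi}$ is cyclic; (ii) after replacing the spectral projections of $a$ by projections in the centralizer $M^{\varphi}$, which is diffuse since $M$ is III$_{1}$, use the approximate eigenrelation together with a $2\times2$ matrix amplification argument in the style of Connes to construct a single unitary of $M$ sending $\varphi$ to the required convex combination $\sum_{j}\lambda_{j}\,\varphi(e_{j}\,\cdot\,e_{j})$ up to $\varepsilon$; and (iii) assemble the pieces and pass to the limit, controlling the accumulation of errors with the Powers--St{\o}rmer estimate.

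The main obstacle is step (ii): converting the spectral fullness of $\Delta_{\varphi}$---an intrinsically ``$\mathcal{B}(\mathcal{H})$-level'' fact, as $\Delta_{\varphi}\notin M$---into the existence of genuinely \emph{inner} unitaries of $M$ that reshape $\varphi$ into an arbitrary prescribed state. Innerness is the whole game: any two faithful normal states of a type III factor are spatially unitarily equivalent, but the equivalence has to be implemented by a unitary in $M$, and doing so requires interleaving the modular unitaries $\Delta_{\varphi}^{it}$ (which are only approximately inner here) with partial isometries living in spectral subspaces of $\Delta_{\varphi}$ cut down by centralizer projections, and then bootstrapping the finite approximations---the delicate combinatorial-analytic heart of \cite{ConnesStormer}, and the circle of ideas that later matured into the Connes bicentralizer problem. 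Since the excerpt needs only the statement, in a full write-up I would either reproduce this argument or simply cite \cite{ConnesStormer}.
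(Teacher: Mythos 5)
First, a point of orientation: the paper does not prove this statement at all. It is quoted as an external result, attributed to \cite{ConnesStormer}, and used only in Section 6.5 to argue that cyclic and separating vectors are dense when $M_{code}$ is a type III$_1$ factor. So there is no internal proof to compare your attempt against, and your closing fallback of ``simply cite \cite{ConnesStormer}'' is in fact exactly what the paper does. Taken as an annotated citation, your outline is a fair description of the known architecture: standard form and the natural cone, the Powers--St{\o}rmer inequality to pass between state norms and vector norms, Connes' characterization of III$_1$ via $\mathrm{Sp}(\Delta_{\varphi})=[0,\infty)$ for every faithful normal state, and, for the converse, the reduction to exhibiting a norm-continuous conjugation-invariant on the normal state space (the semifinite case via generalized $s$-numbers is essentially complete as you present it).

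If the outline is instead read as a proof sketch to be filled in, two steps are genuinely gapped. In step (ii) you assert that the centralizer $M^{\varphi}$ is diffuse ``since $M$ is III$_1$''; that is not a theorem. For an arbitrary faithful normal state on a type III$_1$ factor the centralizer can be very small (possibly trivial), and the existence of states with large centralizers is precisely the delicate territory of the bicentralizer problem you yourself invoke, so the construction cannot lean on diffuseness of $M^{\varphi}$ for a generic $\varphi$, and the actual Connes--St{\o}rmer argument does not proceed this way. Second, for the converse in the type III$_{\lambda}$ cases with $0 \le \lambda < 1$ you only gesture at a ``norm-robust functional'' built from bump functions of $\log\Delta_{\varphi}$: since $\varphi \mapsto \mathrm{Sp}(\Delta_{\varphi})$ is not norm-continuous, the existence of an invariant that is constant on unitary orbits, norm-continuous, and genuinely takes two values on such factors is exactly what needs to be proved, and as written it is asserted rather than established (the standard route is through Connes' $S$- and $T$-invariants and the flow of weights). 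A final small caveat: as stated the equivalence requires $M$ nontrivial, since for $M=\mathbb{C}$ the state space is a single point and the action is vacuously transitive. None of this affects the paper, which imports the theorem wholesale; it only means your proposal is an outline of \cite{ConnesStormer} rather than a self-contained proof.
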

Let $\ket{\Psi}$ be a cyclic and separating vector with respect to $M$. The above theorem implies that the set of vectors that can be written as $U U^\prime \ket{\Psi}$, where $U \in M$ and $U^\prime \in M^\prime$ are both unitary operators, is dense in $\calh$. Given that $\ket{\Psi}$ is cyclic and separating with respect to $M$, $U U^\prime \ket{\Psi}$ is also cyclic and separating. The existence of one cyclic and separating vector $\ket{\Psi}$ in Theorem \ref{thm:maintheorem} guarantees, for a factor of type III$_1$, that the set of cyclic and separating vectors with respect to $M_{code}$ is dense in $\calh_{code}$.

\subsection{Finite-dimensional quantum error correction}
\label{sec:implications}

In this section, we explain Theorem \ref{thm:maintheorem} in the context of previous work on finite-dimensional error correction \cite{Harlow:2016fse,Pastawski:2015qua,DongHarlowWall}. First, we interpret the assumption that cyclic and separating vectors with respect to $M_{code}$ map to cyclic and separating vectors with respect to $M_{phys}$ in the case that $\calh_{code}$ and $\calh_{phys}$ are finite dimensional. 
 As discussed in \cite{Harlow:2016fse}, a finite dimensional $M_{code}$ induces a decomposition of the code subspace,
\begin{equation}
\calh_{code} = \oplus_\alpha \calh_{a_\alpha} \otimes \calh_{\bar{a}_\alpha},
\end{equation}
such that any $\calo \in M_{code}$ may be written in block-diagonal form:
\begin{equation}
\calo = \left(\begin{array}{ccc}
\calo_{a_1} \otimes I_{\bar{a}_1} & 0 & \cdots \\ 
0 & \calo_{a_2} \otimes I_{\bar{a}_2} & \cdots \\ 
\vdots & \vdots & \ddots
\end{array} \right).
\label{eq:blockdiagonal}
\end{equation}
In the setup of \cite{Harlow:2016fse}, $\calh_{phys}$ may be written in the factorized form $\calh_{phys} = \calh_A \otimes \calh_{\bar{A}}$ where each factor corresponds to a boundary subregion and its complement. Let $M_{phys}$ induce the factorization $\calh_{phys} = \calh_{A} \otimes \calh_{\bar{A}}$ such that operators in $M_{phys}$ act trivially on $\calh_{\bar{A}}$. As \cite{Harlow:2016fse} points out, subalgebra codes with complementary recovery are especially relevant for AdS/CFT as they display a Ryu--Takayanagi formula with a nontrivial area operator. For such codes, an orthonormal basis of $\calh_{a_\alpha} \otimes \calh_{\bar{a}_\alpha}$ may be written as
\begin{equation}
\label{eq:finitedimcode}
u\ket{\alpha,ij}_{\text{code}} = U_A U_{\bar{A}} \left(\ket{\alpha,i}_{A_1^\alpha} \ket{\alpha,j}_{\bar{A}_1^\alpha} \ket{\chi_\alpha}_{A_2^\alpha \bar{A}_2^\alpha}\right),
\end{equation}
for a decomposition of $\calh_A$ given by 
\begin{align}
\calh_A = \oplus_\alpha (\calh_{A_1^\alpha} \otimes \calh_{A_2^\alpha}) \oplus \calh_{A_3},
\end{align}
and similarly for $\calh_{\bar{A}}$. Also, 
$$\dim \calh_{A_1^\alpha} = \dim \calh_{a_\alpha} \text{ and } \dim \calh_{\bar{A}_1^\alpha} = \dim \calh_{\bar{a}_\alpha}.$$
For each $\alpha$, $i$ and $j$ are indices that denote basis vectors in $\calh_{a_\alpha}$ and $\calh_{\bar{a}_\alpha}$ respectively. We have explicitly included $u$, the isometry from the code subspace to the physical Hilbert space. $U_A,U_{\bar{A}}$ are unitary matrices that act on $\calh_A,\calh_{\bar{A}}$, and $\ket{\chi}_{A_2^\alpha \bar{A}_2^\alpha}$ is a state that depends on the specific code under consideration. It is important that in the state $\ket{\chi}_{A_2^\alpha \bar{A}_2^\alpha}$, subsystems $A_2^\alpha$ and $\bar{A}_2^\alpha$ are entangled. If $\ket{\chi}_{A_2^\alpha \bar{A}_2^\alpha}$ were a factorized state for every $\alpha$, then it would not be possible to express $\ket{\alpha,ij}_{\text{code}}$ as in \eqref{eq:finitedimcode} for arbitrary choices of the factorization $\calh_{phys} = \calh_{A} \otimes \calh_{\bar{A}}$. That is, the code would not be useful for studying bulk reconstruction for arbitrary choices of boundary subregions. Furthermore, equation (5.26) of \cite{Harlow:2016fse} would imply that the area operator vanishes.

We now discuss the implications of Theorem \ref{thm:maintheorem} for the state $\ket{\chi_\alpha}_{A_2^\alpha \bar{A}_2^\alpha}$. Let us assume that $\dim \calh_A = \dim \calh_{\bar{A}}$ and that for every $\alpha$, $\dim \calh_{a_\alpha} = \dim \calh_{\bar{a}_\alpha}$. Otherwise, there do not exist any cyclic and separating vectors with respect to $M_{code}$ or $M_{phys}$. A vector in $\calh_{phys}$ is cyclic and separating with respect to $M_{phys}$ if and only if it has maximal Schmidt number with respect to $\calh_{phys} = \calh_{A} \otimes \calh_{\bar{A}}$. The assumption that cyclic and separating vectors with respect to $M_{code}$ map to cyclic and separating vectors with respect to $M_{phys}$ implies that  $\ket{\chi}_{A_2^\alpha \bar{A}_2^\alpha}$ must have maximal Schmidt number with respect to the factorization $\calh_{A_2^\alpha} \otimes \calh_{\bar{A}_2^\alpha}$ and that $\dim \calh_{A_2^\alpha} = \dim \calh_{\bar{A}_2^\alpha}$.  To see why, note that a cyclic and separating vector $\ket{\Phi} \in \calh_{code}$ with respect to $M_{code}$ may be written as \begin{equation}\label{eq:Phi}\ket{\Phi} = \sum_{\alpha,i,j} c^\alpha_{ij} \ket{\alpha,ij}_{\text{code}},\end{equation} where $c^\alpha_{ij}$ is a full-rank square matrix for each $\alpha$. Using equation \eqref{eq:finitedimcode} to map $\ket{\Phi}$ to $u\ket{\Phi} \in \calh_{phys}$, we see that if $\ket{\chi_{\hat{\alpha}}}_{A_2^{\hat{\alpha}} \bar{A}_2^{\hat{\alpha}}}$ does not have maximal Schmidt number for some $\hat{\alpha}$, then we can annihilate $u\ket{\Phi}$ with an operator that, up to conjugation by $U_A$, acts as the identity on $\calh_{\bar{A}}$, annihilates $\calh_{A_3}$, annihilates $\calh_{A_1^\alpha} \otimes \calh_{A_2^\alpha}$ for $\alpha \neq \hat{\alpha}$, and acts nontrivially on $\calh_{A_1^\alpha} \otimes \calh_{A_2^\alpha}$. This implies that $u\ket{\Phi}$ is not separating with respect to $M_{phys}$, which contradicts the assumption. Another consequence of the assumption is that $\calh_{A_3}$ and $\calh_{\bar{A}_3}$ must be trivial. Previous work on finite-dimensional error correction \cite{Harlow:2016fse,Pastawski:2015qua} has highlighted the crucial role of entanglement in bulk reconstruction. We have shown that the Reeh--Schlieder theorem suggests that cyclic and separating vectors with respect to $M_{code}$ are mapped via the bulk-to-boundary isometry to vectors that are cyclic and separating with respect to $M_{phys}$. In the context of finite-dimensional subalgebra codes, this implies that the area term in the Ryu--Takayangi formula cannot vanish. 

Our proof of entanglement wedge reconstruction in Theorem \ref{thm:maintheorem} is constructive. Given a bulk operator $\calo \in M_{code}$, equation \eqref{eq:reconstructionformula} provides an explicit formula for a boundary operator $\tilde{\calo} \in M_{phys}$. In order to understand our formula in the finite dimensional case, we use the decomposition $\calh_A = \oplus_\alpha (\calh_{A_1^\alpha} \otimes \calh_{A_2^\alpha})$ (and similarly for $\calh_{\bar{A}}$) and let $\ket{\Phi}$ (defined in equation \eqref{eq:Phi}) be our fiducial state. The action of $\calo \in M_{code}$ on a code subspace basis vector is
\begin{equation}
\calo \ket{\Phi} = \sum_{\alpha,i,\hat{i},j} c^\alpha_{ij} \braket{\hat{i}|\calo_{a_\alpha}|i} \ket{\alpha,\hat{i} j}_{\text{code}},
\end{equation}
where $\calo_{a_\alpha}$ is defined in equation \eqref{eq:blockdiagonal}. By equation \eqref{eq:finitedimcode} we then have that
\begin{equation}
u\ket{\Phi} = \sum_{\alpha,i,j} c^\alpha_{ij}
U_A U_{\bar{A}}\left(\ket{\alpha,i}_{A_1^\alpha} \ket{\alpha,j}_{\bar{A}_1^\alpha} \ket{\chi_\alpha}_{A_2^\alpha \bar{A}_2^\alpha}\right),
\end{equation}
\begin{equation}
u \calo \ket{\Phi} = \sum_{\alpha,i,\hat{i},j} c^\alpha_{ij} \braket{\hat{i}|\calo_{a_\alpha}|i}
U_A U_{\bar{A}} \left(\ket{\alpha,\hat{i}}_{A_1^\alpha} \ket{\alpha,j}_{\bar{A}_1^\alpha} \ket{\chi_\alpha}_{A_2^\alpha \bar{A}_2^\alpha}\right).
\end{equation}
Equation \eqref{eq:reconstructionformula} then defines $\tilde{\calo} \in M_{phys}$ by
\begin{align}
\begin{split}
& \tilde{\calo} \ \calp^\prime U_A U_{\bar{A}} 
\sum_{\alpha,i,j} c^\alpha_{ij}
\left(\ket{\alpha,i}_{A_1^\alpha} \ket{\alpha,j}_{\bar{A}_1^\alpha} \ket{\chi_\alpha}_{A_2^\alpha \bar{A}_2^\alpha}\right) \\
&\quad\quad\quad\quad\quad\quad := \sum_{\alpha,i,\hat{i},j} c^\alpha_{ij} \braket{\hat{i}|\calo_{a_\alpha}|i}
\calp^\prime U_A U_{\bar{A}} \left(\ket{\alpha,\hat{i}}_{A_1^\alpha} \ket{\alpha,j}_{\bar{A}_1^\alpha} \ket{\chi_\alpha}_{A_2^\alpha \bar{A}_2^\alpha}\right),
\end{split}
\end{align}
where $\calp^\prime \in M_{phys}$ can be any operator that acts as the identity on $\calh_A$. With a suitable choice of $\calp^\prime$, we may show that for any $\alpha,i,j$,
\begin{equation}
\tilde{\calo}\ U_A U_{\bar{A}} 
\left(\ket{\alpha,i}_{A_1^\alpha} \ket{\alpha,j}_{\bar{A}_1^\alpha} \ket{\chi_\alpha}_{A_2^\alpha \bar{A}_2^\alpha}\right) = 
 U_A U_{\bar{A}} \left(\sum_{\hat{i}}  \braket{\hat{i}|\calo_{a_\alpha}|i}\ket{\alpha,\hat{i}}_{A_1^\alpha} \ket{\alpha,j}_{\bar{A}_1^\alpha} \ket{\chi_\alpha}_{A_2^\alpha \bar{A}_2^\alpha}\right).
\end{equation}
Thus, Theorem \ref{thm:maintheorem} along with the reconstruction formula in equation \eqref{eq:reconstructionformula} is an appropriate infinite-dimensional generalization of the finite-dimensional subalgebra codes with complementary recovery studied in \cite{Harlow:2016fse}.

\subsection{Outlook for holographic relative entropy}

The entanglement wedge reconstruction proposal is an example of bulk reconstruction. It asserts that for holographic theories, local operators in the entanglement wedge of a boundary subregion $A$ can be written in terms of CFT operators localized on $A$ \cite{Harlow:2018fse,DongHarlowWall,Jafferis:2015del}. Assuming that the operators in $M_{code}$ and $M_{code}^\prime$ in Theorem \ref{thm:maintheorem} lie respectively in an entanglement wedge and its complement, Theorem \ref{thm:maintheorem} establishes entanglement wedge reconstruction from the equivalence of bulk and boundary relative entropies and vice versa. Thus, it has been suggested that the entanglement wedge is ``dual'' to its corresponding boundary subregion \cite{Jafferis:2015del}. Another interesting result of \cite{Jafferis:2015del} is that bulk modular flow is dual to boundary modular flow, which we have captured in equation \eqref{eq:modularoperators}. The bulk and boundary modular operators act on the code subspace in the same way.

Quantum error correction in finite dimensional Hilbert spaces has been crucially used to argue for the entanglement wedge reconstruction proposal \cite{DongHarlowWall,Harlow:2016fse}. When $\calh_{code}$ and $\calh_{phys}$ are finite-dimensional, Theorem \ref{thm:maintheorem} has parallels to Theorem 1.1 of \cite{Harlow:2016fse}. In Theorem \ref{thm:maintheorem}, we assume that cyclic and separating vectors with respect to $M_{code}$ are dense in $\calh_{code}$, which is essentially a bulk version of the Reeh--Schlieder Theorem \cite{Morrison}. We also assume that cyclic and separating states with respect to $M_{code}$ map to cyclic and separating states with respect to $M_{phys}$, the algebra corresponding to a boundary subregion. These assumptions guarantee that the subalgebra codes studied in \cite{Harlow:2016fse} have a nonzero area operator. \cite{Harlow:2016fse} defines relative entropy in the boundary theory as $S(\rho,\sigma) = \text{Tr } \rho(\log \rho - \log \sigma)$. The definition of relative entropy we use in the bulk and boundary is appropriate for infinite-dimensional Hilbert spaces and reduces to the aforementioned formula in the finite-dimensional case \cite{Araki}. Thus, we have shown that the relative entropy formula in \cite{Araki} naturally describes the holographic relative entropy in quantum field theory to order $G_N^0$.

\section*{Acknowledgments}
The authors are grateful to Daniel Harlow, Temple He, Sungkyung Kang, and Kai Xu for discussions. M.J.K. would like to thank Simons workshop 2018, Strings 2018, and String-math 2018, Virginia Tech, University of Pennsylvania, Simons Center of Geometry and Physics, and Caltech for their hospitality. D.K. would like to thank PITP 2018. M.J.K. and D.K. would like to acknowledge a partial support from NSF grant PHY-1352084.

\end{document}